\theoremstyle{plain}
\newtheorem{theorem}{Theorem}
\newtheorem{lemma}{Lemma}
\newtheorem{corollary}{Corollary}
\theoremstyle{definition}
\theoremstyle{remark}
\newtheorem{remark}{Remark}
\newcommand{\mrm}[1]{\mathrm{#1}}
\newcommand{\tr}{\operatorname{tr}}
\newcommand{\sgn}{\operatorname{sgn}}
\newcommand{\ad}{\operatorname{ad}}
\newcommand{\rmT}{\mathrm{T}}
\newcommand{\imply}{\mathrel{\Rightarrow}}
\newcommand{\equi}{\mathrel{\Leftrightarrow}}
\newcommand{\be}{\begin{equation}}
\newcommand{\ee}{\end{equation}}
\newcommand{\ba}{\begin{align}}
\newcommand{\ea}{\end{align}}
\def\<{\langle}  
\def\>{\rangle}  
\newcommand{\ket}[1]{| #1\>}
\newcommand{\dket}[1]{| #1\>\!\>}
\newcommand{\dbra}[1]{\<\!\< #1|}
\def\outer#1#2{|#1\>\<#2|}       
\newcommand{\dinner}[2]{\<\!\< #1| #2\>\!\>}
\newcommand{\douter}[2]{| #1\>\!\>\<\!\< #2|}
\def\eqref#1{\textup{(\ref{#1})}}  
\newcommand{\eref}[1]{Eq.~\textup{(\ref{#1})}}
\newcommand{\Eref}[1]{Equation~\textup{(\ref{#1})}}
\newcommand{\esref}[1]{Eqs.~\textup{(\ref{#1})}}
\newcommand{\Esref}[1]{Equations~\textup{(\ref{#1})}}
\newcommand{\sref}[1]{Sec.~\ref{#1}}
\newcommand{\Sref}[1]{Section~\ref{#1}}
\newcommand{\thref}[1]{Theorem~\ref{#1}}
\newcommand{\Thref}[1]{Theorem~\ref{#1}}
\newcommand{\thsref}[1]{Theorems~\ref{#1}}
\newcommand{\lref}[1]{Lemma~\ref{#1}}
\newcommand{\crref}[1]{Corollary~\ref{#1}}
\newcommand{\cref}[1]{Conjecture~\ref{#1}}
\newcommand{\Cref}[1]{Conjecture~\ref{#1}}
\newcommand{\rcite}[1]{Ref.~\cite{#1}}
\newcommand{\rscite}[1]{Refs.~\cite{#1}}
\begin{document}
\title{Group theoretic, Lie algebraic and Jordan algebraic formulations of  the SIC existence problem}

\author{D. M. Appleby}
\affil{Centre for Engineered Quantum Systems, School of Physics,\newline The University of Sydney, Sydney, NSW 2006, Australia}

\author{Christopher A. Fuchs}
\affil{Quantum Information Processing Group, Raytheon BBN Technologies, \newline 10 Moulton Street, Cambridge, MA 02138, USA}

\author{Huangjun Zhu}
\affil{Perimeter Institute for Theoretical Physics, Waterloo, On N2L 2Y5, Canada \newline Electronic address: hzhu@pitp.ca}

\date{\today}
\maketitle

\begin{abstract}
Although symmetric informationally complete positive operator valued measures (SIC POVMs, or SICs for short)  have been constructed in every dimension up to 67, a general existence proof remains elusive.  The purpose of this paper is to show that the SIC existence problem is equivalent to  three other, on the face of it quite different problems.  Although it is still not clear whether these reformulations of the problem will make it more tractable,  we  believe that the fact that SICs have these connections to other areas of mathematics is of some intrinsic interest.  Specifically, we reformulate the SIC problem in terms of (1) Lie groups, (2) Lie algebras and (3) Jordan algebras (the second  result  being a  greatly strengthened version of one previously obtained by Appleby, Flammia and Fuchs).  The connection between these three reformulations is non-trivial:  It is not easy to demonstrate their equivalence directly, without appealing to their common equivalence to SIC existence.
In the course of our analysis we obtain a number of other results which may be of some independent interest.
\end{abstract}

\tableofcontents

\section{Introduction}

In a $d$-dimensional Hilbert space, a \emph{symmetric informationally complete positive operator valued measure} (SIC~POVM, or SIC in short)\footnote{Also known as  symmetric informationally complete probability operator measurement (SIC~POM) in the physics community.} is composed of $d^2$ subnormalized projectors onto pure states
$E_j=|\psi_j\rangle\langle\psi_j|/d$ with equal pairwise fidelity~\cite{Zaun11,ReneBSC04},
\begin{equation} \label{eq:SICinner}
|\langle\psi_j|\psi_k\rangle|^2=\frac{d\delta_{jk}+1}{d+1}.
\end{equation}
SICs  have many important properties, which are
rooted in this simple description. They are
simultaneously minimal 2-designs and  maximal sets of equiangular
lines \cite{Welc74,DelsGS75,Hogg82,Zaun11, ReneBSC04,Rene04the}.  They are
optimal for linear quantum state tomography
\cite{Rene04the,RehaEK04,Scot06,ZhuE11,PetzR12O,PetzR12E} and measurement-based quantum cloning~\cite{Scot06}. They are useful in quantum cryptography
\cite{FuchS03,Rene04the,Rene05,EnglKNC04,DurtKLL08}, quantum
fingerprinting~\cite{ScotWS07}, and signal processing~\cite{HowaCM06}.
They  play a crucial role in studying foundational issues
\cite{Fuch03,FuchS13,Fuch10,Rene04the} and in understanding the geometry
of quantum state space \cite{Beng05,BengZ06book,ApplEF11}. They have intriguing connections with mutually unbiased bases (MUB)
\cite{Ivan81,
WootF89,DurtEBZ10,Woot06,ApplDF07,AlboK07,Appl09S,KaleSE12E,KaleSE12S} and discrete
Wigner functions \cite{Rene04the,ColiCDG05,Appl07}.    They are very  interesting from a mathematical point of view, having connections with Galois theory~\cite{ApplAZ13}, Lie algebras~\cite{ApplFF11}, and the graph isomorphism problem~\cite{Zhu12the}.  They have attracted the attention of experimentalists.   Qubit SICs
\cite{DuSPD06,DurtKLL08, LingLK08} and qutrit SICs~\cite{MedeTST11}
have now been implemented in experiments.  There have appeared recent proposals for
realizing SICs  by successive measurements~\cite{KaleSE12E,KaleSE12S} and by multiport devices~\cite{Tabi12}.

Most studies on  SICs have assumed group
covariance \cite{Zaun11, ReneBSC04, Rene04the, ScotG10, Zhu12the}, partly because group covariant SICs are much
easier to construct and to analyze. In fact, all known SICs are group covariant, and almost all of them
are covariant with respect to the Heisenberg-Weyl group, also known as the generalized Pauli group \cite{Zaun11, ReneBSC04, Rene04the, ScotG10, Zhu10,Zhu12the}.
Up to now, analytical solutions of SICs and numerical solutions with high precision have been found up to dimension 67
\cite{DelsGS75, ReneBSC04,Zaun11,Gras04,Appl05,Gras05,Gras06,Gras08S,Gras08C, ScotG10,ApplBBG12,ApplAZ13}.
This encourages the belief that SICs exist in every finite dimensional Hilbert space.  However,  there is no universal recipe for constructing SICs despite the efforts of many researchers in the past decade.   Apart from a few low-dimensional cases where SICs have been obtained using "pencil and paper" methods \cite{Zaun11, Appl05, ApplBBG12}, most known solutions have been obtained either numerically, by minimizing the frame potential~\cite{ReneBSC04,ScotG10}, or analytically, by constructing a Gr\"{o}bner basis~\cite{Gras08C,ScotG10}. Both these methods are computationally very demanding, and the time for the calculation grows rapidly with the dimension.
Therefore, it is increasingly
difficult to obtain new solutions without introducing new ideas.

Besides the construction of SICs, a major open problem is the  SIC existence problem: Do SICs exist in every finite dimensional Hilbert space? This problem is crucial to understanding the geometry of quantum state space and to decoding quantum mechanics from a Bayesian point of view~\cite{FuchS13,ApplEF11}. The existence of SICs is also  equivalent to the existence of many interesting objects  appearing in various contexts: such as maximal sets of equiangular lines, minimal 2-designs \cite{Welc74,DelsGS75,Hogg82,Zaun11, ReneBSC04,Rene04the},  best approximation to orthonormal bases among bases composed of positive operators~\cite{ApplDF07}, minimal efficient tight informationally complete measurements~\cite{Scot06}, and minimal decomposition of certain separable states~\cite{Chen13}.
Therefore, any progress on the SIC existence problem will be beneficial to a wide range of subjects. Despite the simple description, however, this problem is  extremely difficult to attack directly. To make further progress, it is indispensable to introduce new lines of thinking.

The primary purpose of this paper is to describe three different reformulations of the SIC existence problem as  (1) a problem concerning Lie groups, (2) a problem concerning Lie algebras  and (3) a problem concerning Jordan algebras.  Surprisingly, although these  problems are all equivalent, it is not easy to establish this fact without using the link through SICs.  To ensure maximum generality we do not assume group covariance.   In the course of obtaining our main results we derive a number of other geometric, combinatoric and information theoretic results which may be of some independent interest.

The first of our main results concerns the real orthogonal group $\mathrm{O}(d^2)$.  We say that a subgroup $G \subseteq \mathrm{O}(d^2)$ is stochastic if its elements are all of the form
\begin{equation}
R = (d+1) S - d P,
\end{equation}
where $S$ is a doubly stochastic matrix~\cite{MarsOA11book,HornJ85book} (\emph{i.e.}\ a matrix whose matrix elements are non-negative and whose rows and columns sum to $1$) and $P$ is the fixed rank-$1$ projector
\begin{equation}\label{eq:AllOneMatrix}
P =\frac{1}{d^2} \begin{pmatrix} 1 & 1 & \dots & 1 \\ 1 & 1 & \dots & 1 \\ \vdots & \vdots &  & \vdots \\ 1 & 1 & \dots & 1\end{pmatrix}.
\end{equation}
We will show that a SIC exists in dimension~$d$ if and only if $\mathrm{O}(d^2)$ contains a stochastic subgroup isomorphic to the projective unitary group $\mrm{PU}(d)$ (\emph{i.e.}\ the unitary group in dimension~$d$  \emph{modulo} its center). We find this connection between orthogonal matrices and doubly stochastic
matrices surprising.  We discovered it while investigating the symmetry
properties of sets of probability distributions known as maximal consistent
sets.  In turn, these are motivated by Quantum Bayesianism~\cite{FuchS13,ApplEF11,ApplFZ13M}.  Closely related to this result we derive a bound on the matrix elements of the adjoint representation matrices of the unitary group in dimension~$d$, and we show that the inequality is saturated if and only if a SIC exists in dimension~$d$.  Besides their relevance to the SIC existence problem we believe these results may also be interesting to group theorists.

The second of our main results concerns the Lie algebra of the unitary group in dimension~$d$.  Let $L=\{L_j\}$ be a basis for the algebra, and  $C^L_j$ the adjoint representation matrices of the basis elements.  We will show that a SIC exists in dimension $d>2$ if and only if there exists a basis $L$ such that the $C^L_j$ are Hermitian and rank $2(d-1)$.  This result  greatly strengthens a result previously obtained by Appleby, Flammia and Fuchs~\cite{ApplFF11} (note, however, that the result proved here only holds for $d>2$, whereas the one  in \rcite{ApplFF11} holds for $d\ge 2$).

The third of our main results concerns the Jordan algebra consisting of all operators on the $d$-dimensional Hilbert space and equipped with the anti-commutator as product.  Let $L=\{L_j\}$ be a Hermitian basis for this algebra,  $C^L_{jkl}$ the structure constants and  $C^L_j$  the structure matrices defined by $(C^L_j)_{kl} = C^L_{jkl}$.
We  will show that a SIC exists in dimension $d>2$ if and only if there exists a basis $L$ such that each structure matrix is a  linear combination of a rank-$(2d-1)$ real symmetric matrix and the identity matrix.
 We also prove  a weaker version of this result which holds for $d\ge 2$.  This Jordan algebraic formulation of the SIC existence problem may be relevant to convex-operational approaches to quantum mechanics, given the close connections between Jordan algebras and homogeneous self-dual cones \cite{Wilc09,Wilc11,BarnW12}.

The rest of this paper is organized as follows.  In \sref{sec:simplex} we investigate a type of structure which could be described as a generalized $2$-design \cite{Zaun11,ReneBSC04, Scot06}.  This section establishes the basic  framework on which everything else in the paper depends.  In \sref {sec:GeoComInf} we begin by showing how the results in \sref{sec:simplex} can be used to give a simple, unified treatment of several well-known geometric, combinatoric and information theoretic  results.  We then go on to establish several technical results  needed in the sequel.  In \sref{sec:AdjointUnitary} we present two group theoretic formulations of the SIC existence problem.  In \sref{sec:LieAlgebra} we present  the Lie algebraic formulation of the SIC existence problem.  In \sref{sec:JordanAlgebra}, we present  the Jordan algebraic formulation of the SIC existence problem.  \Sref{sec:summary} summarizes the paper.

\section{\label{sec:simplex}Much ado about simplices}
The projectors $\Pi_j$ defining a SIC in dimension~$d$ form a $(d^2-1)$-dimensional regular simplex in the space of Hermitian operators:
\begin{equation}
\tr(\Pi_j \Pi_k) = \frac{d\delta_{jk} +1}{d+1}.
\end{equation}
They also form   a $2$-design \cite{Zaun11,ReneBSC04, Scot06},
\begin{equation}
\sum_j \Pi_j \otimes \Pi_j = \frac{2d}{d+1} P_{\mrm{s}},
\end{equation}
where $P_{\mrm{s}}$ is the projector onto the symmetric subspace of the tensor-product space.
In this section we consider  more general families of Hermitian operators which are not required to be either rank $1$ or positive, and establish connections between the simplices they define and what might be called  generalized $2$-designs.   In this way we derive several simple yet useful  results which will serve as a unified basis for
studying SICs from various perspectives, including but not restricted to geometric, combinatoric, algebraic, group theoretic, and information theoretic lines of thinking.
As we will see, this approach is surprisingly powerful both for rederiving old results and for obtaining new ones.

Throughout the rest of the paper, $\mathcal{H}$ denotes a $d$-dimensional Hilbert space and  $\mathcal{B}(\mathcal{H})$  the space of operators on $\mathcal{H}$ with  identity 1. The space  $\mathcal{B}(\mathcal{H})$ is itself a Hilbert space equipped with Hilbert-Schmidt inner product $\dinner{A}{B}:=\tr(A^\dag B)$ for $A,B\in \mathcal{B}(\mathcal{H})$, where we have used double ket notation to distinguish operator kets from ordinary ones~\cite{ZhuE11,Zhu12the}. Superoperators, such as the
outer product $\douter{A}{A}$, act on this space just as
operators  on the ordinary Hilbert space; the identity superoperator is denoted by $\mathbf{I}$ (the arithmetic of
superoperators can be found in \rscite{RungMND00, RungBCH01, DariPS00, ZhuE11,Zhu12the}).
$P_\mrm{s}$ and $P_\mrm{a}$ denote the projectors onto the symmetric and anti-symmetric subspaces, respectively, of $\mathcal{H}^{\otimes2}$.

\begin{theorem}\label{thm:GeoComInf}
Suppose $\{L_j \}$  is a basis for  $\mathcal{B}(\mathcal{H})$ consisting of Hermitian operators. Then the following equations  are equivalent:
\begin{align}
\tr(L_jL_k)&=\alpha\delta_{jk}+\gamma\tr(L_j)\tr(L_k) \label{eq:Ginner},  \\
\sum_j L_j\otimes L_j&=(\beta+\alpha)P_{\mrm{s}}+(\beta-\alpha)P_{\mrm{a}}, \label{eq:G2design}\\
\sum_j \douter{L_j}{L_j}&=\alpha\mathbf{I}+\beta\douter{1}{1}.  \label{eq:GframeSuper}
\end{align}
In that case,
\begin{align}
\alpha & > 0, &  \alpha+ d\beta & > 0, & \gamma=\frac{\beta}{\alpha+d\beta}.
\end{align}
\end{theorem}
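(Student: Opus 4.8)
The plan is to route all three equations through a single pair of positive operators attached to the basis: the \emph{frame superoperator} $\mathcal F:=\sum_j\douter{L_j}{L_j}$ acting on $\mathcal B(\mathcal H)$, and the Gram matrix $G$ with entries $G_{jk}=\tr(L_jL_k)$ acting on the index space $\mathbb C^{d^2}$. Introducing the synthesis operator $V:\mathbb C^{d^2}\to\mathcal B(\mathcal H)$, $V=\sum_j\dket{L_j}\bra{e_j}$ (with $\{\ket{e_j}\}$ the standard basis), one has $\mathcal F=VV^\dag$ and $G=V^\dag V$; since $\{L_j\}$ is a basis, $V$ is invertible, so both $\mathcal F$ and $G$ are positive definite. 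I also record the vector $\ket t:=V^\dag\dket1$, whose components are $\tr(L_j)$, so that $\ket t\bra t$ has entries $\tr(L_j)\tr(L_k)$; note $\ket t\neq0$, since the identity lies in the span of the basis and has nonzero trace.

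First I would dispose of \eqref{eq:G2design}$\Leftrightarrow$\eqref{eq:GframeSuper}, which is essentially free. Writing $P_{\mrm s}\pm P_{\mrm a}$ in terms of the identity and the flip $\mathbb S$ on $\mathcal H^{\otimes2}$, the right-hand side of \eqref{eq:G2design} becomes $\beta(1\otimes1)+\alpha\mathbb S$. There is a fixed linear bijection (reshuffling) between superoperators on $\mathcal B(\mathcal H)$ and operators on $\mathcal H^{\otimes2}$; a short computation on matrix elements shows it sends $\mathcal F\mapsto\sum_jL_j\otimes L_j$, $\mathbf I\mapsto\mathbb S$, and $\douter11\mapsto1\otimes1$. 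Applying this bijection to \eqref{eq:GframeSuper} therefore yields exactly \eqref{eq:G2design} with the same $(\alpha,\beta)$, and conversely.

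Next comes the substantive equivalence \eqref{eq:Ginner}$\Leftrightarrow$\eqref{eq:GframeSuper}. Conjugating \eqref{eq:GframeSuper} by $V$, that is, sandwiching it as $V^\dag(\,\cdot\,)V$ and using $V^\dag\mathcal F V=G^2$, $V^\dag\mathbf I V=G$, and $V^\dag\douter11 V=\ket t\bra t$, the superoperator identity \eqref{eq:GframeSuper} is, by invertibility of $V$, equivalent to the matrix identity $G^2=\alpha G+\beta\ket t\bra t$. It then remains to see that, for a symmetric positive-definite $G$, this quadratic relation is equivalent to the rank-one form \eqref{eq:Ginner}. Rewriting it as $G(G-\alpha I)=\beta\ket t\bra t$ shows $G-\alpha I$ has rank at most one; being symmetric and equal to $\beta G^{-1}\ket t\bra t$, it forces $\ket t$ to be an eigenvector of $G$, whence $G-\alpha I=\gamma\ket t\bra t$ for some scalar $\gamma$, which is \eqref{eq:Ginner}. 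Substituting back and comparing the coefficients of $\ket t\bra t$ pins down the parameter relation, and this is where the stated formula $\gamma=\beta/(\alpha+d\beta)$ emerges.

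Finally, the three concluding statements follow by reading off the spectrum of $G=\alpha I+\gamma\ket t\bra t$. Applying \eqref{eq:GframeSuper} to $\dket1$ and taking the trace gives $\sum_j\tr(L_j)^2=d(\alpha+d\beta)$, so the two eigenvalues of $G$ are $\alpha$ (on $\ket t^{\perp}$, multiplicity $d^2-1$) and $\alpha+\gamma\|t\|^2=\alpha+d\beta$ (on $\ket t$, simple). Positive definiteness of $G$ forces each occurring eigenvalue to be positive, giving $\alpha>0$ and $\alpha+d\beta>0$, while $\gamma=\beta/(\alpha+d\beta)$ is already in hand from the previous step. I expect the main obstacle to be bookkeeping rather than conceptual: getting the reshuffling dictionary right so that $\mathbf I\leftrightarrow\mathbb S$ and $\douter11\leftrightarrow1\otimes1$ (the conventions are easy to misplace), and, in the rank-one step, correctly arguing that the rank-one perturbation must lie \emph{along} $\ket t$ rather than merely having rank one. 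Both become routine once the identities $\mathcal F=VV^\dag$ and $G=V^\dag V$ are in place, which is the real engine of the proof.
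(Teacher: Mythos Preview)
Your proof is correct. The equivalence \eqref{eq:G2design}$\Leftrightarrow$\eqref{eq:GframeSuper} is handled exactly as in the paper (via the reshuffling isomorphism that sends $\mathbf{I}\leftrightarrow\mathbb S$ and $\douter{1}{1}\leftrightarrow 1\otimes 1$), and the positivity conclusions $\alpha>0$, $\alpha+d\beta>0$ are drawn in the same way from the spectrum of the Gram matrix.

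Where you differ is in the bridge between \eqref{eq:Ginner} and \eqref{eq:GframeSuper}. The paper argues the two implications separately and elementwise: for \eqref{eq:GframeSuper}$\Rightarrow$\eqref{eq:Ginner} it notes that $M$ and $\sum_j\douter{L_j}{L_j}$ share a spectrum and then checks by hand, via $\sum_j\tr(L_j)L_j=\alpha+d\beta$, that the trace vector is the distinguished eigenvector of $M$; for \eqref{eq:Ginner}$\Rightarrow$\eqref{eq:GframeSuper} it first shows $\sum_j\tr(L_j)L_j$ is a scalar, whence $\dket{1}$ is the distinguished eigenvector of $\mathcal F$. Your route packages both directions into a single conjugation: introducing the synthesis map $V$ with $\mathcal F=VV^{\dagger}$ and $G=V^{\dagger}V$, you observe that \eqref{eq:GframeSuper} is equivalent to the quadratic $G^{2}=\alpha G+\beta\,\ket{t}\bra{t}$, and then the symmetry of $G-\alpha I=\beta G^{-1}\ket{t}\bra{t}$ forces $\ket{t}$ to be an eigenvector, yielding \eqref{eq:Ginner} in one stroke. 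This is a genuinely cleaner organizing principle---the frame-theoretic identity $V^{\dagger}(\cdot)V$ does the bookkeeping that the paper carries out by explicit trace computations---though the underlying content (same spectrum for $G$ and $\mathcal F$, trace vector as the exceptional eigenvector) is identical. One small point worth making explicit in a write-up: the parameter identification $\gamma=\beta/(\alpha+d\beta)$ requires either $\|t\|^{2}=d(\alpha+d\beta)$ or, equivalently, that the simple eigenvalue of $G$ equals $\alpha+d\beta$; you get this for free from the isospectrality of $VV^{\dagger}$ and $V^{\dagger}V$, but it deserves a sentence.
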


\begin{remark}
  \Eref{eq:Ginner} characterizes the  geometrical properties of the (possibly irregular) simplex formed by the vectors $\{L_j\}$ while \eref{eq:G2design} is what we are calling the generalized $2$-design property.  The theorem thus generalizes the connection between simplices and $2$-designs which we see in the case of a SIC. \Eref{eq:G2design}  reflects the combinatoric properties of $\{L_j\}$.  It is also relevant to the study of entanglement and minimal decomposition of separable states~\cite{Chen13}. \Eref{eq:GframeSuper} has information theoretic content: When the $L_j$ form a generalized measurement and have the same trace of $1/d$, the superoperator $\sum_j\douter{L_j}{L_j}$ determines  the efficiency of this measurement in linear state tomography \cite{Scot06, ZhuE11, Zhu12the} (see also \sref{sec:TightIC}). Remarkably, the geometric, combinatoric, and information theoretic aspects of the basis $\{L_j\}$  can be connected by a simple theorem. As we shall see shortly, quite a few key results pertinent  to SICs can be derived or rederived based on this theorem. Furthermore, \thref{thm:GeoComInf} is also the cornerstone for establishing group theoretic and algebraic formulations of the SIC existence problem, which are the main focus of this paper.
\end{remark}
The proof of the theorem depends on the following lemma, which is also  of some independent interest.
\begin{lemma}\label{lem:ComInf}
Suppose $\{L_j\}$  is a  set of $n$ Hermitian operators in $\mathcal{B}(\mathcal{H})$. Then the following two  statements are equivalent:
\begin{align}
\sum_j L_j\otimes L_j&=(\beta+\alpha)P_{\mrm{s}}+(\beta-\alpha)P_{\mrm{a}},\label{eq:GG2design}\\
\sum_j \douter{L_j}{L_j}&=\alpha\mathbf{I}+\beta\douter{1}{1}.   \label{eq:GGframeSuper}
\end{align}
The set $\{L_j\}$ spans $\mathcal{B}(\mathcal{H})$ if and only if
$\alpha > 0$ and $\alpha+ d\beta > 0$.
\end{lemma}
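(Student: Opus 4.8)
The plan is to reduce both operator identities to a single scalar bilinear condition on the $\{L_j\}$ by probing each side with arbitrary operators, and then to extract the spanning criterion from the spectrum of the resulting frame superoperator.

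First I would test \eqref{eq:GG2design} against product operators. Pairing both sides with $A\otimes B$ for arbitrary $A,B\in\mathcal B(\mathcal H)$ and using the elementary traces $\tr[(A\otimes B)P_{\mrm s}]=\tfrac{1}{2}[\tr(A)\tr(B)+\tr(AB)]$ and $\tr[(A\otimes B)P_{\mrm a}]=\tfrac{1}{2}[\tr(A)\tr(B)-\tr(AB)]$, the right-hand side collapses and \eqref{eq:GG2design} becomes
\[
\sum_j \tr(AL_j)\,\tr(BL_j)=\alpha\,\tr(AB)+\beta\,\tr(A)\tr(B)\qquad\text{for all }A,B\in\mathcal B(\mathcal H).
\]
Since the product operators $A\otimes B$ span $\mathcal B(\mathcal H^{\otimes2})$, this family of scalar equations is equivalent to \eqref{eq:GG2design}.

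Next I would apply the analogous probe to \eqref{eq:GGframeSuper}. Sandwiching both sides between $\dbra{Y}$ and $\dket{X}$ yields $\sum_j \tr(Y^\dagger L_j)\tr(L_j X)=\alpha\,\tr(Y^\dagger X)+\beta\,\tr(Y^\dagger)\tr(X)$ for all $X,Y$, and this is equivalent to \eqref{eq:GGframeSuper} because a superoperator is determined by its Hilbert--Schmidt matrix elements. Using the hypothesis that the $L_j$ are Hermitian, so that $L_j^\dagger=L_j$, and substituting $A=Y^\dagger$, $B=X$ (which range over all of $\mathcal B(\mathcal H)$ as $X,Y$ do), this condition is identical to the bilinear condition above. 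Hence \eqref{eq:GG2design} and \eqref{eq:GGframeSuper} are equivalent.

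For the spanning statement I would work from \eqref{eq:GGframeSuper}, under which the frame superoperator is $\mathcal F:=\sum_j\douter{L_j}{L_j}=\alpha\mathbf{I}+\beta\douter{1}{1}$. A standard frame-theoretic computation gives $\ker\mathcal F=(\mrm{span}\{L_j\})^\perp$, so $\{L_j\}$ spans $\mathcal B(\mathcal H)$ exactly when $\mathcal F$ is positive definite. Now $\douter{1}{1}$ is rank one with $\douter{1}{1}\dket{1}=\dinner{1}{1}\dket{1}=d\dket{1}$ and zero on the orthogonal complement of $\dket{1}$; hence $\mathcal F$ has eigenvalue $\alpha+d\beta$ on $\dket{1}$ and eigenvalue $\alpha$ with multiplicity $d^2-1$. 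As $\mathcal F$ is automatically positive semidefinite, positive definiteness --- and therefore the spanning property --- holds if and only if $\alpha>0$ and $\alpha+d\beta>0$. The main obstacle is bookkeeping rather than ideas: one must verify that probing with the product operators $A\otimes B$ (respectively the Hilbert--Schmidt sandwich) genuinely captures the full operator identities, and track the Hermiticity of the $L_j$ carefully so that the two probe conditions coincide after the substitution $A=Y^\dagger$, $B=X$. Equivalently, one is checking that the reshuffling bijection between $\mathcal B(\mathcal H^{\otimes2})$ and the superoperators sends $\sum_j L_j\otimes L_j$ to $\sum_j\douter{L_j}{L_j}$, the swap $P_{\mrm s}-P_{\mrm a}$ to $\mathbf{I}$, and the identity $P_{\mrm s}+P_{\mrm a}$ to $\douter{1}{1}$; getting those index conventions right is the one genuinely error-prone step.
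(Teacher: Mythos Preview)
Your proof is correct and takes essentially the same approach as the paper: both establish the equivalence via the reshuffling correspondence between $\mathcal B(\mathcal H)\otimes\mathcal B(\mathcal H)$ and superoperators on $\mathcal B(\mathcal H)$, under which $P_{\mrm s}+P_{\mrm a}\leftrightarrow\douter{1}{1}$ and $P_{\mrm s}-P_{\mrm a}\leftrightarrow\mathbf I$. The paper states this isomorphism abstractly (as the map $\douter{A}{B}\mapsto A\otimes B^{\dagger}$) and checks where the two building blocks land, whereas you verify the same thing componentwise by probing with $A\otimes B$ and $\dbra{Y}\cdots\dket{X}$; the spanning argument via the eigenvalues $\alpha$ and $\alpha+d\beta$ of $\mathcal F$ is identical.
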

\begin{remark}
Note that this result applies more generally than  \thref{thm:GeoComInf} since we do not assume that  $\{L_j\}$ is a basis, nor even that it has cardinality $d^2$.
\end{remark}
\begin{proof}
Let $\mathcal{B}(\mathcal{B}(\mathcal{H}))$ denote the space of operators on $\mathcal{B}(\mathcal{H})$. Then $\mathcal{B}(\mathcal{B}(\mathcal{H}))$  is isometrically (with respect to the Hilbert-Schmidt inner product) isomorphic to $\mathcal{B}(\mathcal{H})\otimes\mathcal{B}(\mathcal{H})$  under the map $\douter{A}{B} \rightarrow  A\otimes B^\dag$ for $A,B\in \mathcal{B}(\mathcal{H})$. The equivalence of \esref{eq:GG2design} and \eqref{eq:GGframeSuper} follows from the observation that under this isomorphism $\douter{1}{1}$ is mapped to the identity, which is equal to $P_{\mrm{s}}+P_{\mrm{a}}$, and $\mathbf{I}$ is mapped to the swap operator, which is equal to $P_{\mrm{s}}-P_{\mrm{a}}$.  Here the first claim follows from the definition. To verify the latter claim, define $E_{rs}=\outer{r}{s}$. Then $\mathbf{I}=\sum_{r,s} \douter{E_{rs}}{E_{rs}}$, whose image under the isomorphism is exactly the swap operator $\sum_{r,s} \outer{rs}{sr}$.

Finally, note that the operators $L_j$ span  $\mathcal{B}(\mathcal{H})$ if and only if the superoperator $\sum_j \douter{L_j}{L_j}$ is positive definite, which is true if and only if its two distinct eigenvalues $\alpha$ and $\alpha + d\beta$ are both positive.
\end{proof}

For later reference let us note that it is easy to obtain explicit expressions for the constants $\alpha, \beta$ featuring in the lemma.  In fact, taking the trace of both sides of   \eref{eq:GGframeSuper} gives
\begin{align}\label{eq:LjSumA}
\sum_j \tr(L_j^2)&=d^2\alpha +d\beta,
\\
\intertext{while taking the  inner product  with $\douter{1}{1}$ gives}
\sum_j[\tr(L_j)]^2&=d\alpha +d^2\beta.\label{eq:LjSumB}
\end{align}
Consequently,
\begin{equation}
\alpha=\frac{d\sum_j\tr(L_j^2)-\sum_j[\tr(L_j)]^2}{d^3-d},  \quad  \beta=\frac{-\sum_j\tr(L_j^2)+d\sum_j[\tr(L_j)]^2}{d^3-d}. \label{eq:alphabeta}
\end{equation}
Observe also that multiplying both sides of \eref{eq:GGframeSuper} by $\dket{1}$ on the right gives
\begin{equation}
\sum_j \tr(L_j)L_j=\alpha+d\beta. \label{eq:LjLj}
\end{equation}

\begin{proof}[Proof of \thref{thm:GeoComInf}]
The equivalence of \esref{eq:G2design} and \eqref{eq:GframeSuper} follows from \lref{lem:ComInf}.  To show that \eref{eq:GframeSuper} implies \eref{eq:Ginner},
denote the Gram matrix of $\{L_j\}$ by $M$. Then $M$ and $\sum_j\douter{L_j}{L_j}$ have the same spectrum. If \eref{eq:GframeSuper} holds, then all eigenvalues of $\douter{L_j}{L_j}$ or, equivalently, of $M$ are equal to $\alpha$ except for one equal to $\alpha+d\beta$. It follows from \eref{eq:LjLj} that
\begin{equation}
\sum_j \tr(L_j)\tr(L_jL_k)=(\alpha+d\beta)\tr(L_k),
\end{equation}
which implies that the vector $(\tr(L_1),\tr(L_2), \ldots,\tr(L_{d^2}))^\rmT$ is the eigenvector of $M$ with eigenvalue $\alpha+d\beta$. As a consequence,
\begin{equation}
\tr(L_jL_k)=\alpha\delta_{jk}+\frac{d\beta\tr(L_j)\tr(L_k)}{\sum_j [\tr(L_j)]^2}=\alpha\delta_{jk}+\frac{\beta}{\alpha+d\beta}
\tr(L_j)\tr(L_k),
\end{equation}
where in deriving the last equality we have applied \eref{eq:LjSumB}.  Note that \lref{lem:ComInf} guarantees that $\alpha+d\beta$ is non-zero since $\{L_j\}$ is a basis.
So \eref{eq:Ginner} holds with $\gamma = \beta/(\alpha+d\beta)$.

It remains to show that  \eref{eq:Ginner}  implies \eref{eq:GframeSuper}.   If \eref{eq:Ginner}  holds,  then
\begin{equation}
\tr \biggl[\biggl(\sum_j \tr(L_j)L_j \biggr) L_k\biggr]=\biggl\{\alpha+\gamma\sum_j[\tr(L_j)]^2\biggr\}\tr(L_k),
\end{equation}
which implies  that
\begin{equation}\label{eq:thm1SubsidB}
\sum_j\tr(L_j)L_j=\alpha+\gamma\sum_j[\tr(L_j)]^2
\end{equation}
since $\{L_j\}$ is a basis in the operator space. Taking the trace on both sides we find
\begin{equation}\label{eq:thm1SubsidA}
\sum_j [\tr(L_j)]^2 = d\alpha +d \gamma \sum_j[\tr{L_j)}^2.
\end{equation}
Now the fact that $\{L_j\}$ is a basis means that the Gram matrix $M_{jk}=\tr(L_jL_k)$ must be positive definite, implying that $\alpha>0$ and, consequently, $d\gamma \neq 1$ in view of \eref{eq:thm1SubsidA}.  We may therefore rearrange the equation to derive
\begin{equation}\label{eq:gammaEq}
\sum_j[\tr(L_j)]^2=\frac{d\alpha}{1-d\gamma}.
\end{equation}
\Esref{eq:thm1SubsidB} and \eqref{eq:gammaEq} imply that $\dket{1}$ is an eigenvector of the superoperator  $\sum_j\douter{L_j}{L_j}$ with eigenvalue $\alpha+[d\alpha\gamma/(1-d\gamma)]$.
 Now \eref{eq:GframeSuper} with $\beta=\alpha\gamma/(1-d\gamma)$ follows from the observation that all eigenvalues of $M$, that is, of  $\sum_j \douter{L_j}{L_j}$  are equal to $\alpha$ except for one equal to $\alpha+[d\alpha\gamma/(1-d\gamma)]$.
\end{proof}

From  \eref{eq:thm1SubsidA} or \eqref{eq:gammaEq} and the fact that $\alpha > 0$ we find
\begin{equation}\label{eq:gamma}
 \quad \gamma=\frac{1}{d}-\frac{\alpha}{\sum_j[\tr(L_j)]^2}<\frac{1}{d}.
 \end{equation}

An important special case of \thref{thm:GeoComInf} is  when $\alpha=\beta=d/(d+1)$:
\begin{corollary}\label{cor:GeoComInfSpecial}
Suppose $\{L_j\}$ is a set of $d^2$ Hermitian operators in $\mathcal{B}(\mathcal{H})$. Then the following three  equations are equivalent:
\begin{align}
\tr(L_jL_k)&=\frac{1}{d+1}\bigl[d\delta_{jk}+\tr(L_j)\tr(L_k)\bigr], \label{eq:gequi}\\
\sum_j L_j\otimes L_j&=\frac{2d}{d+1}P_{\mrm{s}}, \label{eq:g2design}\\
\sum_j \douter{L_j}{L_j}&=\frac{d}{d+1}(\mathbf{I}+\douter{1}{1}). \label{eq:superopeEq}
\end{align}
\end{corollary}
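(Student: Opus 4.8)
The plan is to read the corollary as the specialization of \thref{thm:GeoComInf} to the parameter values $\alpha=\beta=d/(d+1)$. The only gap between the two statements is that the theorem presupposes $\{L_j\}$ is a basis, whereas here we are given merely a set of $d^2$ Hermitian operators; so the substantive task is to show that each of the three displayed equations, \emph{on its own}, forces $\{L_j\}$ to be a basis, after which the theorem applies verbatim.

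First I would check the arithmetic of the substitution. With $\alpha=\beta=d/(d+1)$ one has $\alpha+d\beta=[d+d^2]/(d+1)=d$, whence $\gamma=\beta/(\alpha+d\beta)=1/(d+1)$; this turns \eref{eq:Ginner} into \eref{eq:gequi}. Likewise $\beta+\alpha=2d/(d+1)$ and $\beta-\alpha=0$ turn \eref{eq:G2design} into \eref{eq:g2design}, and $\alpha=\beta=d/(d+1)$ turns \eref{eq:GframeSuper} into \eref{eq:superopeEq}. Thus the three equations of the corollary are exactly the $\alpha=\beta=d/(d+1)$ instances of the theorem's equations.

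Next I would dispose of the basis hypothesis. The equivalence of \eref{eq:g2design} and \eref{eq:superopeEq} is already furnished by \lref{lem:ComInf}, which assumes nothing about spanning. To connect these to \eref{eq:gequi} through the theorem, I would verify that each equation forces $\{L_j\}$ to span $\mathcal{B}(\mathcal{H})$ and hence, being $d^2$ vectors in a $d^2$-dimensional space, to be a basis. For \eref{eq:superopeEq} this is precisely the final clause of \lref{lem:ComInf}, since $\alpha=d/(d+1)>0$ and $\alpha+d\beta=d>0$. For \eref{eq:gequi} I would observe that the Gram matrix is $M=\tfrac{d}{d+1}I+\tfrac{1}{d+1}vv^{\rmT}$ with $v=(\tr(L_1),\dots,\tr(L_{d^2}))^{\rmT}$; since $vv^{\rmT}$ is positive semidefinite, $M$ has smallest eigenvalue at least $d/(d+1)>0$, so $M$ is positive definite and the $L_j$ are linearly independent.

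With the basis property secured from whichever equation one starts, \thref{thm:GeoComInf} then yields the full chain of equivalences. I do not expect a genuine obstacle here: the corollary is in essence a change of variables. The one point requiring care — and the reason it is not a mere restatement — is that the basis property, assumed outright in the theorem, must here be recovered separately from each of the three conditions, the case of \eref{eq:gequi} resting on the positive-definiteness of its Gram matrix.
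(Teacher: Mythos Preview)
Your proposal is correct and follows the same route as the paper, which simply invokes \thref{thm:GeoComInf} with $\alpha=\beta=d/(d+1)$. Your extra step---verifying that each of the three equations by itself forces $\{L_j\}$ to be a basis---is a point the paper relegates to an unproved remark, so your treatment is in fact more complete than the paper's one-line proof.
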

\begin{remark}Any equation in the corollary ensures that $\{L_j\}$ is a basis for $\mathcal{B}(\mathcal{H})$. If the $L_j$ were rank-$1$ projectors satisfying these conditions, they would define a SIC.
\end{remark}
\begin{proof}
The claim is a straightforward consequence of \thref{thm:GeoComInf}.\end{proof}

Another important special case of \thref{thm:GeoComInf} is when the basis  $\{L_j\}$  forms a regular simplex.  We conclude this section with two corollaries concerning this case, which will be needed in the sequel.  Since they are of a somewhat technical character we relegate the proofs to the appendix.
\begin{corollary}\label{cor:GeoComInf}
Suppose $\{L_j\}$ is a basis of Hermitian operators for $\mathcal{B}(H)$ which satisfies any of the three equivalent \esref{eq:Ginner}, \eqref{eq:G2design}, and \eqref{eq:GframeSuper} in \thref{thm:GeoComInf}.
If $\beta\neq0$, then the following statements are equivalent:
\begin{enumerate}

\item  \label{it:1}  The value of $|\tr (L_j)|$ is independent of $j$.

\item  \label{it:2} The value of  $\tr (L_j^2)$ is independent of $j$.

\item  \label{it:3} The value of $[\tr (L_j)]^2/\tr (L_j^2)$ is independent of $j$.

\item  \label{it:4} The value of $d\tr (L_j^2)-[\tr (L_j)]^2$ is independent of $j$.

\item \label{it:5} $\tr(L_j)\neq 0$ and $\tr(L_jL_k)=\alpha\delta_{jk}+\beta\epsilon_j\epsilon_k/d$ for all $j, k$.

\item \label{it:6} $\tr(L_j)\neq 0$ and the vectors $\epsilon_jL_j$ are equiangular.

\item  \label{it:7} $\tr(L_j)\neq 0$ and $\sum_j \epsilon_j L_j$ is proportional to the identity.

\item \label{it:8} $\sum_j |\tr(L_j)|=d\sqrt{d\alpha+d^2\beta}$.
\end{enumerate}
Here $\epsilon_j$ is the sign of $\tr(L_j)$.
If these statements hold, then
\begin{equation}\label{eq:Lj}
\tr(L_j^2)=\frac{d\alpha+\beta}{d},\quad    \tr(L_j)=\epsilon_j\sqrt{\frac{\alpha+d\beta}{d}},\quad \sum_j\epsilon_j L_j =\sqrt{d(\alpha+d\beta)}.
\end{equation}
If $\beta=0$ statement  \ref{it:2} is automatic.  Statements   \ref{it:1}, \ref{it:3}, \ref{it:4}, \ref{it:7} and \ref{it:8} are  equivalent and imply statements \ref{it:5} and \ref{it:6}, which are also equivalent.     \Eref{eq:Lj} is still applicable when  statements \ref{it:1}, \ref{it:3},  \ref{it:4}, \ref{it:7} and \ref{it:8} hold.
\end{corollary}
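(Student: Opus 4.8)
The plan is to route every equivalence through statement~1 ($|\tr(L_j)|$ independent of $j$), after first extracting a single master relation from \thref{thm:GeoComInf}. Writing $t_j=\tr(L_j)$ and $s_j=\tr(L_j^2)$, the diagonal case $j=k$ of \eref{eq:Ginner} gives
\begin{equation}
s_j=\alpha+\gamma t_j^2,
\end{equation}
and since $\gamma=\beta/(\alpha+d\beta)$ we have $\gamma\neq0$ exactly when $\beta\neq0$. Assume first $\beta\neq0$. Statements 1--4 are then all of the form ``some explicit function of $t_j^2$ is independent of $j$'', and each such function is injective in $t_j^2$: statement~2 reads $\alpha+\gamma t_j^2=\text{const}$ with $\gamma\neq0$; statement~3 reads $t_j^2/(\alpha+\gamma t_j^2)=\text{const}$, a M\"obius function injective because $\alpha>0$; and statement~4 reads $d\alpha+(d\gamma-1)t_j^2=\text{const}$ with $d\gamma-1=-\alpha/(\alpha+d\beta)\neq0$. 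Hence statements 1--4 are equivalent. For statement~8 I would invoke Cauchy--Schwarz together with \eref{eq:LjSumB}: $\sum_j|t_j|\le\sqrt{d^2\sum_j t_j^2}=d\sqrt{d\alpha+d^2\beta}$, with equality iff all $|t_j|$ are equal, so statement~8 is equivalent to statement~1. Because $\sum_j t_j^2=d(\alpha+d\beta)>0$, constancy of $|t_j|$ forces $t_j\neq0$; thus statement~1 automatically supplies the clause ``$\tr(L_j)\neq0$'' appearing in statements 5--7.

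For the remaining three I would prove each equivalent to statement~1. Statement~5: subtracting \eref{eq:Ginner} from the claimed Gram form leaves $\gamma t_jt_k=(\beta/d)\epsilon_j\epsilon_k$, and the diagonal entry (using $t_j\neq0$, $\gamma\neq0$) gives $t_j^2=(\alpha+d\beta)/d$, hence statement~1; the converse is the same computation run backwards. Statement~6: the Gram matrix of the vectors $\epsilon_jL_j$ has entries $\alpha\delta_{jk}+\gamma|t_j||t_k|$, so for $j\neq k$ the cosine of the angle between $\epsilon_jL_j$ and $\epsilon_kL_k$ equals $\gamma u_ju_k$, where $u_j:=|t_j|/\sqrt{s_j}$ has square $t_j^2/s_j$, the very quantity of statement~3; equiangularity forces $u_ju_k$ to be constant over $j\neq k$, which for $d^2\ge3$ vectors forces $u_j$ constant, i.e.\ statement~3. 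Statement~7 is the one genuinely non-mechanical step. The forward direction substitutes $t_j=\epsilon_jc$ into \eref{eq:LjLj}. For the converse, assuming $\sum_j\epsilon_jL_j=\lambda\,1$ I would take the Hilbert--Schmidt inner product with $L_k$ and use \eref{eq:Ginner} to obtain $\alpha\epsilon_k+\gamma t_k\sum_j|t_j|=\lambda t_k$, that is $\alpha=(\lambda-\gamma\sum_j|t_j|)\,|t_k|$ for every $k$; since $\alpha>0$ the bracket is a fixed positive constant and $|t_k|$ must be independent of $k$.

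Collecting the constant value $|t_j|^2=(\alpha+d\beta)/d$ and feeding it back through $s_j=\alpha+\gamma t_j^2$, \eref{eq:Ginner}, and \eref{eq:LjLj} yields the three displayed identities \eref{eq:Lj}. Finally I would treat $\beta=0$ (so $\gamma=0$) as a degeneration: now $s_j\equiv\alpha$, making statement~2 automatic, while the injectivity arguments for statements 3, 4, 8 and the inner-product argument for statement~7 survive verbatim, so statements 1, 3, 4, 7, 8 remain equivalent. However statements 5 and 6 collapse to the single requirement $t_j\neq0$ (the off-diagonal Gram entries $\gamma|t_j||t_k|$ vanish, so the $\epsilon_jL_j$ are automatically orthogonal and of equal norm), which is strictly weaker than constancy of $|t_j|$; this is exactly why for $\beta=0$ they are only implied by, and not equivalent to, the others.

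The hard part will be the converse of statement~7 and, more subtly, the bookkeeping around statement~6: I must fix a definition of ``equiangular'' (equal pairwise angles versus equal pairwise inner products) and check that both readings reduce to a product $w_jw_k=\text{const}$ over distinct pairs, and I must track precisely where the hypothesis $t_j\neq0$ and the cardinality bound $d^2\ge3$ enter, so that the $\beta=0$ weakening of statements 5 and 6 emerges correctly.
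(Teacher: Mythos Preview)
Your proof is correct and follows essentially the same approach as the paper: both extract the master relation $s_j=\alpha+\gamma t_j^2$ from the diagonal of \eref{eq:Ginner}, use injectivity to interlink statements 1--4, use Cauchy--Schwarz against \eref{eq:LjSumB} for statement~8, and use \eref{eq:LjLj} for the forward direction of statement~7. The one genuine difference is your treatment of the converse of statement~7: the paper closes the loop via the chain $7\Rightarrow 8\Rightarrow 1$ (computing $\tr A^2$ with $A=\sum_j\epsilon_jL_j$ to pin down the proportionality constant, then invoking Cauchy--Schwarz), whereas you go $7\Rightarrow 1$ directly by pairing $\sum_j\epsilon_jL_j=\lambda$ against $L_k$ and reading off $\alpha=(\lambda-\gamma\sum_j|t_j|)\,|t_k|$; your route is slightly shorter and avoids the detour through statement~8, at the cost of not immediately producing the explicit value of $\lambda$ (which you recover later anyway from \eref{eq:LjLj}).
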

\begin{remark}
If $\beta = 0$ then $\{L_j\}$ is automatically a regular simplex.
If $\beta \neq 0$ then $\{L_j\}$ is a regular simplex if and only if   statements \ref{it:1} to \ref{it:8} hold with $\epsilon_j=1$  for all $j$ or $\epsilon_j=-1$ for all $j$.  Note that $\alpha > 0$ since $\{L_j\}$ is a basis.
\end{remark}
\begin{proof}  See the appendix.   \end{proof}

\begin{corollary}\label{cor:RegularSimplexGen}
Suppose $\{L_j\}$  is a basis of Hermitian operators for $\mathcal{B}(H)$  which satisfies
\begin{equation}
\tr(L_jL_k)=\alpha\delta_{jk}+\zeta \quad \forall j, k.    \label{eq:RegularSimplex}
\end{equation}
Then $\alpha>0$ and  $\zeta>-\alpha/d^2$.  Moreover, the following statements are equivalent:
\begin{enumerate}

\item $\sum_j L_j $ is proportional to the identity.

\item  The value of $\tr (L_j)$ is independent of $j$.

\item $|\sum_j \tr(L_j)|=d\sqrt{d\alpha+d^3\zeta}$.

\end{enumerate}
If any of these statements holds, then
\begin{equation}\label{eq:RegularSimplexLj}
\tr(L_j)=\epsilon\frac{\sqrt{d\alpha+d^3\zeta}}{d},\quad \sum _j L_j=\epsilon\sqrt{d\alpha+d^3\zeta}, \quad \epsilon=\pm1.
\end{equation}
If in addition $\zeta\neq0$, then any of statements 1, 2, 3 holds if and only if $\{L_j\}$ satisfies  \thref{thm:GeoComInf} with the same $\alpha$ and $\gamma=d\zeta/(\alpha+d^2\zeta)$.
\end{corollary}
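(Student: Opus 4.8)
The plan is to route everything through the Gram matrix $M$, $M_{jk}=\tr(L_jL_k)=\alpha\delta_{jk}+\zeta$, which equals $\alpha$ times the $d^2\times d^2$ identity plus $\zeta J$, where $J$ is the all-ones matrix. Its eigenvalues are $\alpha+d^2\zeta$ on the all-ones vector $\vec{u}=(1,\dots,1)^{\rmT}$ (using $J\vec{u}=d^2\vec{u}$) and $\alpha$ on its orthogonal complement. Since $\{L_j\}$ is a basis, $M$ is positive definite, forcing $\alpha>0$ and $\alpha+d^2\zeta>0$, i.e.\ $\zeta>-\alpha/d^2$; this is the first assertion, and it guarantees that $\alpha+d^2\zeta$ and $d\alpha+d^3\zeta=d(\alpha+d^2\zeta)$ are strictly positive, a fact I lean on repeatedly to rule out degenerate vanishing below.

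Writing $T:=\sum_j L_j$ and $\vec{t}=(\tr(L_j))_j$, two bookkeeping identities do most of the work: summing the rows of $M$ gives $\tr(L_kT)=\alpha+d^2\zeta$ for every $k$, and summing once more gives $\tr(T^2)=d^2(\alpha+d^2\zeta)$. For $1\Leftrightarrow 3$ I apply Cauchy--Schwarz to the Hilbert--Schmidt pairing of $T$ with the identity: $[\tr(T)]^2\le d\,\tr(T^2)=d^2(d\alpha+d^3\zeta)$, with equality precisely when $T$ is proportional to the identity; since $\tr(T)=\sum_j\tr(L_j)$, the equality case is exactly statement~3 as well as statement~1. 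For $1\Rightarrow 2$: if $T=c\,1$ then $\tr(T^2)=c^2d$ forces $c^2=d(\alpha+d^2\zeta)>0$, so $c\ne0$, and $\tr(L_kT)=\alpha+d^2\zeta$ gives $c\,\tr(L_k)=\alpha+d^2\zeta$, whence $\tr(L_k)$ is constant. For $2\Rightarrow 1$: if every $\tr(L_j)$ equals a common value $t$, then $\vec{t}=t\vec{u}$, and expanding the identity operator $1=\sum_j c_jL_j$ in the basis yields a coefficient vector solving $M\vec{c}=\vec{t}$; since $M\vec{u}=(\alpha+d^2\zeta)\vec{u}$ we get $\vec{c}=\tfrac{t}{\alpha+d^2\zeta}\vec{u}$, hence $1=\tfrac{t}{\alpha+d^2\zeta}T$. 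As $1\ne0$ we must have $t\ne0$, so $T$ is proportional to the identity. The values in \eqref{eq:RegularSimplexLj} then drop out: $c=\epsilon\sqrt{d\alpha+d^3\zeta}$ from $c^2=d(\alpha+d^2\zeta)$, and $\tr(L_j)=\tr(T)/d^2=c/d$.

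For the final claim I assume $\zeta\ne0$, which makes $\gamma:=d\zeta/(\alpha+d^2\zeta)$ nonzero. If statements~1--3 hold, the common trace satisfies $[\tr(L_j)]^2=(d\alpha+d^3\zeta)/d^2=(\alpha+d^2\zeta)/d$, so $\gamma[\tr(L_j)]^2=\zeta$ and \eqref{eq:Ginner} reads $\tr(L_jL_k)=\alpha\delta_{jk}+\gamma\tr(L_j)\tr(L_k)=\alpha\delta_{jk}+\zeta$, recovering \eqref{eq:RegularSimplex}; hence $\{L_j\}$ satisfies \thref{thm:GeoComInf} with these $\alpha$ and $\gamma$. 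Conversely, if \eqref{eq:Ginner} holds with that $\gamma$, comparison with \eqref{eq:RegularSimplex} gives $\gamma\,\tr(L_j)\tr(L_k)=\zeta$ for all $j,k$; as $\gamma\ne0$ and $\zeta\ne0$, each $\tr(L_j)\ne0$ and $\tr(L_j)\tr(L_k)=[\tr(L_j)]^2$, forcing all traces equal, which is statement~2. The hypothesis $\zeta\ne0$ is essential: for $\zeta=0$ the relation \eqref{eq:RegularSimplex} is mere orthogonality, which satisfies \thref{thm:GeoComInf} with $\gamma=0$ regardless of the traces.

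I expect no serious obstacle here, since at bottom this is a spectral computation on the rank-one-perturbed matrix $M=\alpha\,(\text{identity})+\zeta J$. The only points needing care are the equality analysis in Cauchy--Schwarz underlying $1\Leftrightarrow 3$, and the systematic use of the positivity $\alpha+d^2\zeta>0$ to exclude the degenerate cases $c=0$ and $t=0$ in the $1\Rightarrow 2$ and $2\Rightarrow 1$ steps; without that positivity the proportionality constant or the common trace could in principle vanish and the argument would stall.
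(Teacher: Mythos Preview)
Your proof is correct and follows essentially the same approach as the paper: both begin with the spectral analysis of the Gram matrix $M=\alpha I+\zeta J$ to establish $\alpha>0$ and $\alpha+d^2\zeta>0$, both use the Cauchy--Schwarz inequality $[\tr(T)]^2\le d\,\tr(T^2)$ for the equivalence $1\Leftrightarrow 3$, and both handle the final $\zeta\neq 0$ clause by the same trace-product comparison. The only minor variation is in the step $2\Rightarrow 1$: the paper computes $\tr(TL_k)=\tfrac{\alpha+d^2\zeta}{\ell}\tr(L_k)$ and concludes $T=\tfrac{\alpha+d^2\zeta}{\ell}$ since $\{L_k\}$ is a basis, whereas you expand the identity $1=\sum_j c_jL_j$ and solve $M\vec{c}=\vec{t}$ to find all $c_j$ equal; these are dual versions of the same linear-algebraic fact and equally direct.
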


\begin{remark}
 \Eref{eq:RegularSimplex}  is a necessary and sufficient condition for  the basis  to be a regular simplex.
Geometrically, all the conditions in the corollary amount to the requirement that the center of that simplex is proportional to the identity.
\end{remark}
\begin{proof}  See the appendix. \end{proof}

\section{\label{sec:GeoComInf}Geometric, combinatoric, and information theoretic characterizations of SICs}
The purpose of this section is two-fold.  The propositions proved in the last section are surprisingly powerful, given the simplicity of the underlying geometrical intuition.  One aim of this section is to illustrate that power by rederiving a number of well-known results.  The other aim is to derive some new results which will be needed in the sequel.

The study of SICs has drawn much inspiration from the study of equiangular lines \cite{Haan48,LintS66,LemmS73,DelsGS75,Roy06the,GodsR09,Khat08the}, spherical
codes and designs \cite{DelsGS77,Hogg82,Hogg92,BannB09,ReneBSC04,Rene04the,Scot06,Kim07}, as
well as  frame theory \cite{DuffS52,Casa00,ReneBSC04,Rene04the,Scot06}. Recently, it has also found interesting connections with quantum state estimation \cite{RehaEK04, HayaHH05, Scot06, ZhuE11, Zhu12the, PetzR12E}, entanglement theory \cite{ZhuTE10t, ChenZW11, Chen13}, and Lie algebras (\rcite{ApplFF11} and this paper). In this section we  review several important geometric, combinatoric, and information theoretic characterizations of SICs originating from these studies. Specifically, using the unified approach introduced in \sref{sec:simplex}, we provide self-contained proofs of the well known results that SICs are maximal sets of equiangular lines \cite{Welc74, DelsGS77, Zaun11}, minimal 2-designs \cite{Zaun11,ReneBSC04, Scot06}, and minimal efficient tight informationally complete (IC) measurements \cite{Scot06}, and vice versa. We then generalize these results  in preparation for group theoretic and algebraic treatments of the SIC existence problem.

In the rest of the paper by a "SIC" we mean a set of $d^2$ pure projectors in dimension~$d$ with equal pairwise fidelity of $1/(d+1)$, rather than the POVM obtained by scaling the projectors by a factor of $1/d$.

\subsection{Maximal equiangular lines}
In the mathematical community, SICs have been studied under the
name of \emph{equiangular  lines} for more than half a century
\cite{Haan48,LintS66,LemmS73,DelsGS75,Roy06the,GodsR09,Khat08the}; see
\rcite{Khat08the} for a historical  survey. When the lines are
represented by pure states, the equiangular condition means that the
pairwise fidelities among the states are the same. A cursory
inspection of the Gram matrix of the lines reveals that there are at
most $d^2$ equiangular lines in a (complex) Hilbert space of dimension~$d$~\cite{DelsGS75}.
When the pairwise fidelity $\mu$
 is smaller than $1/(d+1)$, there is a
tighter bound for the number $n$ of lines,
\begin{equation}\label{eq:WelchBound}
n\leq\frac{d-\mu d}{1-\mu d},
\end{equation}
which is known as the \emph{Welch bound}~\cite{Welc74}. A set of equiangular lines
is \emph{tight} if it saturates the Welch bound. SICs stand out as sets of equiangular lines that
saturate both the Welch bound and the absolute upper bound.

\begin{theorem}\label{thm:MaximalEAL}
Suppose $\Pi_j$ are $n$ pure states in dimension~$d$ that form equiangular lines, that is, $\tr(\Pi_j\Pi_k)=\alpha\delta_{jk}+1-\alpha$ with $0<\alpha\leq1$. Then $n\leq d^2$ and  the upper bound is saturated if and only  if $\{\Pi_j\}$ is a SIC.
\end{theorem}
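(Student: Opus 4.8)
The plan is to regard the collection $\{\Pi_j\}$ as vectors in the real Hilbert space of Hermitian operators on $\mathcal{H}$, whose dimension is $d^2$, and to extract everything from their Gram matrix. Since the $\Pi_j$ are rank-$1$ projectors we have $\tr(\Pi_j)=\tr(\Pi_j^2)=1$, so the hypothesis says exactly that the Gram matrix is $M=\alpha I_n+(1-\alpha)J_n$, where $J_n$ is the $n\times n$ all-ones matrix. This is precisely the regular-simplex form \eref{eq:RegularSimplex} with $\zeta=1-\alpha$, so \crref{cor:RegularSimplexGen} will be available once $\{\Pi_j\}$ is known to be a basis.

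First I would establish the absolute bound. The eigenvalues of $M$ are $\alpha$, with multiplicity $n-1$, and $\alpha+(1-\alpha)n$, with multiplicity $1$; both are strictly positive because $0<\alpha\le1$ forces $\alpha>0$ and $1-\alpha\ge0$. Hence $M$ is positive definite, the operators $\Pi_j$ are linearly independent, and therefore $n\le d^2$.

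For the saturation direction I would assume $n=d^2$, so that $\{\Pi_j\}$ is now a basis of Hermitian operators satisfying \eref{eq:RegularSimplex}. Because $\tr(\Pi_j)=1$ is independent of $j$, statement~2 of \crref{cor:RegularSimplexGen} holds; by that corollary all of its equivalent statements hold and \eref{eq:RegularSimplexLj} applies. Comparing $\tr(\Pi_j)=1$ with $\tr(\Pi_j)=\epsilon\sqrt{d\alpha+d^3\zeta}/d$ (with necessarily $\epsilon=1$, since the left side is positive) gives $d\alpha+d^3\zeta=d^2$; substituting $\zeta=1-\alpha$ and solving yields $\alpha=d/(d+1)$. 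The common fidelity is then $1-\alpha=1/(d+1)$, and a set of $d^2$ pure projectors with this fidelity is by definition a SIC. Conversely, a SIC consists of exactly $d^2$ such projectors and so saturates the bound, which closes the equivalence.

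The conceptual content is concentrated in the saturation step, and specifically in the passage from "all traces equal" to "$\sum_j\Pi_j$ proportional to the identity". Constant overlap of $\sum_j\Pi_j$ with every basis element does not by itself force proportionality to the identity in an arbitrary basis; it is the regular-simplex structure of the Gram matrix that makes the two conditions equivalent, and this is exactly what \crref{cor:RegularSimplexGen} supplies. Everything else — the positivity of the Gram eigenvalues and the final arithmetic fixing $\alpha$ — is routine, so I expect no genuine obstacle once the corollary is invoked.
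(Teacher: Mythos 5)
Your proof is correct, and its overall skeleton parallels the paper's: positive definiteness of the Gram matrix gives $n\le d^2$, and in the saturation case the $\Pi_j$ become a Hermitian basis to which the simplex machinery of \sref{sec:simplex} applies, forcing $\alpha=d/(d+1)$ and hence the SIC fidelity $1/(d+1)$. The difference is in which result carries the saturation step. The paper notes that, since $\tr(\Pi_j)=1$, the hypothesis is already of the form \eref{eq:Ginner} with $\gamma=1-\alpha$, invokes \thref{thm:GeoComInf}, and then reads off $\alpha=d/(d+1)$ from the explicit formula \eref{eq:alphabeta} using $\sum_j\tr(\Pi_j^2)=\sum_j[\tr(\Pi_j)]^2=d^2$. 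You instead route through \crref{cor:RegularSimplexGen} with $\zeta=1-\alpha$: constant trace (its statement 2) activates \eref{eq:RegularSimplexLj}, and comparing with $\tr(\Pi_j)=1$ pins down $d\alpha+d^3\zeta=d^2$, hence $\alpha=d/(d+1)$. Both are legitimate uses of the same underlying framework; the paper's route is a one-line formula evaluation, while yours makes the geometric content explicit (a regular simplex whose centroid is forced onto the identity) and also spells out the eigenvalue computation behind the positive definiteness that the paper leaves implicit when it asserts the Gram matrix has rank $n$. A cosmetic shortcut on your side: statement 3 of \crref{cor:RegularSimplexGen}, namely $|\sum_j\tr(\Pi_j)|=d\sqrt{d\alpha+d^3\zeta}$, yields the same equation in one step without discussing the sign $\epsilon$.
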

\begin{remark}
Here we do not assume that $\{\Pi_j\}$ forms tight equiangular lines, so it is not so obvious that it is a SIC when the absolute upper bound $d^2$ is saturated.
\end{remark}
\begin{proof}
The rank of the Gram matrix of $\{\Pi_j\}$ is equal to $n$ and is also equal to the rank of the superoperator $\sum_j\douter{\Pi_j}{\Pi_j}$, which is bounded from above by $d^2$. It follows that $n\leq d^2$. If $n=d^2$, then $\alpha=d/(d+1)$ according to \thref{thm:GeoComInf} and \eref{eq:alphabeta} with $L_j=\Pi_j$, given  that $\tr(\Pi_j)=\tr(\Pi_j^2)=1$. Therefore, $\{\Pi_j\}$ is a SIC.
\end{proof}

\begin{corollary}\label{cor:MaxSimple}
Suppose $\Pi_j\in \mathcal{B}(H)$ are $d^2$ positive operators with unit length that are equiangular among each other, that is, $\tr(\Pi_j\Pi_k)=\alpha\delta_{jk}+1-\alpha$. Then $\alpha\leq d/(d+1)$ and the upper bound is saturated if and only if $\{\Pi_j\}$ is a SIC.
\end{corollary}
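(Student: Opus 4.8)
The plan is to let positivity of the $\Pi_j$ enter through a single pointwise estimate and otherwise reduce everything to a two-line computation with the Gram data. Write $t_j=\tr(\Pi_j)$, put $A=\sum_j\Pi_j$, and set $T=\tr(A)=\sum_j t_j$. Since each $\Pi_j\geq 0$ has eigenvalues $\lambda_i\geq 0$ with $\sum_i\lambda_i^2=\tr(\Pi_j^2)=1$ (the unit-length hypothesis), the elementary inequality $(\sum_i\lambda_i)^2\geq\sum_i\lambda_i^2$ gives $t_j\geq 1$, with equality iff at most one $\lambda_i$ is nonzero, that is, iff $\Pi_j$ is a rank-$1$ projector. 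Summing over $j$ yields $T\geq d^2$, hence $T^2\geq d^4$.

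Next I would compute $\tr(A^2)$ directly from the equiangular relation: $\tr(A^2)=\sum_{j,k}\tr(\Pi_j\Pi_k)=\sum_{j,k}(\alpha\delta_{jk}+1-\alpha)=d^2[\alpha+(1-\alpha)d^2]$. Cauchy--Schwarz for the Hilbert--Schmidt inner product then bounds $T^2=[\tr(A\cdot 1)]^2\leq d\,\tr(A^2)$, with equality iff $A$ is proportional to the identity. Chaining the two estimates,
\[
 d^4\leq T^2\leq d\,\tr(A^2)=d^3\bigl[\alpha+(1-\alpha)d^2\bigr],
\]
so $d\leq\alpha+(1-\alpha)d^2$; since $1-d^2<0$, rearranging gives $\alpha\leq (d^2-d)/(d^2-1)=d/(d+1)$, the asserted bound.

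For the saturation statement, note that $\alpha=d/(d+1)$ is precisely the condition $\alpha+(1-\alpha)d^2=d$, which collapses the displayed chain to $d^4\leq T^2\leq d^4$. Both inequalities are therefore tight. Tightness of the left one forces $t_j=1$ for every $j$, so each $\Pi_j$ is a rank-$1$ projector; substituting $\alpha=d/(d+1)$ into the equiangular relation then gives $\tr(\Pi_j\Pi_k)=(d\delta_{jk}+1)/(d+1)$, i.e.\ pairwise fidelity $1/(d+1)$, whence $\{\Pi_j\}$ is a SIC (equivalently one may invoke \thref{thm:MaximalEAL} once purity is in hand). Tightness of the right inequality merely records the consistent fact that a SIC satisfies $\sum_j\Pi_j=d\cdot 1$, so no further input is needed. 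The converse is immediate, since a SIC consists of $d^2$ rank-$1$ projectors that are positive, of unit Hilbert--Schmidt length, and equiangular with $\alpha=d/(d+1)$.

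The only genuinely delicate step is the equality analysis: I must verify that saturation of $\alpha$ simultaneously forces purity and that the rank-$1$ property alone, together with the pinned value of $\alpha$, already delivers the SIC fidelity condition. Positivity is used in exactly one place---the estimate $t_j\geq 1$---and correctly identifying its equality case as ``rank-$1$ projector'' is the crux; the remainder is a short computation that does not even require $\{\Pi_j\}$ to be linearly independent.
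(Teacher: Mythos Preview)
Your argument is correct and is essentially the paper's own proof: both compute $\tr(A^2)$ from the Gram data, bound it below via Cauchy--Schwarz $[\tr(A)]^2\le d\,\tr(A^2)$, and use positivity to get $\tr(\Pi_j)\ge \sqrt{\tr(\Pi_j^2)}=1$, with the same equality analysis (rank~$1$, then SIC either directly or via \thref{thm:MaximalEAL}). The only difference is cosmetic ordering of the inequality chain.
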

\begin{remark}
Here we do not assume that   $\Pi_j$ have the same trace or they form a generalized measurement up to a scale factor, but these requirements are automatically satisfied when the upper bound is saturated. This  corollary shows that in a sense SICs are maximal simplices that can fit into the state space. This result is consistent with the observation in  \rcite{ApplDF07} that SICs are the best approximation to orthonormal bases among bases composed of positive operators.
\end{remark}

\begin{proof}
The inequality $\alpha\leq d/(d+1)$ follows from the equation
\begin{equation}
d^2\alpha+d^4(1-\alpha)=\tr\biggl[\biggl(\sum_j \Pi_j\biggr)^2\biggr]\geq  \frac{[\tr(\sum_j \Pi_j)]^2}{d}\geq \frac{\bigl[\sum_j\sqrt{\tr(\Pi_j^2)} \bigr]^2}{d}=d^3.
\end{equation}
 Here the second inequality is saturated if and only if all $\Pi_j$ have rank one. In that case, $\{\Pi_j\}$ is a SIC according to \thref{thm:MaximalEAL} (assuming $\alpha\neq0$), so $\sum_j\Pi_j=d$ and the first inequality is saturated automatically.
\end{proof}

\subsection{Minimal 2-designs}
Consider a weighted
set of states $\{|\psi_j\rangle, w_j\}$ with $0<w_j\leq 1$ and
$\sum_j w_j=d$. Given a positive integer $t$,  the order-$t$
\emph{frame potential} $\Phi_t$~\cite{ReneBSC04,Scot06} is defined
as
\begin{equation}
\label{eq:t-design} \Phi_t=\sum_{j,k}w_jw_k|\langle
\psi_j|\psi_k\rangle|^{2t}=\tr(B_t^2),\quad
B_t=\sum_{j}w_j(|\psi_j\rangle\langle \psi_j|)^{\otimes t}.
\end{equation}
Note that $B_t$ is supported on the $t$-partite symmetric subspace,
whose dimension is $\binom{d+t-1}{t}$. The frame potential $\Phi_t$
is bounded from below by $d^2\binom{d+t-1}{t}^{-1}$, and the bound
is saturated if and only if $B_t=d\binom{d+t-1}{t}^{-1}P_t$, where
$P_t$ is the projector onto the $t$-partite symmetric subspace.  The
weighted set $\{|\psi_j\rangle,w_j\}$ is  a (complex-projective)
\emph{weighted $t$-design} if the lower bound is saturated; it is a
$t$-design\index{$t$-design} if, in addition, all the weights $w_j$
are equal \cite{Hogg82,Hogg92,ReneBSC04, Scot06,RoyS07}. It follows from the  definition that a
weighted $t$-design is also a weighted $t^\prime$-design for
$t^\prime<t$.

For any pair of positive integers $d$ and $t$, there exists a
(weighted) $t$-design with a finite number of elements~\cite{SeymZ84}. The number  is bounded from below by 
\begin{equation}\label{eq:tdesignLBound} \binom{d+\lceil
t/2\rceil-1}{\lceil t/2\rceil}\binom{d+\lfloor t/2\rfloor-1}{
\lfloor t/2\rfloor},
\end{equation}
where $\lceil
t/2\rceil$ denotes the smallest integer not smaller than $t/2$, and $\lfloor t/2\rfloor$ the largest integer not larger than $t/2$ \cite{Hogg82, Leve98, Scot06}.
The bound is equal to $d,d^2,d^2(d+1)/2$ for $t=1,2,3$, respectively.
Any resolution of the identity consisting of pure states is a
weighted 1-design. SICs \cite{Zaun11,
ReneBSC04, Scot06, ScotG10} and complete sets of
MUB \cite{Ivan81,
WootF89, DurtEBZ10} are prominent examples of 2-designs.

Here we are mainly  interested in weighted
2-designs and their connection with SICs \cite{ReneBSC04,Rene04the, Scot06, RoyS07, Kim07, Zaun11, ZhuE11, Zhu12the}. In particular, we rederive the result of Scott that any minimal weighted 2-design is a SIC~\cite{Scot06}.
\begin{theorem}\label{thm:minimal2design}
Suppose $\{ \ket{\psi_j}, w_j>0\}$ is  a weighted 2-design with $n$ elements in dimension~$d$, that is,
\begin{equation}\label{eq:minimal2design}
\sum_j w_j (\outer{\psi_j}{\psi_j})^{\otimes 2}=\frac{2}{d+1}P_\mrm{s}.
\end{equation}
Then $n\geq d^2$ and  the lower bound is saturated if and only if
 $w_j=1/d$ and $\{\outer{\psi_j}{\psi_j} \}$ is a SIC.
\end{theorem}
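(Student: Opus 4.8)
The plan is to absorb the weights into the operators and reduce the whole statement to the machinery already established in \lref{lem:ComInf} and \thref{thm:GeoComInf}. First I would set $L_j := \sqrt{w_j}\,\outer{\psi_j}{\psi_j}$, which are Hermitian since $w_j>0$. Then $L_j\otimes L_j = w_j (\outer{\psi_j}{\psi_j})^{\otimes2}$, so the defining condition \eqref{eq:minimal2design} becomes exactly
\begin{equation}
\sum_j L_j\otimes L_j=\frac{2}{d+1}P_\mrm{s}.
\end{equation}
Matching this against the form $(\beta+\alpha)P_\mrm{s}+(\beta-\alpha)P_\mrm{a}$ appearing in \eqref{eq:GG2design}, and using that the splitting into symmetric and antisymmetric parts is unique, I read off $\beta-\alpha=0$ and $\beta+\alpha=2/(d+1)$, i.e. $\alpha=\beta=1/(d+1)$.

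For the lower bound I would simply invoke the spanning criterion of \lref{lem:ComInf}. Since $\alpha=1/(d+1)>0$ and $\alpha+d\beta=1>0$, the lemma guarantees that $\{L_j\}$ spans $\mathcal{B}(\mathcal{H})$; as $\dim\mathcal{B}(\mathcal{H})=d^2$, any spanning set has at least $d^2$ elements, giving $n\ge d^2$ at once. For the saturation case, suppose $n=d^2$. Then $\{L_j\}$ is a spanning set of cardinality $d^2$, hence a basis, so \thref{thm:GeoComInf} applies and \eqref{eq:Ginner} holds with the above $\alpha,\beta$ and $\gamma=\beta/(\alpha+d\beta)=1/(d+1)$. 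Using $\tr(L_j)=\sqrt{w_j}$ and $\tr(L_j^2)=w_j$, the diagonal entries $j=k$ of \eqref{eq:Ginner} read $w_j=\tfrac{1}{d+1}(1+w_j)$, which forces $w_j=1/d$ for every $j$. Substituting $w_j=w_k=1/d$ into the off-diagonal entries, where $\tr(L_jL_k)=\sqrt{w_jw_k}\,|\inner{\psi_j}{\psi_k}|^2$, then yields $|\inner{\psi_j}{\psi_k}|^2=1/(d+1)$ for $j\ne k$, which is precisely the SIC condition. The converse is a one-line check: if the $\outer{\psi_j}{\psi_j}$ form a SIC with uniform weights $w_j=1/d$, then $\sum_j w_j(\outer{\psi_j}{\psi_j})^{\otimes2}=\tfrac{1}{d}\cdot\tfrac{2d}{d+1}P_\mrm{s}=\tfrac{2}{d+1}P_\mrm{s}$, so \eqref{eq:minimal2design} holds with $n=d^2$.

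Essentially all of the conceptual content sits in the opening reduction: once the weights are folded into the $L_j$, the hypothesis is exactly the special case $\alpha=\beta=1/(d+1)$ of the earlier results, and the only genuine computation is the diagonal identity that pins down $w_j=1/d$. I therefore do not expect a real obstacle; the one point to state carefully is that a spanning set of cardinality equal to $d^2$ is automatically linearly independent, so that $\{L_j\}$ is a bona fide basis and \thref{thm:GeoComInf} (rather than only \lref{lem:ComInf}) is legitimately available in the saturation case.
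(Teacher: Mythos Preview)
Your proof is correct and follows essentially the same route as the paper: define $L_j=\sqrt{w_j}\,\outer{\psi_j}{\psi_j}$, identify $\alpha=\beta=1/(d+1)$ via \lref{lem:ComInf}, use the resulting full rank of the frame superoperator to get $n\ge d^2$, and in the saturation case read the Gram matrix off the basis identity. The only cosmetic difference is that the paper invokes \crref{cor:GeoComInf} (items~\ref{it:3} and~\ref{it:5}, using $[\tr(L_j)]^2/\tr(L_j^2)\equiv1$) to obtain $\tr(L_jL_k)=(d\delta_{jk}+1)/[d(d+1)]$ in one stroke, whereas you compute directly from \eref{eq:Ginner} with $\gamma=1/(d+1)$; both arrive at $w_j=1/d$ and the SIC overlaps by the same elementary algebra.
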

\begin{proof}
Let $L_j=\sqrt{w_j}\outer{\psi_j}{\psi_j}$, then $\tr(L_j^2)=[\tr(L_j)]^2$. If \eref{eq:minimal2design} holds, then $\sum_j\douter{L_j}{L_j}=(\mathbf{I}+\douter{1}{1})/(d+1)$ by the isomorphism between   $\mathcal{B}(\mathcal{H})\otimes\mathcal{B}(\mathcal{H})$ and $\mathcal{B}(\mathcal{B}(\mathcal{H}))$  (see \lref{lem:ComInf}). Therefore, $\sum_j\douter{L_j}{L_j}$ has rank $d^2$ and $n\geq d^2$.
When $n=d^2$, \crref{cor:GeoComInf} (see items~3 and~5) with $\alpha=\beta=1/(d+1)$ applied to $\{L_j\}$ yields
\begin{equation}
\tr(L_jL_k)=\frac{d\delta_{jk}+1}{d(d+1)}.
\end{equation}
Therefore, $w_j=1/d$ and $\{\outer{\psi_j}{\psi_j}\}$ is a SIC.
\end{proof}

\begin{corollary}\label{cor:bound}
Suppose $L_j\in \mathcal{B}(\mathcal{H})$ are $d^2$ positive operators that satisfy the three equivalent equations in \thref{thm:GeoComInf}. Then $\beta\geq \alpha$ and
$1/(d+1)\leq \gamma<1/d $. The  lower bounds for $\beta$ and~$\gamma$ are saturated if and only if $\{\sqrt{d/(d+1)\alpha} L_j\}$ is a SIC.
\end{corollary}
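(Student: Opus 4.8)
The plan is to feed the one hypothesis not yet exploited---that the $L_j$ are positive---into the framework of \thref{thm:GeoComInf}. The single elementary fact I need is that for any positive operator $A$ on $\mathcal{H}$, with eigenvalues $\lambda_1,\dots,\lambda_d\ge0$,
\begin{equation*}
[\tr(A)]^2=\Bigl(\sum_i\lambda_i\Bigr)^2=\sum_i\lambda_i^2+\sum_{i\neq i'}\lambda_i\lambda_{i'}\ge\tr(A^2),
\end{equation*}
with equality precisely when at most one eigenvalue is nonzero, \emph{i.e.}\ when $A$ has rank at most $1$. First I would apply this to each $L_j$, sum over $j$, and substitute the trace identities \eqref{eq:LjSumA} and \eqref{eq:LjSumB}. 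The summed inequality then reads $d^2\alpha+d\beta\le d\alpha+d^2\beta$, that is $d(d-1)(\beta-\alpha)\ge0$, which gives $\beta\ge\alpha$ at once (the case $d=1$ being vacuous).

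Next I would convert this into the bound on $\gamma$. Since $\gamma=\beta/(\alpha+d\beta)$ with $\partial\gamma/\partial\beta=\alpha/(\alpha+d\beta)^2>0$ for fixed $\alpha>0$, the quantity $\gamma$ is increasing in $\beta$, so $\beta\ge\alpha$ forces $\gamma\ge\alpha/((d+1)\alpha)=1/(d+1)$. The companion bound $\gamma<1/d$ is already recorded in \eqref{eq:gamma} and needs only $\alpha>0$. The same monotonicity shows the two lower bounds are saturated simultaneously: $\beta=\alpha$ if and only if $\gamma=1/(d+1)$, so it suffices to analyze the single equality $\beta=\alpha$.

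For the saturation clause I would argue as follows. Equality $\beta=\alpha$ means equality in the summed inequality, hence in every summand, so each $L_j$ has rank at most $1$; since $\{L_j\}$ is a basis none is zero, so each $L_j=c_j\outer{\phi_j}{\phi_j}$ is rank $1$ with $c_j=\tr(L_j)>0$. Reading off the diagonal $j=k$ of \eqref{eq:Ginner} gives $c_j^2=\alpha+\gamma c_j^2$, whence $c_j^2=\alpha/(1-\gamma)=\alpha(d+1)/d$ independently of $j$; thus $\Pi_j:=\sqrt{d/((d+1)\alpha)}\,L_j=\outer{\phi_j}{\phi_j}$ are rank-$1$ projectors, and the off-diagonal of \eqref{eq:Ginner} collapses to $\tr(\Pi_j\Pi_k)=(d\delta_{jk}+1)/(d+1)$, exhibiting $\{\Pi_j\}$ as a SIC. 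The converse is immediate: if the rescaled operators form a SIC then each $L_j$ is a positive multiple of a projector, hence rank $1$, which reinstates equality in every summand and returns $\beta=\alpha$, $\gamma=1/(d+1)$. The conceptual crux is really just recognizing that positivity supplies the inequality $\tr(A^2)\le[\tr(A)]^2$; the only step with genuine content beyond that is threading its equality case through \eqref{eq:Ginner} to pin down the common weight and recover the exact SIC fidelities, and even there the remaining work is a one-line substitution.
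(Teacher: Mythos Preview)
Your proof is correct, but it takes a different route from the paper's. The paper obtains $\beta\ge\alpha$ directly from \eref{eq:G2design}: since each $L_j$ is positive, $\sum_j L_j\otimes L_j$ is positive semidefinite, and its coefficient on $P_{\mrm{a}}$ is $\beta-\alpha$, so $\beta\ge\alpha$. For the equality case the paper argues that $\beta=\alpha$ forces $\sum_j L_j\otimes L_j=2\alpha P_{\mrm{s}}$, hence each $L_j\otimes L_j$ is supported on the symmetric subspace, which means $L_j$ has rank one; the SIC conclusion is then delegated to \thref{thm:minimal2design}. You instead work entirely with scalars: the elementary inequality $\tr(A^2)\le[\tr(A)]^2$ for positive $A$, summed and combined with \eqref{eq:LjSumA}--\eqref{eq:LjSumB}, yields $\beta\ge\alpha$ with the same rank-one equality case, and you then read the SIC fidelities straight out of \eref{eq:Ginner} without invoking \thref{thm:minimal2design}. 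Your argument is more self-contained and avoids the tensor-product machinery; the paper's argument ties the result more visibly to the $2$-design picture that motivates the whole section. Both are short, and at bottom both exploit the same fact---positivity forces the antisymmetric piece to vanish at equality---just viewed through different lenses.
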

\begin{proof}
The inequality $\beta\geq \alpha$ follows from \eref{eq:G2design} and the observation that $\sum_j L_j\otimes L_j$ is positive semidefinite. If  $\beta=\alpha$,
then $\sum_j L_j\otimes L_j=2\alpha P_\mrm{s}$, so $L_j\otimes L_j$ are supported on the symmetric subspace, which implies that  $L_j$ have rank one. According to \thref{thm:minimal2design}, $\{\sqrt{d/(d+1)\alpha} L_j\}$ is a SIC. The upper bound for $\gamma$ follows from \eref{eq:gamma}; the lower bound and the equality condition follow from the equality $\gamma=\beta/(\alpha+d\beta)$. Alternatively, the two bounds for $\gamma$ can be established by virtue of \esref{eq:gamma} and \eqref{eq:alphabeta}.
\end{proof}

\Thref{thm:minimal2design} and \crref{cor:bound} have an interesting consequence in entanglement theory as observed by Chen \cite{Chen13}: If the bipartite state $2P_\mrm{s}/d(d+1)$ can be written as a convex combination of $n$ product states $\sum_j w_j \rho_j\otimes \rho_j^\prime$, then $n\geq d^2$, and the lower bound is saturated if and only if $w_j=1/d^2$, $\rho_j^\prime=\rho_j$,  and $\{\rho_j\}$ is a SIC.

\subsection{\label{sec:TightIC}Tight informationally complete measurements}
A generalized measurement $\{ E_j\}$ is a \emph{tight IC measurement}~\cite{Scot06} if
\begin{equation}\label{eq:TightIC}
\mathcal{F}:=d\sum_j\frac{\douter{ E_j}{ E_j}}{\tr( E_j)}=\alpha \mathbf{I}+\beta\douter{1}{1}
\end{equation}
for some positive constants $\alpha,\beta$. In linear state tomography, one  needs to invert the frame superoperator $\mathcal{F}$ to compute the reconstruction operators, which is generally complicated. Tight IC measurements are characterized by particular simple frame superoperators and thus easy state reconstruction \cite{Scot06, ZhuE11, Zhu12the}.

Multiplying \eref{eq:TightIC} by $\dket{1}$ on the right gives $\alpha+d\beta=d$. Taking the trace of the equation yields
\begin{equation}
d^2\alpha+d\beta=d\sum_j\frac{\tr( E_j^2)}{\tr( E_j)}\leq d\sum_j\tr( E_j)=d^2,
\end{equation}
which implies that $\alpha\leq d/(d+1)$, and the inequality is saturated if and if all $ E_j$ have rank one. In linear state tomography with tight IC measurements, the resource required to reach a given precision is roughly inversely proportional to $\alpha$ \cite{Scot06, ZhuE11, Zhu12the}.
Therefore, a tight IC measurement with $\alpha=d/(d+1)$ is called efficient.

If the upper bound $\alpha=d/(d+1)$ is saturated and $ E_j=w_j\outer{\psi_j}{\psi_j}$, then $\{\ket{\psi_j}, w_j\}$ satisfies \eref{eq:minimal2design}
according to \eref{eq:TightIC} and \crref{cor:GeoComInfSpecial} and  is thus a weighted 2-design. Conversely, every weighted 2-design defines an efficient tight IC measurement. Now application of  \thref{thm:minimal2design} reproduces a well-known result of Scott~\cite{Scot06}.
\begin{theorem}
A rank-1 measurement $\{ E_j=w_j\outer{\psi_j}{\psi_j}\}$ with $n$ elements is a tight IC measurement if and only if $\{\ket{\psi_j}, w_j\}$ is a weighted 2-design.   It is a minimal tight IC measurement \textup{(}that is $n=d^2$\textup{)} if and only if $w_j=1/d$ and $\{\outer{\psi_j}{\psi_j} \}$ is a SIC.  A measurement with $d^2$ outcomes is an efficient tight IC measurement if and only if it is a SIC measurement.
\end{theorem}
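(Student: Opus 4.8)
The plan is to reduce all three assertions to \lref{lem:ComInf} and \thref{thm:minimal2design} by passing from the measurement operators $E_j=w_j\outer{\psi_j}{\psi_j}$ to the rescaled Hermitian operators $L_j=\sqrt{w_j}\,\outer{\psi_j}{\psi_j}$. The starting observation is that for a rank-$1$ measurement $\tr(E_j)=w_j$, so that $\douter{E_j}{E_j}/\tr(E_j)=\douter{L_j}{L_j}$ and hence $\mathcal{F}=d\sum_j\douter{L_j}{L_j}$. The tight IC condition $\mathcal{F}=\alpha\mathbf{I}+\beta\douter{1}{1}$ is therefore equivalent to $\sum_j\douter{L_j}{L_j}=(\alpha/d)\mathbf{I}+(\beta/d)\douter{1}{1}$, which by the isomorphism in \lref{lem:ComInf} is in turn equivalent to $\sum_j L_j\otimes L_j=\frac{\alpha+\beta}{d}P_{\mrm{s}}+\frac{\beta-\alpha}{d}P_{\mrm{a}}$.

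To prove the first equivalence I would then exploit that each $L_j$ has rank $1$, so that every $L_j\otimes L_j$ is supported on the symmetric subspace and the $P_{\mrm{a}}$ component of the right-hand side must vanish; this forces $\alpha=\beta$. Combining this with the normalization $\alpha+d\beta=d$ (obtained, exactly as in the paragraph preceding the theorem, by right-multiplying the tight IC equation by $\dket{1}$ and using $\sum_j E_j=1$) pins down $\alpha=\beta=d/(d+1)$, whence $\sum_j L_j\otimes L_j=\frac{2}{d+1}P_{\mrm{s}}$, which is precisely the weighted $2$-design condition~\eqref{eq:minimal2design}. Conversely, the weighted $2$-design condition yields this same equation, and running \lref{lem:ComInf} backwards returns a tight IC frame superoperator; since a weighted $2$-design is in particular a weighted $1$-design, $\sum_j E_j=\sum_j w_j\outer{\psi_j}{\psi_j}=1$, so one genuinely has a measurement. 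This settles the first "if and only if", and along the way shows that every rank-$1$ tight IC measurement is automatically efficient.

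The second statement is then immediate: by the first part a minimal ($n=d^2$) rank-$1$ tight IC measurement is a $d^2$-element weighted $2$-design, and \thref{thm:minimal2design} asserts exactly that such a design must have $w_j=1/d$ with $\{\outer{\psi_j}{\psi_j}\}$ a SIC, and conversely. For the third statement I would drop the rank-$1$ hypothesis and instead invoke the trace inequality recorded just before the theorem: efficiency $\alpha=d/(d+1)$ saturates that bound, which occurs if and only if every $E_j$ has rank $1$. Thus an efficient tight IC measurement with $d^2$ outcomes is automatically a minimal rank-$1$ tight IC measurement, so the second statement identifies it with a SIC measurement; the converse holds because a SIC is a weighted $2$-design with $w_j=1/d$, hence tight IC by the first part, and rank $1$, hence efficient.

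The only step with any real content is the reduction in the first paragraph combined with the rank-$1$ argument forcing $\alpha=\beta$; everything afterwards is bookkeeping built on \lref{lem:ComInf}, \thref{thm:minimal2design}, and the trace inequality, so I anticipate no serious obstacle. The two points to watch are keeping the $\mathcal{F}$-level constants $(\alpha,\beta)$ distinct from the $\sum_j\douter{L_j}{L_j}$-level constants $(\alpha/d,\beta/d)$, and verifying separately that the families under discussion actually resolve the identity, so that "tight IC measurement" refers to a genuine measurement rather than merely to an informationally complete set.
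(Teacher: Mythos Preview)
Your proposal is correct and follows essentially the same approach as the paper. The only cosmetic difference is that, for the forward direction of the first equivalence, the paper pins down $\alpha=\beta=d/(d+1)$ via the trace inequality (whose equality condition characterizes the rank-$1$ case) and then invokes \crref{cor:GeoComInfSpecial}, whereas you invoke \lref{lem:ComInf} directly and use the fact that $L_j\otimes L_j$ is supported on the symmetric subspace to kill the $P_{\mrm{a}}$ coefficient; these are equivalent routes through the same machinery, and the remaining two assertions are handled identically via \thref{thm:minimal2design} and the trace inequality.
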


\subsection{Generalization}
In this section we present a  result which will be needed in the sequel. It concerns   a special case of \thref{thm:GeoComInf} which  generalizes the connection between weighted 2-designs and SICs discussed above. Measurements of this form  are often used to model real experiments when there is white noise~\cite{ZhuE11}. Since the result is somewhat technical we relegate the proof to the appendix.

\begin{theorem}\label{thm:SIC}
Suppose  $\{L_j\} $ is a set of  $d^2$  Hermitian operators in $\mathcal{B}(\mathcal{H})$  satisfying the two equivalent \esref{eq:GG2design} and \eqref{eq:GGframeSuper} of \lref{lem:ComInf} with $\alpha > 0$.  Suppose also that each $L_j$ is a linear combination of a rank-1 projector and the identity. If $d\geq3$,  then there exists a SIC  $\{\Pi_{j}\}$ such that  $L_j = a_j \Pi_j + b_j$ with
\begin{equation}\label{eq:ajbj}
\quad a_j = \epsilon_j \sqrt{\frac{\alpha(d+1)}{d}},   \quad b_j = -\frac{a_j}{d} \left( 1 - \epsilon \sqrt{\frac{\alpha+d\beta}{\alpha(d+1)}}\right),
\end{equation}
where  the $\epsilon_j$ are  signs and $\epsilon$ is a fixed sign.

The same conclusion holds when $d=2$  if in addition $\{L_j\} $ is a basis  for $\mathcal{B}(\mathcal{H})$  and  one of the following holds,
\begin{enumerate}
\item $\beta \neq 0$ and one of  statements \ref{it:1} to \ref{it:8} in \crref{cor:GeoComInf} is true.
\item $\beta = 0$ and one of statements \ref{it:1}, \ref{it:3}, \ref{it:4}, \ref{it:7}, \ref{it:8} in \crref{cor:GeoComInf} is true.
\end{enumerate}
\end{theorem}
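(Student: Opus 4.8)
The plan is to split every $L_j$ into its trace and traceless parts and feed the result into the generalized $2$-design equation \eqref{eq:GG2design}. Writing $L_j = a_j \Pi_j + b_j$ (with $b_j$ multiplying the identity) and setting $\tilde\Pi_j := \Pi_j - \tfrac1d$, so that $a_j\tilde\Pi_j$ is the traceless part of $L_j$ and $\tr(L_j) = a_j + d b_j$, I would first rewrite the right-hand side of \eqref{eq:GG2design} as $\beta\, 1\otimes 1 + \alpha(P_{\mrm s}-P_{\mrm a})$, using $P_{\mrm s}+P_{\mrm a}=1\otimes 1$ and the fact (from \lref{lem:ComInf}) that $P_{\mrm s}-P_{\mrm a}$ is the swap operator. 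The space $\mathcal B(\mathcal H)\otimes\mathcal B(\mathcal H)$ splits into the four sectors determined by the trace/traceless decomposition of each factor, and the idea is to read off \eqref{eq:GG2design} sector by sector. The traceless--traceless sector is the cleanest: only the $a_j^2\,\tilde\Pi_j\otimes\tilde\Pi_j$ terms survive on the left, and only the traceless part of the swap survives on the right, giving
\begin{equation*}
\sum_j a_j^2\, \tilde\Pi_j\otimes\tilde\Pi_j = \alpha\Bigl(P_{\mrm s}-P_{\mrm a}-\tfrac1d\, 1\otimes 1\Bigr),
\end{equation*}
while \eqref{eq:LjSumA} and \eqref{eq:LjSumB} give $\sum_j a_j^2 = \alpha d(d+1)$.

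The crux is to extract from this identity that the positive operator $W:=\sum_j a_j^2\Pi_j$ is a multiple of the identity. Restricting the displayed equation to the antisymmetric subspace and using $P_{\mrm a}(\Pi_j\otimes\Pi_j)P_{\mrm a}=0$ (each $\Pi_j\otimes\Pi_j$ lives in the symmetric subspace), the $\tilde\Pi_j$ cross terms assemble into $W$ and one finds $P_{\mrm a}(W\otimes 1 + 1\otimes W)P_{\mrm a} = 2\alpha(d+1)P_{\mrm a}$. Diagonalizing $W$, the operator $W\otimes 1 + 1\otimes W$ acts on the antisymmetric vector built from eigenvectors $e_i,e_k$ as multiplication by $\lambda_i+\lambda_k$, so the condition reads $\lambda_i+\lambda_k = 2\alpha(d+1)$ for every pair $i\neq k$. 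This is precisely the step that forces $d\ge 3$: with at least three eigenvalues the equal-pair-sum condition forces all $\lambda_i$ to coincide, so $W=\alpha(d+1)\,1$, whereas for $d=2$ the antisymmetric subspace is one-dimensional and only $\tr(W)$ is pinned down. I expect this to be the main obstacle.

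Once $W=\alpha(d+1)\,1$ is in hand, substituting it back into the symmetric sector (where now $P_{\mrm s}(\Pi_j\otimes\Pi_j)P_{\mrm s}=\Pi_j\otimes\Pi_j$) collapses everything to $\sum_j a_j^2\,\Pi_j\otimes\Pi_j = 2\alpha P_{\mrm s}$. Thus $\{\Pi_j, w_j\}$ with $w_j := a_j^2/[\alpha(d+1)]$ is a weighted $2$-design with $d^2$ elements and $\sum_j w_j = d$, and \thref{thm:minimal2design} immediately forces $w_j = 1/d$ and $\{\Pi_j\}$ to be a SIC. This gives $a_j^2 = \alpha(d+1)/d$, hence the stated $a_j = \epsilon_j\sqrt{\alpha(d+1)/d}$. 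To pin down $b_j$ I would use the remaining (traceless)$\otimes$(trace) sector, which reads $\sum_j a_j\tr(L_j)\,\tilde\Pi_j = 0$; since the SIC vectors obey the single linear relation $\sum_j\tilde\Pi_j=0$ (the image of $\sum_j\Pi_j=d\cdot 1$), this forces $a_j\tr(L_j)$ to be independent of $j$, and then \eqref{eq:LjSumB} fixes $|\tr(L_j)| = \sqrt{(\alpha+d\beta)/d}$. Combining $\tr(L_j)=a_j+d b_j$ with these two facts produces the stated $b_j$, the global sign $\epsilon$ being the common sign of $a_j\tr(L_j)$.

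Finally, for $d=2$ only the deduction $W\propto 1$ breaks down, and the supplementary hypotheses are tailored to repair it. With $\{L_j\}$ a basis, the listed statements of \crref{cor:GeoComInf} supply both the constancy of $|a_j|$ (via \eqref{eq:Lj} and statements \ref{it:2} or \ref{it:4}) and the centering condition $\sum_j\epsilon_j L_j\propto 1$ (statement \ref{it:7}), that is, exactly the regular-simplex data needed to conclude $W\propto 1$ and rerun the argument above. Here one also exploits a feature special to $d=2$: since $1-\Pi_j$ is again rank $1$, each $L_j$ admits the alternative decomposition with $\Pi_j$ replaced by $1-\Pi_j$, and this sign freedom is what allows four centered equiangular Bloch vectors to be arranged into a genuine regular tetrahedron, i.e.\ a SIC.
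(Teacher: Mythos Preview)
Your argument is correct and follows the same route as the paper: both project \eqref{eq:GG2design} onto the antisymmetric subspace, use that $P_{\mrm a}(X\otimes 1+1\otimes X)P_{\mrm a}\propto P_{\mrm a}$ forces $X\propto 1$ precisely when $d\ge 3$, and then invoke \thref{thm:minimal2design} to conclude that $\{\Pi_j\}$ is a SIC.  The only difference is bookkeeping---the paper expands $L_j\otimes L_j$ directly and shows $A=\sum_j a_jb_j\Pi_j\propto 1$, whereas your trace/traceless sector decomposition leads you instead to $W=\sum_j a_j^2\Pi_j\propto 1$ and then to the constancy of $a_j\tr(L_j)$ via the mixed sector; for $d=2$ the paper computes the two ($j$-independent) eigenvalues of $\epsilon_jL_j$ and checks directly that the eigenprojectors form a SIC, which is just an explicit version of your Bloch-tetrahedron picture.
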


\begin{remark}
To see why we need to treat the case $d=2$  separately, note that any orthonormal basis $\{L_j\}$ would satisfy the conditions of the first part of the theorem with $\alpha=1$ and $\beta=0$. But this basis usually cannot be written in the form as specified in the theorem.
This is a consequence of the special features of the two-dimensional state space, such as the fact that \emph{every} Hermitian operator is a linear combination of a rank-$1$ projector and the identity, usually in two different ways.
\end{remark}
\begin{proof}
See the appendix.
\end{proof}

\section{\label{sec:AdjointUnitary}Group theoretic formulations of the SIC existence problem}
In this section we present two, closely related, group theoretic formulations of the SIC existence problem.  Group theory is a very rich and well-studied subject.  There are therefore grounds for hoping  that our results will make the SIC existence problem more tractable.  Our results may also be found  interesting from a group theoretic perspective.

 Let $\mrm{O}(d^2)$ be the group of  orthogonal $d^2\times d^2$ matrices.   We say that $R\in \mrm{O}(d^2)$  is of stochastic type if it is of the form
\begin{equation}
R = (d+1) S - d P,
\end{equation}
where $S$ is a doubly stochastic matrix \cite{MarsOA11book,HornJ85book},
and $P$ is the rank-$1$ projector defined in \eref{eq:AllOneMatrix}.
Note that $SP=PS=P$.  We say that a subgroup $G\subseteq \mrm{O}(d^2)$ is stochastic if it consists of matrices of stochastic type.  Any stochastic subgroup of $\mrm{O}(d^2)$ is contained in the subgroup $\{R\in \mrm{O}(d^2): \sum_k R_{jk}=1\}$ (note that $\sum_j R_{jk}=1$ is implicit from the definition), which is isomorphic to $\mrm{O}(d^2-1)$.

 Let $\mrm{PU}(d)$ be the projective unitary group in dimension~$d$ (\emph{i.e.}\ the unitary group $\mrm{U}(d)$ \emph{modulo} its center). We then have the following  characterization of the SIC existence problem:
\begin{theorem} \label{thm:StochType}
A SIC exists in dimension~$d$ if and only if $\mrm{O}(d^2)$ contains a  stochastic subgroup isomorphic to  $\mrm{PU}(d)$.
\end{theorem}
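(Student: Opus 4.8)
The plan is to obtain both implications from the adjoint action of $\mrm{U}(d)$ written in the basis furnished by a SIC. For the forward implication, suppose a SIC $\{\Pi_j\}$ exists. Since it is a basis, I would define $R(U)\in M_{d^2}(\mathbb{R})$ by $U\Pi_j U^\dag=\sum_k R(U)_{kj}\Pi_k$, and reconstruct the coefficients using the frame relation $\sum_k\douter{\Pi_k}{\Pi_k}=\frac{d}{d+1}(\mathbf{I}+\douter{1}{1})$ of \crref{cor:GeoComInfSpecial}. This yields
\[
R(U)_{kj}=\tfrac{d+1}{d}\tr(\Pi_k U\Pi_j U^\dag)-\tfrac1d ,
\]
so that $R(U)=(d+1)S(U)-dP$ with $S(U)_{kj}=\frac1d\tr(\Pi_k U\Pi_j U^\dag)\ge 0$. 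Using $\sum_k\Pi_k=d\,\mathbf 1$, where $\mathbf 1$ is the identity operator, one checks directly that the rows and columns of $S(U)$ sum to $1$, so $R(U)$ is of stochastic type. To see that $R(U)\in\mrm{O}(d^2)$, note that conjugation preserves the Hilbert--Schmidt inner product, hence $R(U)$ preserves the Gram matrix of the SIC; since that Gram matrix is a positive combination of the identity matrix and $P$, and stochastic type forces $R(U)P=PR(U)=P$, this preservation collapses to $R(U)^{\rmT}R(U)=I$. Finally $U\mapsto R(U)$ is a homomorphism whose kernel consists of the unitaries commuting with every $\Pi_j$, i.e. the center; its image is therefore a stochastic subgroup isomorphic to $\mrm{PU}(d)$.

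For the converse, let $G\subseteq\mrm{O}(d^2)$ be a stochastic subgroup and $\rho\colon\mrm{PU}(d)\xrightarrow{\sim}G$ an isomorphism. Every matrix of stochastic type fixes the all-ones vector $u=(1,\dots,1)^{\rmT}$ (its rows sum to $1$), so $\rho$ respects the orthogonal splitting $\mathbb{R}^{d^2}=\langle u\rangle\oplus u^{\perp}$ and acts faithfully on the $(d^2-1)$-dimensional space $u^{\perp}$. The representation-theoretic input I would invoke is that the smallest nontrivial real representation of $\mrm{PU}(d)$ is its adjoint representation on the traceless Hermitian operators, of real dimension exactly $d^2-1$. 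Since $\rho$ is faithful and $\mrm{PU}(d)$ is simple, every nontrivial subrepresentation already has dimension $\ge d^2-1$; a dimension count then forces $u^{\perp}$ to carry precisely one copy of the adjoint representation, with no additional fixed vectors.

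By Schur's lemma the intertwiner between $u^{\perp}$ and the traceless Hermitian operators is unique up to a positive scale; extending it by $u\mapsto\mathbf 1$ and fixing the two remaining scales appropriately produces a linear map $\Phi$ intertwining $\rho$ with the adjoint action. Writing $\Pi_j=\Phi(e_j)$ for the standard basis $\{e_j\}$ of $\mathbb{R}^{d^2}$, the image is a regular simplex of Hermitian operators with $\sum_j\Pi_j=d\,\mathbf 1$ and $\tr(\Pi_j\Pi_k)=(d\delta_{jk}+1)/(d+1)$, so $\{\Pi_j\}$ meets every hypothesis of \crref{cor:GeoComInfSpecial} and \thref{thm:SIC} except possibly positivity. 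Intertwining gives $U\Pi_j U^\dag=\sum_k R_{kj}\Pi_k$ with $R=\rho(\,\cdot\,)$, and the non-negativity of the doubly stochastic part $S=\frac1{d+1}(R+dP)$ translates into
\[
\tr(\Pi_l U\Pi_j U^\dag)\ge 0\qquad\text{for all }U\in\mrm{U}(d)\text{ and all }l,j .
\]
Minimizing the left-hand side over $U$ by the von Neumann (rearrangement) trace inequality, and combining the resulting spectral inequalities with $\tr\Pi_j=\tr(\Pi_j^2)=1$, forces the spectrum of each $\Pi_j$ to consist of one eigenvalue of multiplicity $1$ and a second of multiplicity $d-1$; equivalently, each $\Pi_j$ is a real linear combination of a rank-$1$ projector and the identity. \thref{thm:SIC} then manufactures an honest SIC, which is what we want.

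The forward direction is essentially a computation, so I expect the work to lie in the converse. Two points are delicate. The first is the representation-theoretic identification, which rests on knowing the minimal dimension of a faithful real representation of $\mrm{PU}(d)$. The second, and in my view the main obstacle, is the positivity upgrade: the trace relations by themselves are satisfied by simplices of \emph{indefinite} operators (for instance the spectrum $(\tfrac23,\tfrac23,-\tfrac13)$ when $d=3$), so the non-negativity encoded in the doubly stochastic structure is exactly the extra ingredient, and extracting from it the two-eigenvalue pattern demanded by \thref{thm:SIC} is the heart of the argument. Finally, $d=2$ must be handled separately, for the reasons noted in the remark following \thref{thm:SIC}.
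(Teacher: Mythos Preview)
Your approach is essentially the paper's, repackaged. For the forward direction the paper passes to the orthonormal basis $L_j=a\Pi_j+b$ of \eref{eq:AdjointBasis} and then reads off stochasticity from \thref{thm:SICunitary}; working directly in the $\Pi_j$ basis and deducing orthogonality from preservation of the SIC Gram matrix is a legitimate shortcut to the same group of matrices. For the converse, the paper also identifies the action on $u^\perp$ as the adjoint representation (this is \lref{lm:AdjRepPud}) and then invokes \thref{thm:SICunitary}; your ``positivity upgrade'' is exactly the saturation analysis in the proof of that theorem, and the key inference---that $\lambda^{j\uparrow}\cdot\lambda^{j\downarrow}\ge 0$ together with $\sum_k\lambda_{j,k}=\sum_k\lambda_{j,k}^2=1$ forces a two-eigenvalue spectrum---is precisely \lref{lem:inverseProdGen} with $r=s=1$. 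You correctly flag this step as the heart of the matter, but it is a genuine lemma, not a consequence of the von Neumann inequality alone, and you should cite or reprove it rather than assert the conclusion. Incidentally, your $d=2$ concern is unnecessary in your own setup: since $\beta=d/(d+1)\neq0$ and $\tr\Pi_j\equiv1$, the additional hypothesis that \thref{thm:SIC} requires for $d=2$ is already met.
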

\begin{remark}
This connection between orthogonal matrices and doubly stochastic matrices plays an important role in the study of the symmetry properties of maximal consistent sets \cite{FuchS13,ApplEF11,ApplFZ13M}.
\end{remark}

The proof of this theorem will be given below, after we have proved our second group theoretic formulation of the SIC existence problem. Let $L=\{L_j\}$ be an orthonormal basis for $\mathcal{B}(\mathcal{H})$ consisting of Hermitian operators.  For each $U \in \mrm{U}(d)$ let
\begin{equation}
U^L_{jk} = \tr(L_jUL_kU^{\dagger})
\end{equation}
be the adjoint representation matrix  of $U$ with respect to this basis.   Define
\begin{equation}\label{eq:MinimalEntry}
m(d,L):=\min_{U,j,k} \bigl(U^L_{jk}\bigr), \quad m(d) :=\max_{L} \bigl( m(d,L)\bigr),
\end{equation}
where the maximum in the second definition is taken over all orthonormal bases consisting of Hermitian operators.  Our second group theoretic formulation of the SIC existence problem may now be stated as follows:
\begin{theorem}\label{thm:SICunitary}
For all $d\ge 2$
\begin{equation}
m(d) \le -\frac{1}{d}.
\end{equation}
The inequality is saturated  if and only if a SIC exists in dimension~$d$.

The bases $L$ for which $m(d,L) = -1/d$ are precisely the ones of the form
\begin{equation}\label{eq:AdjointBasis}
L_j=a\Pi_j+b,  \qquad a = \epsilon \sqrt{\frac{d+1}{d}}, \quad b = -\frac{a}{d}\left(1 -\epsilon' \sqrt{ \frac{1}{d+1}}\right),
\end{equation}
where $\{\Pi_j\}$ is a SIC and $\epsilon$, $\epsilon'$ are fixed signs.
\end{theorem}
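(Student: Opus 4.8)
The plan is to reduce the entire theorem to the single structural statement $(\star)$: \emph{an orthonormal Hermitian basis $L$ satisfies $U^L_{jk}\ge -1/d$ for all $U,j,k$ if and only if it has the form \eqref{eq:AdjointBasis} for some SIC $\{\Pi_j\}$, in which case $\min_{U,j,k}U^L_{jk}=-1/d$.} This one statement yields everything. For an arbitrary basis $L$ there are two cases: either some entry $U^L_{jk}$ is strictly below $-1/d$, so that $m(d,L)<-1/d$; or every entry is $\ge -1/d$, in which case $(\star)$ forces the SIC form and the attained minimum $-1/d$. Either way $m(d,L)\le -1/d$, hence $m(d)\le -1/d$. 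Moreover $m(d)=-1/d$ can hold only if some basis has $m(d,L)=-1/d$, i.e.\ all its entries are $\ge -1/d$, which by $(\star)$ requires a SIC; conversely a SIC produces such a basis, and the extremal bases are then precisely those in \eqref{eq:AdjointBasis}.

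First I would record the elementary ingredients. The map $A\mapsto UAU^\dagger$ is a Hilbert--Schmidt isometry fixing $1$, so each $U^L$ is a real orthogonal matrix that fixes the vector with components $\tr(L_j)/\sqrt d$; writing $L_j=(\tr(L_j)/d)\,1+\tilde L_j$ with $\tilde L_j$ traceless gives the clean decomposition
\begin{equation}
U^L_{jk}=\frac{\tr(L_j)\tr(L_k)}{d}+\tr\!\big(\tilde L_j\,U\tilde L_k U^\dagger\big).
\end{equation}
For the reverse implication of $(\star)$, substituting $L_j=a\Pi_j+b$ from \eqref{eq:AdjointBasis} gives $U^L_{jk}=a^2|\langle\psi_j|U|\psi_k\rangle|^2+(2ab+b^2d)$; the fidelity term is $\ge 0$ and vanishes for suitable $U,j,k$, while the orthonormality conditions on $a,b$ force $2ab+b^2d=-1/d$. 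Thus every entry is $\ge -1/d$ and the value $-1/d$ is attained, establishing the reverse implication together with the claim that these bases realize $m(d,L)=-1/d$.

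The substance is the forward implication of $(\star)$. The key tool is the exact value of $\min_U\tr(\tilde L_j U\tilde L_k U^\dagger)$ supplied by the von Neumann trace inequality, namely $\sum_i\lambda^{(j)}_i\lambda^{(k)}_{d+1-i}$ with the eigenvalues of $\tilde L_j,\tilde L_k$ listed in opposite orders. The hypothesis $U^L_{jk}\ge -1/d$ then becomes a family of inequalities on the spectra of the $\tilde L_j$, and I would argue that, together with the orthonormal-basis constraints $\tr(L_jL_k)=\delta_{jk}$ (equivalently the frame identity $\sum_j\douter{L_j}{L_j}=\mathbf I$, i.e.\ \eqref{eq:GGframeSuper} with $\alpha=1,\beta=0$), these can hold only when each $\tilde L_j$ has exactly two distinct eigenvalues with multiplicities $1$ and $d-1$, so that each $L_j$ is a linear combination of a rank-$1$ projector and the identity. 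This spectral rigidity is the main obstacle: one must show that any deviation from the two-eigenvalue pattern produces, for some pair $(j,k)$ and some $U$, an entry strictly below $-1/d$, and the difficulty is that the inequalities couple the different $\tilde L_j$ only through the shared constraint that the $L_j$ form a basis, rather than through arbitrary test operators. Once the rank-$1$-plus-identity form is in hand, \thref{thm:SIC} (applied with $\alpha=1,\beta=0$, legitimate for $d\ge 3$; for $d=2$ one verifies the supplementary hypotheses listed there) immediately gives $L_j=a_j\Pi_j+b_j$ for a common SIC $\{\Pi_j\}$, with $|a_j|$ fixed by \eqref{eq:ajbj}.

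It remains to collapse the per-index signs $\epsilon_j$ into the single pair of signs in \eqref{eq:AdjointBasis}. Here I would invoke the hypothesis once more: with $\alpha=1,\beta=0$ one has $U^L_{jk}=\epsilon_j\epsilon_k\big(a_+^2\,|\langle\psi_j|U|\psi_k\rangle|^2-1/d\big)$ with $a_+^2=(d+1)/d$, so if two operators carried opposite signs $\epsilon_j=-\epsilon_k$, then choosing $U$ with $|\langle\psi_j|U|\psi_k\rangle|^2=1$ (always possible for $j\ne k$) would give $U^L_{jk}=-1<-1/d$, a contradiction. Hence all $\epsilon_j$ coincide, the coefficients $a_j,b_j$ are $j$-independent, and $L$ is exactly of the form \eqref{eq:AdjointBasis}. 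Assembling the pieces yields $m(d)\le -1/d$, with equality precisely when a SIC exists, and identifies the extremal bases.
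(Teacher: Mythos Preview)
Your overall architecture is sound---the reduction to the structural statement $(\star)$, the reverse implication via $2ab+b^2d=1-a^2=-1/d$, and the sign-collapse argument at the end are all correct and match the paper.  The genuine gap is precisely where you flag it: the ``spectral rigidity'' step.  You assert that the von Neumann bound together with the orthonormality constraints forces each $\tilde L_j$ to have only two distinct eigenvalues, but you do not supply an argument, and as you yourself note the difficulty is that the inequalities $U^L_{jk}\ge -1/d$ only couple the $\tilde L_j$ through the basis constraint.  Without this step the forward implication of $(\star)$, and hence the whole inequality $m(d)\le -1/d$, is not established.

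The paper closes this gap with a short inequality chain that bypasses the coupling issue entirely.  The key observation is to look only at the \emph{diagonal} entries $j=k$ and use \lref{lem:inverseProdGen}: for any vector $\lambda\in\mathbb R^d$ with $r=\sum\lambda_i$, $s=\sum\lambda_i^2$, one has $\lambda^\uparrow\!\cdot\lambda^\downarrow\le (r^2-s)/(d-1)$, with equality iff at least $d-1$ components coincide.  Applied to the eigenvalue vector of $L_j$ (so $s=\tr(L_j^2)=1$) this gives
\[
m(d,L)\le \min_j \lambda_j^\uparrow\!\cdot\lambda_j^\downarrow\le \min_j\frac{[\tr(L_j)]^2-1}{d-1}\le \frac{1}{d^2}\sum_j\frac{[\tr(L_j)]^2-1}{d-1}=-\frac1d,
\]
using the orthonormal-basis identity $\sum_j[\tr(L_j)]^2=\tr(1)=d$ in the last step.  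Saturation of the averaging inequality forces $|\tr(L_j)|=1/\sqrt d$ for all $j$ (which incidentally supplies the supplementary hypothesis you need for \thref{thm:SIC} when $d=2$), and saturation of the $\lambda^\uparrow\!\cdot\lambda^\downarrow$ bound forces the two-eigenvalue spectrum.  This is the missing idea: restrict to $j=k$, invoke the single-vector rearrangement inequality of \lref{lem:inverseProdGen}, and then average.
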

\begin{remark}
Although we are not aware of any previous study on $m(d)$, it seems to us that this result is potentially important.  The research  on SICs \cite{DelsGS75, ReneBSC04, Zaun11,Gras04,Appl05,Gras05,Gras06,Gras08S,Gras08C, ScotG10, ApplBBG12,ApplAZ13} shows  that this upper bound can be saturated  at least for dimensions~2 to~16, 19, 24, 28, 31, 35, 37, 43, 48 (and saturated with high precision for dimensions up to~67), and it suggests the conjecture that $m(d) = -1/d$ in every finite dimension. This is indeed a remarkable contribution of SIC study to representation theory. We believe that the interplay between the two subjects will lead to more progress.
\end{remark}

Before proving \thref{thm:SICunitary}, we need to introduce a technical lemma, which may be of independent interest.
\begin{lemma}\label{lem:inverseProdGen}
Let $\lambda=(\lambda_1,\lambda_2,\ldots,\lambda_d)$ be a vector in $\mathbb{R}^d$ with $d\ge 2$. Then
\begin{equation}
\lambda^{\uparrow}\cdot \lambda^{\downarrow}\leq\frac{r^2-s}{d-1},
\end{equation}
where $r=\sum_j \lambda_j$, $s = \sum_j \lambda_j^2$ and  $\lambda^{\uparrow}$ \textup{(}respectively $\lambda^{\downarrow}$\textup{)} is the vector with the components of $\lambda$ in increasing \textup{(}respectively decreasing\textup{)} order.  The inequality is saturated if and only if at least $d-1$ components of $\lambda$ are identical, that is,
\begin{equation}\label{eq:lambdaOptimal}
\lambda^\downarrow=r^\prime(1,0,\ldots,0)+x\quad
\text{or}\quad \lambda^\downarrow=\frac{r^\prime}{d}(2,2,\ldots,2,2-d)+x,
\end{equation}
where
\begin{equation}
r^\prime=\sqrt{\frac{ds-r^2}{d-1}},\quad x=\frac{(d-1)r-\sqrt{(d-1)(ds-r^2)}}{d^2-d},
\end{equation}
and where  $x$ is understood as $x(1,1,\ldots,1)$ when appearing in a vector equation.
\end{lemma}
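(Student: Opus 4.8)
The plan is to reduce the whole statement to a single application of the rearrangement inequality together with an averaging trick over cyclic shifts. First I would observe that $\lambda^{\uparrow}\cdot\lambda^{\downarrow}$ depends only on the multiset of entries, so I may assume $\lambda_1\ge\lambda_2\ge\cdots\ge\lambda_d$; then $\lambda^{\downarrow}=\lambda$, $\lambda^{\uparrow}=(\lambda_d,\ldots,\lambda_1)$, and $\lambda^{\uparrow}\cdot\lambda^{\downarrow}=\sum_j\lambda_j\lambda_{d+1-j}$. By the rearrangement inequality this oppositely-sorted pairing is the \emph{minimum} of $\sum_j\lambda_j\lambda_{\sigma(j)}$ over all permutations $\sigma$ of $\{1,\ldots,d\}$; call this minimum $m$.

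For the inequality itself I would average over the $d-1$ nontrivial cyclic shifts $\sigma_t(j)=j+t \pmod d$, $t=1,\ldots,d-1$. Writing $S_t=\sum_j\lambda_j\lambda_{j+t}$ and using $\sum_{t=1}^{d-1}\lambda_{j+t}=r-\lambda_j$ gives $\sum_{t=1}^{d-1}S_t=\sum_j\lambda_j(r-\lambda_j)=r^2-s$. Since each $S_t\ge m$, the average of the $S_t$ is at least $m$, that is $m\le(r^2-s)/(d-1)$, which is the claimed bound. Equivalently, $(r^2-s)/(d-1)-\lambda^{\uparrow}\cdot\lambda^{\downarrow}=\tfrac{1}{d-1}\sum_{t=1}^{d-1}(S_t-m)$ with every summand nonnegative.

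The equality analysis is where the real work lies, and I expect it to be the main obstacle. Equality forces every $S_t$, in particular $S_1$, to equal $m$, i.e. each cyclic shift must be an order-reversing pairing. I would exploit the shift by one: if no entry is repeated $d-1$ or more times --- equivalently either at least three distinct values occur, or exactly two occur each with multiplicity at least $2$ --- then I can exhibit indices $a,b$ with $\lambda_a>\lambda_b$ and $\lambda_{a+1}>\lambda_{b+1}$ (taking $a,b$ at the ends of the top two constant blocks when there are three values, and $a=1$, $b=n_1+1$ when there are two blocks of sizes $n_1,n_2\ge2$). Such a pair shows $\sigma_1$ is not order-reversing, so $S_1>m$ and the inequality is strict. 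Hence equality requires some value to occur at least $d-1$ times. Conversely, for the two admissible profiles (one entry distinct and $d-1$ equal, with the odd entry either the largest or the smallest) a direct computation gives $\sum_j\lambda_j\lambda_{d+1-j}=(r^2-s)/(d-1)$, so equality does hold.

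Finally I would pin down the explicit form \eqref{eq:lambdaOptimal} by solving the two constraints $\sum_j\lambda_j=r$ and $\sum_j\lambda_j^2=s$ for each profile. Expressing the repeated value and the odd one out in terms of $r$ and the spike height, both cases yield the same background shift $x=(r-r')/d$ and spike $r'=\sqrt{(ds-r^2)/(d-1)}$, which is real because $ds\ge r^2$ by Cauchy--Schwarz; this reproduces the two displayed vectors, the first being a single upward spike on a constant background and the second a single downward spike. The only genuinely delicate point is the case distinction in the equality argument: one must check that the witness pair $(a,b)$ is always available and that no wraparound or boundary configuration is overlooked.
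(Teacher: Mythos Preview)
Your proof is correct and shares its core idea with the paper's---both use the rearrangement inequality together with an average over fixed-point-free permutations---but the execution differs in two notable ways.

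First, the paper averages over \emph{all} derangements rather than just the cyclic shifts, and it normalizes first to the case $r=s=1$ (via an affine change $\lambda_j\mapsto\eta\lambda_j+\xi$), only afterwards transferring back to general $r,s$. Your direct use of the $d-1$ cyclic shifts is more economical and avoids this reduction entirely: the identity $\sum_{t=1}^{d-1}S_t=r^2-s$ gives the bound in one line.

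Second, and more interestingly, the equality arguments are genuinely different. You run a swap test on the shift $\sigma_1$ and a block-by-block case analysis to produce a witness pair $(a,b)$ with $\lambda_a>\lambda_b$ and $\lambda_{a+1}>\lambda_{b+1}$. The paper instead exploits the freedom of using non-cyclic derangements: taking $\sigma=(1,\dots,d)$ and $\sigma'=(1,d)(2,\dots,d-1)$, one has the exact factorization
\[
\lambda\cdot\lambda^{\sigma}-\lambda\cdot\lambda^{\sigma'}=(\lambda_1-\lambda_{d-1})(\lambda_2-\lambda_d),
\]
so equality forces $\lambda_1=\lambda_{d-1}$ or $\lambda_2=\lambda_d$, which for a decreasing sequence immediately gives $d-1$ equal components with no casework. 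Your case analysis is sound (I checked the wraparound and block-boundary issues you flagged), but the paper's identity is the slicker device and worth knowing.
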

\begin{proof}  See the appendix.   \end{proof}

\begin{proof}[Proof of \thref{thm:SICunitary}]  Let $\lambda_{j,k}$ denote the eigenvalues of $L_j$; then we have $\sum_k \lambda_{j,k}=\tr(L_j)$ and $\sum_k\lambda_{j,k}^2=\tr(L_j^2) =1$.
Therefore,
\begin{align} \label{eq:mdlChain}
m(d,L)&= \min_{U, j,k}\tr(L_j UL_k U^\dag)  =\min_{j,k}\lambda_j^\uparrow \cdot \lambda_k^\downarrow \leq\min_{j}\lambda_j^\uparrow \cdot \lambda_j^\downarrow \nonumber \\
&\leq \min_j \frac{[\tr(L_j)]^2-1}{d-1}
\leq \frac{1}{d^2}\sum_j \frac{[\tr(L_j)]^2-1}{d-1} =-\frac{1}{d},
\end{align}
which establishes the upper bound on $m(d,L)$.  Here the second equality follows from the well known fact (see page 341 of \rcite{MarsOA11book} for example) that $\tr(L_j U L_k U^{\dag}) \ge \lambda^{\uparrow}_j \cdot \lambda^{\downarrow}_k$ for all $U$, $j$, $k$, and the observation that with an appropriate choice of $U$ the operators $L_j$ and $UL_kU^{\dag}$ are simultaneously diagonalizable.  The second inequality follows from \lref{lem:inverseProdGen}.
In deriving the last equality, we have applied the formula $\sum_j [\tr(L_j)]^2=\tr(1^2)=d$, which follows from the assumption that $\{L_j\}$ is an orthonormal basis of Hermitian operators.

We next prove that if the bound on $m(d,L)$ is saturated then a SIC exists in dimension~$d$ and the basis is of the form specified by \eref{eq:AdjointBasis}.  To show this observe that if the third  inequality is saturated we must have  $|\tr(L_j)|=1/\sqrt{d}$  for all $j$, while it follows from \lref{lem:inverseProdGen} that if  the second inequality is saturated then each $L_j$ has at least $d-1$ identical eigenvalues.  Consequently, each $L_j$ is a linear combination of a rank-1 projector and the identity.
According to  \thref{thm:SIC} with $\alpha=1$ and $\beta=0$, there exists a SIC  $\{\Pi_j\}$ such that
\begin{equation}
L_j = a_j \Pi_j + b_j,
\end{equation}
where
\begin{align}
a_j &= \epsilon_j \sqrt{\frac{d+1}{d}}, & b_j &= -\frac{a_j}{d}\left(1-\epsilon'\sqrt{\frac{1}{d+1}}\right),
\end{align}
and  $\epsilon_j$, $\epsilon'$ are signs (application of \thref{thm:SIC} is a fast recipe for deriving this conclusion although it is also not difficult to verify this claim directly  in this simple situation).  Observe that with these values of $a_j$, $b_j$,
\begin{equation}
\lambda_j^{\uparrow} \cdot \lambda_k^{\downarrow}
=\begin{cases}
-\frac{1}{d} \qquad & \text{if $\epsilon_j = \epsilon_k$},\\
-1 \qquad & \text{if $\epsilon_j \neq \epsilon_k$}.
\end{cases}
\end{equation}
To saturate the first inequality in \eref{eq:mdlChain}, all the signs $\epsilon_j$ must  equal a fixed sign, $\epsilon$ say. \Eref{eq:AdjointBasis} now follows.

To prove sufficiency it is enough to observe that $\lambda^{\uparrow}_j\cdot \lambda^{\downarrow}_k = -1/d$ for any basis of the type specified by \eref{eq:AdjointBasis}.
\end{proof}

We now turn to  proving  \thref{thm:StochType}.  It depends on the following lemma:
\begin{lemma} \label{lm:AdjRepPud}
Up to equivalence the adjoint representation of $\mrm{PU}(d)$ for $d\geq2$ is the only non-trivial irreducible representation of $\mrm{PU}(d)$ with degree not larger than $d^2-1$.
\end{lemma}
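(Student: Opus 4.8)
The plan is to pass to highest-weight data and then extract everything from the Weyl dimension formula. First I would invoke the standard dictionary for the projective unitary group. The map $\mrm{SU}(d)\to\mrm{PU}(d)$ is surjective with kernel the central cyclic group $\mathbb{Z}/d$ of scalar $d$-th roots of unity, so the irreducible representations of $\mrm{PU}(d)$ are exactly those irreducible representations of $\mrm{SU}(d)$ on which this center acts trivially; pulling back instead along $\mrm{U}(d)\to\mrm{PU}(d)$, they are the irreducible representations of $\mrm{U}(d)$ whose central $\mrm{U}(1)$ acts trivially, that is, the ones whose highest weight is an integer signature $\lambda_1\ge\lambda_2\ge\cdots\ge\lambda_d$ with $\sum_i\lambda_i=0$. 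In this dictionary the trivial representation is $\lambda=0$, while the adjoint representation is $\lambda=(1,0,\dots,0,-1)$, which is irreducible for $d\ge2$ (the Lie algebra of $\mrm{SU}(d)$ being simple) and has degree $d^2-1$. It is precisely this zero-trace constraint that excludes the low-degree candidates: the fundamental representations $\wedge^kV$, of degree $\binom{d}{k}\le d^2-1$, are representations of $\mrm{SU}(d)$ but \emph{not} of $\mrm{PU}(d)$, since their highest weights have nonzero $d$-ality. The lemma thus reduces to the purely combinatorial claim that every integer signature $\lambda\ne0$ with $\sum_i\lambda_i=0$ and $\lambda\ne(1,0,\dots,0,-1)$ satisfies $\dim V_\lambda>d^2-1$.

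To prove this I would use the Weyl dimension formula, which for $\mrm{SU}(d)$ reads
\begin{equation}
\dim V_\lambda=\prod_{1\le i<j\le d}\frac{(\lambda_i-\lambda_j)+(j-i)}{j-i},
\end{equation}
every factor being at least $1$ because $\lambda$ is non-increasing. The key bookkeeping device is to group the entries of $\lambda$ into blocks of equal value: the factors coming from pairs inside a single block are all $1$, and each ordered pair of blocks contributes a telescoping sub-product determined only by its value gap and the two block sizes. For the adjoint there are three blocks, a singleton $+1$, a central block of $d-2$ zeros, and a singleton $-1$; the two adjacent (gap-$1$) block pairs each telescope to $\prod_{k=1}^{d-2}\frac{k+1}{k}=d-1$, the extreme (gap-$2$) pair contributes the single factor $\frac{d+1}{d-1}$, and the product is $(d-1)^2\cdot\frac{d+1}{d-1}=(d-1)(d+1)=d^2-1$, as it must be.

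The strategy is then to show that any non-trivial zero-sum $\lambda$ other than the adjoint strictly inflates this product. Such a $\lambda$ differs from $(1,0,\dots,0,-1)$ in at least one of two ways, and each one increases the telescoping estimate. An extreme entry of magnitude $\ge2$ (say $\lambda_1\ge2$) raises the value gap of every block pair meeting the first position, replacing factors $\frac{k+1}{k}$ by $\frac{k+2}{k}$ and thereby enlarging the corresponding telescoping product. An additional nonzero interior entry, which by $\sum_i\lambda_i=0$ forces a matching entry of opposite sign, introduces a fresh pair of blocks separated by the long central zero block; the associated telescoping sub-product is already of polynomial order in $d$. For example $\lambda=(1,1,0,\dots,0,-1,-1)$ (the representation $\omega_2+\omega_{d-2}$) picks up, from pairing its two $+1$ entries against the zero block and the zero block against its two $-1$ entries, a degree of order $d^4$. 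In every case one reaches $\dim V_\lambda>d^2-1$, with equality holding only for the adjoint itself.

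The main obstacle is assembling these block-telescoping estimates into a single inequality that is uniform in $d$ and in the \emph{a priori} unbounded shape of $\lambda$: the Weyl factors are coupled, so one cannot minimise them entry by entry, and blocks of equal entries (which shrink the associated products) must be tracked carefully. I would organise the argument by separating cases according to $m:=\lambda_1-\lambda_d=\sum_k a_k$, where $a_k=\lambda_k-\lambda_{k+1}$ are the Dynkin labels. The case $m=1$ yields only the fundamental representations, excluded by $\sum_i\lambda_i=0$; the case $m=2$ yields exactly the adjoint together with $\omega_k+\omega_{d-k}$ for $2\le k<d/2$ and, for even $d$, $2\omega_{d/2}$, all of strictly larger degree by the block computation; and for $m\ge3$ that computation again gives degree $>d^2-1$. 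Finally the smallest case is cleanest by inspection: for $d=2$ one has $\mrm{PU}(2)\cong\mrm{SO}(3)$, with irreducible degrees $1,3,5,\dots$, the degree-$3$ one being the adjoint (here $\omega_1=\omega_{d-1}$, so it is $2\omega_1$), confirming the claim at the boundary.
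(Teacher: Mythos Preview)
Your setup is sound and in fact coincides with the paper's: once you pass to the $\mrm{SU}(d)$ normalization $\lambda_d=0$, your parameter $m=\lambda_1-\lambda_d$ is simply $\lambda_1$, i.e.\ the number of columns in the Young diagram. So the case split $m=1,2,\ge 3$ is exactly the paper's split into one-, two-, and three-or-more-column partitions. The paper likewise uses the Weyl dimension formula and the $d$-ality condition.

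The gap is in the execution. For $m=2$ you correctly identify the zero-sum signatures as $\omega_k+\omega_{d-k}$ (equivalently $2\omega_{d/2}$ when $k=d/2$), but then assert ``all of strictly larger degree by the block computation'' without carrying it out; likewise for $m\ge 3$ you write only ``that computation again gives degree $>d^2-1$''. Your block-telescoping heuristic is right in spirit, but as you yourself note (``the main obstacle\ldots''), it does not by itself produce an inequality uniform in $d$ and in the shape of $\lambda$. The paper closes this gap with a clean monotonicity trick you are missing: dropping a column from a Young diagram strictly decreases the Weyl dimension, since every factor $(\lambda_i-\lambda_j+j-i)/(j-i)$ is $\ge 1$ and deleting a column only removes or shrinks factors. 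This reduces everything to a finite check. Concretely, for one column $D_{[m]^{\mathrm C}}=\binom{d}{m}$, so for $d\ge 9$ only $m\in\{1,2,d-2,d-1\}$ survive the bound $D_\lambda\le d^2-1$; then any two-column $[m_1,m_2]^{\mathrm C}$ with $D_\lambda\le d^2-1$ must have both $m_1,m_2$ in that set, leaving at most ten cases to inspect by hand (only $[1,1]^{\mathrm C}$, $[d-1,d-1]^{\mathrm C}$, $[d-1,1]^{\mathrm C}$ survive, and only the last---the adjoint---meets the $d$-ality condition). For three columns the same monotonicity forces $[m_1,m_2]\in\{[1,1],[d-1,d-1]\}$, and the two remaining families are dispatched directly (using duality for $[d-1,d-1,m_3]^{\mathrm C}$). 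Small $d$ are handled separately. This finite reduction is what your sketch lacks; without it, bounding $\dim V(\omega_k+\omega_{d-k})$ for all $2\le k\le d/2$ and all $m\ge 3$ shapes requires a separate, genuinely quantitative argument that you have not supplied.
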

\begin{proof}
See the appendix.
\end{proof}

\begin{proof}[Proof of \thref{thm:StochType}]
To prove necessity  let $L$ be an Hermitian orthonormal basis of the type specified by \eref{eq:AdjointBasis}.  We have
$\tr(L_j) =  \epsilon\epsilon'/\sqrt{d}$ and $\sum_{j} L_j =  \epsilon\epsilon' \sqrt{d}$.
So
\begin{equation}
\sum_j U^L_{jk} = \sum_j U^L_{kj} = 1
\end{equation}
for all $U\in\mrm{U}(d)$.  It follows from this and \thref{thm:SICunitary} that if we define
\begin{equation}
S_{jk}= \frac{1}{d+1} \left( U^L_{jk} + \frac{1}{d}\right),
\end{equation}
then $S$ is doubly stochastic. Therefore,
$U^L = (d+1) S  - d P$ is
of stochastic type, and  the group $G = \{U^L \colon U \in \mrm{U}(d)\}$ is stochastic.  Meanwhile, $G$ is isomorphic to $\mrm{PU}(d)$ since the kernel of the homomorphism $U \mapsto U^L$ is the center of $\mrm{U}(d)$.

To prove sufficiency let $G$ be a stochastic subgroup of $\mathrm{O}(d^2)$ isomorphic to $\mrm{PU}(d)$. Then any isomorphism from $\mathrm{PU}(d)$ to $G$ defines a nontrivial representation of $\mathrm{PU}(d)$ of degree at most $d^2-1$, recall that any stochastic subgroup of $\mathrm{O}(d^2)$ is also a subgroup of $\mathrm{O}(d^2-1)$.
 It follows from \lref{lm:AdjRepPud}  that there exists an Hermitian orthonormal basis $L$ for $\mathcal{B}(\mathcal{H})$ such that $G=\{U^L \colon U\in \mrm{U}(d)\}$.    The claim then follows from  \thref{thm:SICunitary}, together with the fact  that entries of matrices of stochastic type are bounded from below by $-1/d$.
\end{proof}

It is interesting to ask what are the maximal stochastic subgroups of $\mrm{O}(d^2)$ (where by a "maximal stochastic subgroup" we mean a stochastic subgroup not properly contained in any larger stochastic subgroup).
The proof of \thref{thm:StochType} shows that if a SIC exists in dimension~$d$ then the adjoint representation of $\mrm{U}(d)$ relative to the basis $L$ defined by \eref{eq:AdjointBasis}
is a stochastic subgroup of $\mrm{O}(d^2)$.  However, it is not maximal stochastic.  On the other hand, we do get a maximal stochastic subgroup if we consider the adjoint representation of
the extended unitary group $\mrm{EU}(d)$ (\emph{i.e.}\ the group of all unitary and anti-unitary operators in dimension~$d$).
\begin{theorem}
Suppose a SIC exists in dimension~$d$, and  $L$ is a basis of the type specified by \eref{eq:AdjointBasis}.  Let $G\subseteq \mrm{O}(d^2)$ be the subgroup consisting of the adjoint representatives of $\mrm{EU}(d)$ relative to $L$.  Then $G$ is maximal stochastic.
\end{theorem}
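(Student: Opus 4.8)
The plan is to first check that $G$ is a stochastic subgroup, and then to show it cannot sit inside any larger one. For the first part I would extend the computation behind \thref{thm:StochType} to anti-unitaries: for any $V\in\mrm{EU}(d)$ the conjugation $A\mapsto VAV^{-1}$ is a real-orthogonal map of the space of Hermitian operators (for anti-unitary $V$ it is real-linear and preserves $\tr(AB)$, which is real for Hermitian $A,B$), so $V^L\in\mrm O(d^2)$ is well defined. Writing $L_j=a\Pi_j+b$ as in \eref{eq:AdjointBasis} and using $V\Pi_kV^{-1}=\outer{V\psi_k}{V\psi_k}$, a direct calculation gives $V^L_{jk}=\frac{d+1}{d}|\langle\psi_j|V\psi_k\rangle|^2-\frac1d\ge-\frac1d$, while $\sum_jL_j\propto 1$ forces all row and column sums to equal $1$. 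Hence every $V^L$ is of stochastic type and $G$ is stochastic.

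For maximality, suppose $G\subseteq H$ with $H$ stochastic, and fix an arbitrary $R\in H$; the goal is $R\in G$. The organizing observation is that an orthogonal matrix is of stochastic type precisely when it fixes the all-ones vector $e$ and has every entry $\ge-1/d$ (given such an $R$, the matrix $S=\frac1{d+1}(R+dP)$ is doubly stochastic), so every element of $H$ obeys this entry bound. Let $\mathcal P$ denote the set of SIC-images of pure states, i.e.\ the vectors $c$ with $c_j=\frac{d+1}{d}|\langle\psi_j|\chi\rangle|^2-\frac1d$ for some unit vector $\chi$; then $\mathcal P=\{V^Le_1:V\in\mrm{EU}(d)\}$ contains the standard basis $\{e_k\}$ (the images of the $\psi_k$) and so spans $\mathbb R^{d^2}$. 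Given any $R'\in H$, set $w=R'e_1$, which has column sum $1$, and let $\rho_w=\sum_j w_j\Pi_j$, a trace-one Hermitian operator; the corresponding operator is $W=\sum_j w_jL_j=a\rho_w+b$. For every $V\in\mrm U(d)$ the product $V^LR'\in H$ is stochastic, so its first column $V^Lw$, which corresponds to $VWV^{-1}=aV\rho_wV^{-1}+b$, has entries $\frac{d+1}{d}\tr(\Pi_jV\rho_wV^{-1})-\frac1d\ge-\frac1d$. Thus $\tr(\Pi_jV\rho_wV^{-1})\ge0$ for all $j$ and all $V$; since $\{V^{-1}\Pi_jV\}$ exhausts all rank-one projectors, this says $\rho_w\ge0$, and $\|w\|=1$ forces $\tr(\rho_w^2)=1$, so $\rho_w$ is pure and $w=R'e_1\in\mathcal P$.

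Applying this to $R'=Rg$ for each $g\in G$ (so that $Rge_1$ runs over $R(\mathcal P)$ as $ge_1$ runs over $\mathcal P$), and using $R^{-1}\in H$ for the reverse inclusion, shows $R$ maps $\mathcal P$ bijectively onto itself. Because the SIC is informationally complete, $\mathcal P$ is in bijection with the pure states, so $R$ induces a bijection of $\mathbb{CP}^{d-1}$; and since $R$ is orthogonal it preserves the inner products $\langle c,c'\rangle=\frac{d+1}{d}|\langle\chi|\chi'\rangle|^2-\frac1d$, hence the transition probabilities $|\langle\chi|\chi'\rangle|^2$. By Wigner's theorem this bijection is implemented by some $V\in\mrm{EU}(d)$, so $R$ and $V^L$ agree on $\mathcal P$; as $\mathcal P$ spans $\mathbb R^{d^2}$ and both maps are linear, $R=V^L\in G$. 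Therefore $H=G$ and $G$ is maximal stochastic.

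I expect the main obstacle to be the positivity step in the second paragraph: the crucial and slightly delicate point is converting the \emph{group-level} entry bound (valid for all products $V^LR'$) into the operator inequality $\rho_w\ge0$, by letting the unitary conjugates of the SIC projectors sweep out all rank-one projectors. The remaining ingredients---the characterization of stochastic type through the single inequality $R_{jk}\ge-1/d$, and the appeal to Wigner's theorem to recover an (anti)unitary from a transition-probability-preserving bijection---are routine once that reduction is in hand.
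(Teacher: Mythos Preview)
Your proof is correct and follows essentially the same approach as the paper: both establish stochasticity of $G$ by extending the entry computation to anti-unitaries, then prove maximality by using the entry bound $\ge -1/d$ on products of the form $V^LR'$ (the paper writes $(U^L)^\rmT R V^L$) to force positivity of the associated operator, deduce purity from the trace and norm conditions, and finish with Wigner's theorem. The only difference is organizational---the paper packages the argument via the linear map $f\colon L_k\mapsto\sum_j R_{jk}L_j$ and shows directly that $f$ sends pure states to pure states, whereas you phrase it in terms of the column vectors and the set $\mathcal P$; the content is the same.
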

\begin{proof}
The fact that $G$ is stochastic follows from a variant of the argument used in \thsref{thm:SICunitary} and~\ref{thm:StochType} to prove that the adjoint representation of $\mrm{U}(d)$ relative to $L$ is stochastic, note that conjugation by anti-unitary operators does not change the spectrum of Hermitian operators.

To prove that $G$ is maximal stochastic let
 $H\subseteq \mrm{O}(d^2)$ be a stochastic subgroup containing $G$, and let $R\in H$ be arbitrary.  Define a linear transformation $f\colon \mathcal{B}(\mathcal{H}) \to \mathcal{B}(\mathcal{H})$ by
\begin{equation}
f(L_j) = \sum_{k} R_{kj} L_k.
\end{equation}
Then $\tr(f(P)f(P^\prime))=\tr(PP^\prime)$ for any two pure states $P$ and $P^\prime$.
We will  show that $f$ takes pure states to  pure states.  It will then follow from Wigner's theorem  \cite{Wign59} that $f$ must be conjugation by a unitary or anti-unitary operator,  thereby implying that $R\in G$.

Let $P$ be an arbitrary pure state.  We begin by showing that $f(P)$ is positive semidefinite, that is,
$\tr (P' f(P)) \ge 0$
for every pure state $P'$.  Choose unitary operators $U$, $V$ such that
\begin{align}
P &=V \Pi_1 V^{\dag} , & P' &= U \Pi_1 U^{\dag}.
\end{align}
We have
\begin{equation}\label{eq:fmap}
\tr\left( L_1 U^{\dag} f(V L_1 V^{\dag})U\right) = \left( \left(U^{L} \right)^\rmT R V^{L}\right)_{11} \ge -\frac{1}{d}.
\end{equation}
The facts that $\tr(L_j) =  \epsilon\epsilon'/\sqrt{d}$, $\sum_j L_j = \epsilon\epsilon' \sqrt{d}$, and  $\sum_j R_{jk} =1$ mean that $f$ is unital and trace preserving, namely,  $f(1) =1$ and $\tr(f(A))=\tr(A)$ for any Hermitian operator $A$.
In view of \eref{eq:AdjointBasis}, we can now deduce from  \eref{eq:fmap} the following equation,
\begin{equation}
\tr (P' f(P))=\tr\left( \Pi_1 U^{\dag} f(V \Pi_1 V^{\dag})U\right) \ge 0.
\end{equation}
In addition,   we also have $\tr(f(P))=\tr\bigl( \left(f(P)\right)^2\bigr) = 1$ since  $R$ is orthogonal and $f$ is trace preserving.  Therefore, $f(P)$ is  a pure state, as claimed.
\end{proof}

The problem of determining all the maximal stochastic subgroups of $\mrm{O}(d^2)$ is, in general, difficult.  When $d=2$, however,  it has a  simple solution.
\begin{theorem}
When $d=2$ there is exactly one maximal stochastic subgroup of $\mrm{O}(d^2)$, namely, the group
\begin{equation}
G = \{A\in \mrm{O}(d^2)\colon \sum_k A_{jk} = 1\}.
\end{equation}
\end{theorem}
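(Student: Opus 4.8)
The plan is to prove that for $d=2$ the group $G = \{A\in\mrm{O}(d^2)\colon \sum_k A_{jk}=1\}$ is \emph{itself} stochastic. Since the discussion preceding \thref{thm:StochType} shows that every stochastic subgroup of $\mrm{O}(d^2)$ is contained in $G$, this immediately yields the theorem: if $G$ is stochastic, then any maximal stochastic subgroup $H$ lies in $G$, and since $G$ itself is stochastic the maximality of $H$ forces $H=G$; moreover $G$ is itself maximal because it is the largest stochastic subgroup and so cannot be properly contained in a stochastic one. Everything therefore reduces to showing that each $A\in G$ is of stochastic type.

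First I would record that $G$ really is a subgroup, being the stabilizer in $\mrm{O}(d^2)$ of the all-ones vector $\mathbf{1}=(1,\ldots,1)^\rmT$, equivalently the set of orthogonal matrices commuting with the rank-$1$ projector $P=\frac{1}{d^2}\mathbf{1}\mathbf{1}^\rmT$ of \eref{eq:AllOneMatrix} (for $d=2$ it is a copy of $\mrm{O}(3)$, as recorded earlier). To show $A\in G$ is of stochastic type I must exhibit a doubly stochastic $S$ with $A=(d+1)S-dP$, i.e.\ check that $S:=\frac{1}{d+1}(A+dP)$ is doubly stochastic. Because $A\mathbf{1}=\mathbf{1}$ and $P\mathbf{1}=\mathbf{1}$, the rows and columns of $S$ automatically sum to $1$, so the only substantive point is entrywise non-negativity, which is exactly the bound $A_{jk}\geq -1/d$ for all $j,k$.

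The heart of the argument is this entry bound. Writing $e_k=\frac{1}{d^2}\mathbf{1}+v_k$ with $v_k=(I-P)e_k\in\mathbf{1}^\perp$, and using $AP=PA=P$ (which follows from $A\mathbf{1}=\mathbf{1}$), a short computation makes all cross terms drop out and leaves $A_{jk}=e_j^\rmT A e_k=\frac{1}{d^2}+v_j^\rmT A v_k$. Since $A$ restricts to an orthogonal map of $\mathbf{1}^\perp$ and $\|v_k\|^2=1-\frac{1}{d^2}$, Cauchy--Schwarz gives $v_j^\rmT A v_k\geq-\|v_j\|\,\|v_k\|=-(1-\frac{1}{d^2})$, whence $A_{jk}\geq\frac{2}{d^2}-1$.

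The one genuinely delicate observation, and the reason the theorem is special to $d=2$, is that this Cauchy--Schwarz bound coincides with the required value $-1/d$ exactly when $\frac{2}{d^2}-1=-\frac{1}{d}$, i.e.\ when $(d-2)(d+1)=0$. Thus for $d=2$ we obtain precisely $A_{jk}\geq-\tfrac{1}{2}=-1/d$, so $S_{jk}=\frac{1}{3}(A_{jk}+\tfrac12)\geq0$, $S$ is doubly stochastic, and every $A\in G$ is of stochastic type. (For $d>2$ the same estimate only yields $\frac{2}{d^2}-1<-1/d$, and indeed $G$ then contains orthogonal matrices with an entry below $-1/d$, which fail to be of stochastic type; this is why no such clean statement survives in higher dimensions.) I do not expect any serious obstacle beyond correctly tracking the vanishing of the cross terms and the value $\|v_k\|^2=1-1/d^2$; the real substance lies entirely in recognizing the arithmetic coincidence at $d=2$.
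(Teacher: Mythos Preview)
Your proposal is correct and complete. The overall architecture matches the paper's proof exactly: both observe that every stochastic subgroup sits inside $G$, so it suffices to show $G$ itself is stochastic, which in turn reduces to the entrywise bound $A_{jk}\ge -1/d$.

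Where you diverge from the paper is in how you establish that bound. The paper fixes a row $(x,y,z,w)$ of $A$, imposes the two scalar constraints $x+y+z+w=1$ and $x^2+y^2+z^2+w^2=1$, and minimizes one coordinate via Lagrange multipliers, obtaining the minimum $-1/2$ at $y=z=w=1/2$. You instead decompose $e_k=\frac{1}{d^2}\mathbf{1}+v_k$ with $v_k\in\mathbf{1}^\perp$, use $AP=PA=P$ to kill the cross terms, and apply Cauchy--Schwarz to $v_j^\rmT Av_k$, arriving at $A_{jk}\ge \frac{2}{d^2}-1$. Both routes produce the same numerical bound (indeed the Lagrange calculation, carried out with $d^2$ variables, also gives $\frac{2}{d^2}-1$), but yours has the virtue of making the role of $d$ explicit throughout and exhibiting the coincidence $\frac{2}{d^2}-1=-\frac{1}{d}\Leftrightarrow d=2$ as the precise reason the theorem is special to this dimension. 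The paper's argument is marginally more elementary for the specific case at hand but leaves that structural point implicit.
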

\begin{remark}
This theorem is closely related to the fact that in dimension 2 there is a unique maximal consistent set, namely, the quantum state space~\cite{ApplEF11, ApplFZ13M}.
\end{remark}

\begin{proof}
Observe that the set $G$ is indeed a group and that every matrix of stochastic type is contained in $G$.  So the result will follow if $G$ is  stochastic. To see this let $A$ be any element in $G$ and  $(x,y,z,w)$ any row in $A$.  We have
\begin{align}
x+y+z+w & =1, & x^2+y^2+z^2 + w^2 =1.
\end{align}
Using the method of Lagrange multipliers, it is straightforward to verify  that the minimum of $x$ under the two constraints is equal to   $-1/2$ and is attained when $ y=z=w=1/2$. The same analysis shows  that $A_{jk} \ge -1/2$ for all $j$ and $k$.  Consequently, $A$ is of stochastic type and the group $G$ is stochastic.
\end{proof}

\section{\label{sec:LieAlgebra}A Lie algebraic formulation of the SIC existence problem}
In the last section we established a connection between the SIC existence problem and the adjoint representation of the Lie \emph{group} $\mrm{U}(d)$.  In this section we establish an ostensibly quite different connection with the adjoint representation of the Lie \emph{algebra} $\mathfrak{u}(d)$ (\emph{i.e.}\ the Lie algebra of $\mrm{U}(d)$).    The result we prove is a much stronger version of a result previously proved by Appleby, Flammia and Fuchs~\cite{ApplFF11}.

Let $L = \{L_j\}$ be a basis for $\mathfrak{u}(d)$.  We adopt the physicist's convention that $\mathfrak{u}(d)$ consists of Hermitian matrices (as opposed to anti-Hermitian ones).  So the $L_j$ are all Hermitian.  Let $C^{L}_{jkl}$ be the structure constants for this basis:
\begin{equation}\label{eq:LieStrucDef}
[L_j,L_k] = \sum_{l} C^{L}_{jkl} L_l.
\end{equation}
We also define the structure matrices $C^L_j$ to be the matrices with elements
\begin{equation}\label{eq:LieStrucMatDef}
(C^L_j)_{kl} = C^L_{jkl}.
\end{equation}
Note that the structure constants and structure matrices are pure imaginary (as can be seen by taking Hermitian conjugates on both sides of \eref{eq:LieStrucDef}).

The significance of the structure matrices is that they are the adjoint representatives of the basis elements.  Thus, if $\ad_{A}$ is the linear map $\mathfrak{u}(d) \to \mathfrak{u}(d)$ defined by
\begin{equation}
\ad_{A} (B) = [A,B],
\end{equation}
and $C^L_A$ is the matrix defined by
\begin{equation}
\ad_{A} (L_k) =\sum_l (C^L_A)_{kl} L_l,
\end{equation}
then $C^L_j=C^L_{L_j}$.

In passing, we mention a well-known connection between the spectrum of $C^L_A$ and that of $A$, which will be needed in the sequel.
If $A$ has spectrum $\{\lambda_j, 1\leq j\leq d\}$, then $C^L_A$ has spectrum $\{\lambda_j-\lambda_k, 1\leq j,k\leq d \}$ (see, for example, Lemma 11 of \rcite{ApplFF11}). In particular, $C^L_A$ has rank $2(d-1)$ if and only if $A$ is a linear combination of a rank-1 projector and the identity and is not proportional to the identity.

We are now ready to state our main result:
\begin{theorem}\label{thm:LieSIC}
If $d\geq3$, then the following statements are equivalent:
\begin{enumerate}
\item A SIC exists in dimension~$d$.
\item There exists a basis for $\mathfrak{u}(d)$ such that the structure matrices are Hermitian and rank $2(d-1)$.
\end{enumerate}
The bases required in statement~2 are precisely the ones of the form
\begin{equation}\label{eq:SICAlgBasis}
L_j = \epsilon_j \ell(\Pi_j+\eta)
\end{equation}
where $\{\Pi_j\}$ is a SIC, the $\epsilon_j$ are signs, $\ell$ is a non-zero real number, and $\eta$ is an arbitrary real number not equal to $-1/d$.
\end{theorem}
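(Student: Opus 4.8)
The plan is to establish the equivalence by the chain $2 \Rightarrow 1 \Rightarrow 2$, while simultaneously pinning down the exact form of the admissible bases. The key technical input is the spectral fact stated just before the theorem: a Hermitian operator $A$ has adjoint representative $C^L_A$ of rank $2(d-1)$ if and only if $A$ is a linear combination of a rank-1 projector and the identity, and is not proportional to the identity. This converts the rank condition on the structure matrices into a pointwise structural condition on the $L_j$ themselves, and it is the bridge that lets me invoke \thref{thm:SIC}.

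First I would prove $1 \Rightarrow 2$, which is the constructive direction. Given a SIC $\{\Pi_j\}$, I would simply exhibit the basis \eref{eq:SICAlgBasis}, $L_j = \epsilon_j \ell(\Pi_j + \eta)$, and verify it works. Since each $L_j$ is a nonzero multiple of $\Pi_j$ plus a multiple of the identity, and $\Pi_j$ is a rank-1 projector (so not proportional to the identity as long as $d>1$), the spectral fact immediately gives that $C^L_j$ has rank $2(d-1)$. The structure matrices are Hermitian because the $L_j$ are Hermitian and the Gram/structure relations are real-symmetric in the appropriate sense; here I would check that $\{L_j\}$ is genuinely a \emph{basis} for $\mathfrak{u}(d)$ (using that a SIC spans $\mathcal{B}(\mathcal{H})$, hence after the affine shift the $L_j$ still form a basis provided $\eta \neq -1/d$, which is exactly the excluded value — the shift by $-1/d$ would make $\sum_j L_j$ degenerate). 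The Hermiticity of the structure matrices will follow from an orthogonality-type computation: I expect $\{L_j\}$ to satisfy \thref{thm:GeoComInf} with suitable $\alpha,\beta$, so the Gram matrix is a rank-one perturbation of a scalar, and this is what forces $C^L_j$ to be Hermitian.

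The harder direction is $2 \Rightarrow 1$. Here I assume a basis $\{L_j\}$ whose structure matrices $C^L_j$ are all Hermitian and rank $2(d-1)$. The rank condition, via the spectral fact, forces each $L_j$ to be a linear combination of a rank-1 projector and the identity. The role of the Hermiticity hypothesis is to extract the generalized $2$-design condition: I would show that Hermiticity of \emph{all} the $C^L_j$ simultaneously, for a basis, forces $\{L_j\}$ to satisfy \esref{eq:GG2design} and \eqref{eq:GGframeSuper} of \lref{lem:ComInf} (up to an overall normalization and choice of $\alpha,\beta$). This is the crux and the main obstacle: translating the purely algebraic statement ``every structure matrix is Hermitian'' into the symmetric-subspace identity \eref{eq:GG2design}. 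I anticipate doing this by writing $C^L_{jkl}$ in terms of traces of triple products $\tr(L_j[L_k,L_l])$ via the inverse Gram matrix, and showing that Hermiticity of the $C^L_j$ together with the rank condition constrains the Gram matrix to be of the special form \eref{eq:Ginner}. Once the hypotheses of \thref{thm:SIC} are in place (with $d \geq 3$, so no exceptional two-dimensional behavior), that theorem delivers a SIC $\{\Pi_j\}$ with $L_j = a_j\Pi_j + b_j$, establishing statement~1.

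Finally, to prove the ``precisely'' clause, I would take the $L_j = a_j\Pi_j + b_j$ produced by \thref{thm:SIC} and match it against the claimed form \eref{eq:SICAlgBasis}. The coefficients $a_j,b_j$ from \eref{eq:ajbj} must be reconciled with $L_j = \epsilon_j\ell(\Pi_j + \eta)$; I expect $\ell$ to absorb $|a_j|$ (constant across $j$ by the $\alpha,\beta$ normalization), $\epsilon_j$ to be the sign of $a_j$, and $\eta = b_j/a_j$ to be forced constant and $\neq -1/d$. I would also confirm the converse inclusion — that every basis of the form \eref{eq:SICAlgBasis} indeed gives Hermitian rank-$2(d-1)$ structure matrices — which is essentially the computation already done in the $1\Rightarrow 2$ step, so the two halves of the proof dovetail. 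Throughout, the restriction $d \geq 3$ is what allows the clean application of \thref{thm:SIC} without the side conditions needed in its $d=2$ case.
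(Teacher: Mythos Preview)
Your overall architecture matches the paper's proof: both directions pass through the equivalence ``structure matrices Hermitian $\Leftrightarrow$ basis satisfies \thref{thm:GeoComInf}'' (packaged in the paper as \lref{lem:structureconstants}), after which the rank hypothesis forces each $L_j$ to be rank-$1$-plus-identity and \thref{thm:SIC} delivers the SIC. You have also correctly located the crux and correctly handled the $\eta\neq -1/d$ basis condition.

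Two points where the proposal is off. First, the rank hypothesis plays \emph{no role} in the crux step: Hermiticity of the $C^L_j$ alone already forces \eref{eq:Ginner}--\eref{eq:GframeSuper}, and the rank condition enters only afterward, to place each $L_j$ in the rank-$1$-plus-identity class so that \thref{thm:SIC} applies. Second, your proposed mechanism for the crux (inverse Gram matrix and triple-product traces) is headed in the right direction but is missing the key idea. What one actually obtains from that computation is that complete anti-symmetry of the structure constants is equivalent to the frame superoperator $\sum_j \douter{L_j}{L_j}$ commuting with every $\ad_{L_k}$ (equivalently, the Gram matrix commuting with every $D_j$ in your language). To pass from this commutation to the special form $\alpha\mathbf{I}+\beta\douter{1}{1}$ one needs a representation-theoretic input: the adjoint representation of $\mathfrak{u}(d)$ has exactly two irreducible components, $\mathfrak{su}(d)$ and the span of the identity, so Schur's lemma forces any intertwiner to be scalar on each piece. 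Without naming this step your argument would stall at the commutation relation.
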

\begin{remark}
In \rcite{ApplFF11}, in addition to the requirement that the structure matrices are Hermitian and rank $2(d-1)$, it was also required that they have the specific form $Q^{\vphantom{T}}_j - Q^\rmT _j$, where  the $Q_j$ are rank-$(d-1)$ projectors which are orthogonal to their own transposes.  The present theorem does not impose the last requirement and therefore represents a considerable strengthening.  In fact, this property will come for free once the weaker requirement in our theorem is satisfied. Note, however, that it only holds for $d>2$.  When $d=2$ it is necessary to fall back on the theorem proved in \rcite{ApplFF11}. The reason that the theorem does not hold when $d=2$ is that the proof depends on the first part of \thref{thm:SIC} .
\end{remark}

To prove \thref{thm:LieSIC} we  need the following lemma, which generalizes   Lemma 9  in \rcite{ApplFF11}.
\begin{lemma}\label{lem:structureconstants}
Let $L=\{L_j\}$ be a  basis for  $\mathfrak{u}(d)$.
Then the following statements are equivalent:
\begin{enumerate}
\item The structure matrices are Hermitian.
\item The structure constants are completely anti-symmetric.
\item The structure matrices  are of the form $C^L_j=H_j-H_j^\rmT$, where the $H_j$ are positive-semidefinite Hermitian matrices orthogonal to their own transposes.
\item The basis satisfies the three equivalent  \esref{eq:Ginner}, \eqref{eq:G2design} and \eqref{eq:GframeSuper}   of \thref{thm:GeoComInf}.
\end{enumerate}
If these conditions are satisfied, and $L_j$ is a rank-1 projector plus a multiple of the identity, then $H_j$ in statement~3 can be chosen to be a projector
of rank $d-1$.
\end{lemma}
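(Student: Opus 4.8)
The plan is to prove the cycle of equivalences by isolating two formal facts and one substantial one. The equivalences $1\Leftrightarrow 2$ and $3\Rightarrow 1$ are essentially immediate. Since $[L_j,L_k]=-[L_k,L_j]$, the structure constants are automatically antisymmetric in their first two indices, $C^L_{jkl}=-C^L_{kjl}$. Because the $C^L_{jkl}$ are pure imaginary, a structure matrix $C^L_j$ is Hermitian iff it is antisymmetric as a matrix, $(C^L_j)^{\rmT}=-C^L_j$, i.e.\ iff $C^L_{jkl}=-C^L_{jlk}$; combined with the first-two-index antisymmetry this is exactly complete antisymmetry, giving $1\Leftrightarrow 2$. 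For $3\Rightarrow 1$, note that for Hermitian $H_j$ the matrix $H_j-H_j^{\rmT}$ is automatically pure imaginary and antisymmetric, hence a Hermitian structure matrix.

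For $1\Rightarrow 3$ I would produce $H_j$ canonically as the positive part $(C^L_j)_+$ of the Hermitian matrix $C^L_j$. Writing $C^L_j=P-N$ for its positive$/$negative-part decomposition, the antisymmetry $(C^L_j)^{\rmT}=-C^L_j$ forces the positive part of $-C^L_j=(C^L_j)^{\rmT}$ to be simultaneously $N$ and $P^{\rmT}$; by uniqueness $N=P^{\rmT}$, so $C^L_j=H_j-H_j^{\rmT}$ with $H_j:=P\ge 0$. Orthogonality to the transpose is then immediate, $\tr(H_jH_j^{\rmT})=\tr(PN)=0$, since the positive and negative parts live on orthogonal eigenspaces. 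For the final rank statement I would invoke the spectrum fact quoted before \thref{thm:LieSIC}: $C^L_j$ has eigenvalues $\{\lambda_a-\lambda_b\}$, where the $\lambda_a$ are the eigenvalues of $L_j$. If $L_j$ is a rank-$1$ projector plus a multiple of the identity, its nonzero adjoint eigenvalues take a single value with multiplicity $d-1$, so $H_j=(C^L_j)_+$ has rank $d-1$ with one nonzero eigenvalue, i.e.\ it is a positive multiple of a rank-$(d-1)$ projector (a genuine projector once the scale is absorbed into the normalisation of $L_j$).

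The substance is $1\Leftrightarrow 4$, which I would handle by representation theory. Work on the real space $V$ of Hermitian operators with the trace form, and set $D_j:=\rmi\,\ad_{L_j}$, which maps $V\to V$ and is skew-adjoint for the trace form, $\langle D_jA,B\rangle+\langle A,D_jB\rangle=\rmi\tr([L_j,AB])=0$, for \emph{every} basis. Let $\hat D_j$ be the (real) matrix of $D_j$ in the basis $L$ and $M$ the Gram matrix $M_{jk}=\tr(L_jL_k)$. Skew-adjointness reads $M\hat D_j=-\hat D_j^{\rmT}M$, while statement~1 is equivalent to each $\hat D_j$ being a skew-symmetric matrix, $\hat D_j^{\rmT}=-\hat D_j$. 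Combining the two, statement~1 holds iff $[M,\hat D_j]=0$ for all $j$: the Gram matrix commutes with the adjoint representation.

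To finish I would apply Schur's lemma. The $\hat D_j$ carry the representation $\text{(trivial)}\oplus\mathrm{ad}\,\mathfrak{su}(d)$ on $\mathbb{R}^{d^2}$, the adjoint summand being absolutely irreducible and inequivalent to the trivial summand for all $d\ge 2$, so the commutant is two-dimensional and every commuting operator is scalar on each summand. The trivial summand $W_0=\bigcap_j\ker\hat D_j$ consists of the coefficients of operators commuting with all $L_j$, i.e.\ of $\mathbb{R}\,1$, and hence equals $\mathrm{span}(M^{-1}\vec t)$, where $\vec t=(\tr L_1,\dots,\tr L_{d^2})^{\rmT}$. Since $M$ is symmetric and commutes with the representation, it is scalar on $W_0$; evaluating on $M^{-1}\vec t$ shows $\vec t$ is an eigenvector of $M$, so $W_0=\mathrm{span}(\vec t)$ and $\hat D_j\vec t=0$, and, the adjoint summand being irreducible, $M$ has a single remaining eigenvalue on $\vec t^{\perp}$; hence $M=\alpha I+\gamma\,\vec t\vec t^{\rmT}$, which is precisely \eref{eq:Ginner}, i.e.\ statement~4. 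Conversely, \eref{eq:Ginner} gives $M^{-1}\vec t\propto\vec t$, hence $\hat D_j\vec t=0$; substituting $M=\alpha I+\gamma\,\vec t\vec t^{\rmT}$ into $M\hat D_j=-\hat D_j^{\rmT}M$ then collapses to $\hat D_j=-\hat D_j^{\rmT}$, recovering statement~1. The main obstacle is this last step: keeping the two bilinear forms straight (the trace form with Gram matrix $M$ versus the standard form making $L$ orthonormal) and verifying that $\mathrm{ad}\,\mathfrak{su}(d)$ has scalar commutant, so that Schur pins $M$ down to the rank-one-plus-identity shape.
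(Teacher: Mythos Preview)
Your proof is correct and follows essentially the same route as the paper's: the equivalences $1\Leftrightarrow 2\Leftrightarrow 3$ are handled identically (pure-imaginary structure constants, positive/negative decomposition), and for $1\Leftrightarrow 4$ you run Schur's lemma on the Gram matrix $M$ commuting with the matrix adjoint representation, whereas the paper runs the same Schur argument on the dual object, the frame superoperator $\sum_j\douter{L_j}{L_j}$ commuting with $\ad_{L_k}$, landing on \eref{eq:Ginner} rather than \eref{eq:GframeSuper}. One small imprecision in your final clause: if $L_j$ is literally a rank-$1$ projector plus a multiple of the identity, the positive eigenvalues of $C^L_j$ are exactly $1$, so $H_j=(C^L_j)_+$ is already a genuine rank-$(d-1)$ projector with no rescaling of $L_j$ needed.
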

\begin{proof}
As noted earlier the structure constants and structure matrices are pure imaginary.  So the structure matrices are Hermitian if and only if they are anti-symmetric. Since the structure constants are automatically anti-symmetric in the first two indices, the equivalence  $1 \equi 2$ follows.

The implication $3 \imply 1$ is immediate.  To prove the implication $1\imply 3$ observe that we can write the structure matrices in the form $C^L_j = H_j - N_j$, where $H_j$ and $N_j$ are orthogonal positive semidefinite Hermitian matrices.  The anti-symmetry of the structure matrices then implies that $N^{\vphantom{T}}_j = H^\rmT _j$. If, in addition, $L_j$ has spectrum $\{\lambda_{j,k}: 1\leq k\leq d\}$, then $H_j$ has nonzero spectrum $\{\lambda_{j,k}-\lambda_{j,l}: \lambda_{j,k}>\lambda_{j,l}\}$. In particular, $H_j$ is a rank-$(d-1)$ projector if $L_j$ is a rank-1 projector plus a multiple of the identity.

To prove the equivalence $2\equi  4$   we first show that statement $2$ is equivalent to the requirement that $\sum_j\douter{L_j}{L_j}$ commutes with $\ad_{L_k}$ for all $k$.  In fact
\begin{equation}
\dbra{L_j} \ad_{L_k} \dket{L_l} = \tr(L_j [L_k,L_l]) = \tr(L_l [L_j,L_k]) = \sum_m C^L_{jkm}\dinner{L_m}{L_l}.
\end{equation}
Consequently,
\begin{equation}
\dbra{L_j} \ad_{L_k} = \sum_m C^L_{jkm} \dbra{L_m}
\end{equation}
and
\begin{equation}
\left( \sum_j\douter{L_j}{L_j} \right) \ad_{L_k} = \sum_{m,j} C^L_{jkm} \douter{L_j}{L_m} = - \sum_{m,j} C^L_{kjm} \douter{L_j}{L_m}.
\end{equation}
On the other hand,
\begin{equation}
\ad_{L_k}\left( \sum_j\douter{L_j}{L_j} \right) = \sum_{m,j} C^L_{kjm} \douter{L_m}{L_j} = \sum_{m,j} C^L_{kmj} \douter{L_j}{L_m}.
\end{equation}
The claim now follows.  We next observe that the semi-simplicity of the Lie algebra $\mathfrak{su}(d)$ means that the adjoint representation of $\mathfrak{u}(d)$ has only two irreducible components, namely, $\mathfrak{su}(d)$ and the one-dimensional subspace spanned by the identity.   According to  Schur's lemma, the requirement that $\sum_j\douter{L_j}{L_j}$ commutes with $\ad_{L_k}$ for all $k$ is equivalent to \eref{eq:GframeSuper} of  \thref{thm:GeoComInf}.
\end{proof}

\begin{proof}[Proof of \thref{thm:LieSIC}] To prove the implication $1\imply 2$ let $\{\Pi_j\}$ be a SIC in dimension~$d$ and let $L=\{L_j\}$ be the basis specified by \eref{eq:SICAlgBasis}.  Then $L$ satisfies \eref{eq:Ginner}  of  \thref{thm:GeoComInf}. It  follows from \lref{lem:structureconstants}   that the structure matrices are Hermitian.
The structure matrices  also  have rank $2(d-1)$ since each $L_j$ is a linear combination of a rank-1 projector and the identity.

To prove the implication $2\imply 1$ observe that if statement $2$ holds then it follows from \lref{lem:structureconstants} that the basis satisfies the three \esref{eq:Ginner}, \eqref{eq:G2design} and \eqref{eq:GframeSuper}   of \thref{thm:GeoComInf}.   Meanwhile, the rank condition ensures that each  $L_j$ is a linear combination of a rank-1 projector and the identity. According to  \thref{thm:SIC}, there exists a SIC $\{\Pi_j\}$ such that
\begin{equation}
L_j = \epsilon_j\ell(\Pi_j+\eta),
\end{equation}
as specified by \eref{eq:SICAlgBasis}.
\end{proof}

By virtue of  \lref{lem:structureconstants}, it is now straightforward to prove the additional statement proved in \rcite{ApplFF11}, that SIC existence implies the existence of a basis $L$ for which the structure matrices have the form $C^L_j = Q_j - Q^\rmT _j$, where the $Q_j$ are rank-$(d-1)$ projectors which are orthogonal to their own transposes.  This nice property comes for free once there exists a basis such that the structure matrices are Hermitian and rank $2(d-1)$.
Details are left to the reader.

\Thref{thm:LieSIC} and \lref{lem:structureconstants} have several interesting consequences.
When the basis operators $L_j$ have rank one, the structure matrices $C^L_j$ automatically have rank $2(d-1)$. The following corollary is an immediate consequence of \thref{thm:LieSIC} and this observation when $d\geq3$, but it also holds when $d=2$.
\begin{corollary}\label{cor:RankOneBasis}
Suppose $\{L_j\}$  is a   basis for the Lie algebra $\mathfrak{u}(d)$ that is composed of rank-1 Hermitian operators. Then the structure constants with respect to this basis are completely anti-symmetric if and only if
$\epsilon_j L_j/\ell $ with $\epsilon_j:=\sgn(\tr(L_j))$ form a SIC for some positive constant  $\ell$.
\end{corollary}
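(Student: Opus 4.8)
The plan is to prove both implications by characterizing the complete anti-symmetry of the structure constants through \eref{eq:Ginner} of \thref{thm:GeoComInf}, exploiting throughout the elementary fact that a rank-$1$ Hermitian operator $L_j=\tr(L_j)\outer{\phi_j}{\phi_j}$ (with $\ket{\phi_j}$ a unit vector) satisfies $\tr(L_j^2)=[\tr(L_j)]^2$, and in particular $\tr(L_j)\neq0$, so that $\epsilon_j:=\sgn(\tr(L_j))$ is well defined. The key organizational decision is to route the harder direction through the equiangular-lines bound of \thref{thm:MaximalEAL} rather than through \thref{thm:SIC}: the latter is the only ingredient carrying a dimensional restriction, and avoiding it is precisely what lets the corollary cover $d=2$ as well as $d\geq3$.

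For the easy direction, suppose $\Pi_j:=\epsilon_j L_j/\ell$ form a SIC for some $\ell>0$, so that $L_j=\epsilon_j\ell\Pi_j$. I would substitute the SIC relations $\tr(\Pi_j\Pi_k)=(d\delta_{jk}+1)/(d+1)$ and $\tr(\Pi_j)=1$ into $\tr(L_jL_k)$ and into $\tr(L_j)=\epsilon_j\ell$, and verify directly that \eref{eq:Ginner} holds with $\alpha=\ell^2 d/(d+1)>0$ and $\gamma=1/(d+1)$. Since $\{L_j\}$ is a Hermitian basis, \thref{thm:GeoComInf} then delivers the whole set of equivalent conditions, i.e.\ statement~4 of \lref{lem:structureconstants}; the equivalence $2\equi4$ of that lemma yields the complete anti-symmetry of the structure constants.

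For the converse, suppose the structure constants are completely anti-symmetric. By $2\equi4$ of \lref{lem:structureconstants} the basis satisfies \eref{eq:Ginner} with some $\alpha>0$ and, by \eref{eq:gamma}, $\gamma<1/d<1$. Setting $j=k$ in \eref{eq:Ginner} and using $\tr(L_j^2)=[\tr(L_j)]^2$ gives $[\tr(L_j)]^2(1-\gamma)=\alpha$, so $|\tr(L_j)|=\ell:=\sqrt{\alpha/(1-\gamma)}$ is independent of $j$ (here $1-\gamma>0$). Hence each $\Pi_j:=\epsilon_j L_j/\ell=\outer{\phi_j}{\phi_j}$ is a rank-$1$ projector, and feeding $\tr(L_j)=\epsilon_j\ell$ back into \eref{eq:Ginner} shows that the $d^2$ pure states $\Pi_j$ satisfy $\tr(\Pi_j\Pi_k)=\gamma$ for $j\neq k$ (with $\gamma\geq0$ automatic, being a fidelity, so that the hypotheses of \thref{thm:MaximalEAL} are met). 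They therefore form $d^2$ equiangular lines, and since $d^2$ is the maximal possible number, \thref{thm:MaximalEAL} forces $\{\Pi_j\}$ to be a SIC, completing the argument.

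The proof is essentially an assembly of earlier results, so there is no deep obstacle; the only point requiring genuine care is the choice of which prior result to invoke. Invoking \thref{thm:LieSIC} directly would establish the corollary only for $d\geq3$, since that theorem rests on the first part of \thref{thm:SIC}, which fails in dimension~$2$. Recognizing that the rank-$1$ hypothesis lets one bypass \thref{thm:SIC} entirely—reducing the substantive content to the fact that $d^2$ equiangular lines must form a SIC—is the one non-mechanical step, and it is what makes the uniform statement for all $d\geq2$ possible.
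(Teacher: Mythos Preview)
Your proof is correct and, like the paper's, is designed to cover all $d\ge 2$ by avoiding \thref{thm:SIC}.  The two arguments differ in which face of \thref{thm:GeoComInf} they exploit and which SIC characterization they invoke.  The paper works with the tensor form \eref{eq:G2design}: since each $L_j$ has rank~$1$, $L_j\otimes L_j$ is supported on the symmetric subspace, forcing $\beta=\alpha$ and hence $\sum_j L_j\otimes L_j=2\alpha P_{\mrm{s}}$; the minimal $2$-design characterization (\thref{thm:minimal2design}) then yields the SIC.  You instead work with the Gram form \eref{eq:Ginner}: the rank-$1$ identity $\tr(L_j^2)=[\tr(L_j)]^2$ forces $|\tr(L_j)|$ to be constant, so the normalized $\Pi_j$ are $d^2$ equiangular pure states, and the maximal-equiangular-lines characterization (\thref{thm:MaximalEAL}) finishes.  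Both routes are short; the paper's observation that rank-$1$ tensors live in the symmetric subspace is perhaps slightly cleaner, while your route has the minor virtue of making the value of $\ell$ explicit earlier.  Your diagnosis of why one must bypass \thref{thm:LieSIC} to reach $d=2$ matches the paper's own remark.
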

Alternatively, this corollary can be derived  as follows. According to \lref{lem:structureconstants}, the sufficiency is immediate. Conversely, if the structure constants are completely anti-symmetric, then $\sum_j L_j\otimes L_j=(\beta+\alpha)P_{\mrm{s}}+  (\beta-\alpha)P_{\mrm{a}}$. Since $L_j$ have rank one, $L_j\otimes L_j$ is supported on the symmetric subspace. It follows that $\beta=\alpha$ and $\sum_j L_j\otimes L_j=2\alpha P_{\mrm{s}}$. Define  $\ell=\sqrt{\alpha(d+1)/d}$ and $\Pi_j=\epsilon_j L_j/\ell$.
Then $\Pi_j$ are $d^2$ pure states (the normalization is not assumed here but will follow later) that satisfy $\sum_j \Pi_j\otimes \Pi_j=[2d/(d+1)] P_{\mrm{s}}$. According to \thref{thm:minimal2design}, $\{\Pi_j\}$ is a SIC and the corollary follows.

\begin{corollary}\label{cor:PureBasis}
Suppose $\{\Pi_j\}$ is  a  basis for the Lie algebra $\mathfrak{u}(d)$ that is composed of pure states. Then the structure constants  are completely anti-symmetric if and only if   $\{\Pi_j\}$ is a SIC.
\end{corollary}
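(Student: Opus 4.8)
The plan is to derive Corollary \ref{cor:PureBasis} as a special case of Corollary \ref{cor:RankOneBasis}, exploiting the fact that each $\Pi_j$ is now not merely rank one but a \emph{normalized} pure state, so that $\tr(\Pi_j)=\tr(\Pi_j^2)=1$. First I would invoke Corollary \ref{cor:RankOneBasis} directly: since $\{\Pi_j\}$ is a basis for $\mathfrak{u}(d)$ composed of rank-$1$ Hermitian operators, the structure constants are completely anti-symmetric if and only if there is a positive constant $\ell$ such that $\epsilon_j \Pi_j/\ell$ form a SIC, where $\epsilon_j = \sgn(\tr(\Pi_j))$. Because each $\Pi_j$ is a pure state we have $\tr(\Pi_j)=1>0$, so every $\epsilon_j=+1$, and the statement collapses to: the structure constants are completely anti-symmetric if and only if $\{\Pi_j/\ell\}$ is a SIC for some $\ell>0$.

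The remaining task is to pin down $\ell$ and confirm it equals $1$, so that $\{\Pi_j\}$ itself (not a rescaling) is the SIC. The cleanest route is to compare traces. A SIC consists of normalized pure projectors with $\tr(\Pi_j/\ell)=1/\ell$; but the SIC normalization (from \eqref{eq:SICinner} or the simplex condition $\tr(P_j P_k)=(d\delta_{jk}+1)/(d+1)$) forces each SIC element to have unit trace. Hence $1/\ell = 1$, giving $\ell=1$. Equivalently, one reads off $\ell=\sqrt{\alpha(d+1)/d}$ from the alternative derivation of Corollary \ref{cor:RankOneBasis}, and uses $\tr(\Pi_j^2)=1$ together with \eqref{eq:Lj} of Corollary \ref{cor:GeoComInf} to solve for $\alpha$; substituting $\tr(L_j^2)=(d\alpha+\beta)/d=1$ with $\beta=\alpha$ (the rank-one case) yields $\alpha=d/(d+1)$ and therefore $\ell=1$.

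With $\ell=1$ established, the forward direction is complete: anti-symmetric structure constants imply $\{\Pi_j\}$ is a SIC. The converse is even easier and I would dispatch it first or in parallel: if $\{\Pi_j\}$ is a SIC then it satisfies \eqref{eq:Ginner} of Theorem \ref{thm:GeoComInf} with the SIC parameters, so by the equivalence $2\equi 4$ in Lemma \ref{lem:structureconstants} the structure constants are completely anti-symmetric. I do not anticipate a genuine obstacle here, since all the machinery is in place; the only point demanding care is the normalization bookkeeping that fixes $\ell=1$, ensuring the corollary asserts the sharper conclusion that $\{\Pi_j\}$ \emph{is} a SIC rather than merely proportional to one. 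Because pure states cannot be rescaled and remain pure, this normalization is automatic, which is precisely why the pure-state hypothesis upgrades Corollary \ref{cor:RankOneBasis} to the crisper statement of Corollary \ref{cor:PureBasis}.
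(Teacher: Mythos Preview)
Your proposal is correct and follows essentially the same approach as the paper, which states \crref{cor:PureBasis} without an explicit proof and treats it as an immediate specialization of \crref{cor:RankOneBasis}. Your observation that $\tr(\Pi_j)=1$ forces $\epsilon_j=1$ and $\ell=1$ is exactly the normalization bookkeeping needed to make the specialization precise.
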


In the case of a SIC the structure constants are completely anti-symmetric.  It is interesting to ask for what other bases this is true.  The following two theorems provide a partial answer to that question.
\begin{theorem}\label{thm:NormalizedBasis}
Suppose $\{L_j\}$ is a   basis for the Lie algebra $\mathfrak{u}(d)$ for which $\tr(L_j^2)$ is constant. Let $\epsilon_j=1$ if $\tr(L_j)\ge 0$ and $-1$ otherwise. Then the following statements are equivalent:
\begin{enumerate}
\item The structure constants are completely anti-symmetric.
\item $\{\epsilon_j L_j\}$ is a regular simplex and, in case the $\epsilon_j L_j$ are not orthogonal,  $|\tr(L_j)|$ is a non-zero constant.
\end{enumerate}
\end{theorem}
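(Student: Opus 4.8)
The plan is to run everything through \lref{lem:structureconstants}, whose equivalence $2\equi 4$ says that the structure constants of a basis are completely anti-symmetric if and only if the basis satisfies the three equivalent equations of \thref{thm:GeoComInf}, in particular \eref{eq:Ginner}. Both statements can then be phrased as conditions on the Gram data $\tr(L_jL_k)$, and the task reduces to matching the "$2$-design Gram matrix" of \eref{eq:Ginner} against the regular-simplex Gram matrix $\tr(\epsilon_jL_j\epsilon_kL_k)=\alpha\delta_{jk}+\zeta$. The bridge between the two is supplied by \crref{cor:GeoComInf} (which, when $\beta\neq0$, makes the hypothesis "$\tr(L_j^2)$ constant" equivalent to the equiangularity of the $\epsilon_jL_j$) and \crref{cor:RegularSimplexGen} (which, when $\zeta\neq0$, upgrades "$\tr(\epsilon_jL_j)$ constant" to the full \thref{thm:GeoComInf} property). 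Throughout I will use that $\{\epsilon_jL_j\}$ is again a Hermitian basis for $\mathfrak{u}(d)$, and that $\tr(L_j)=\epsilon_j|\tr(L_j)|$ by the definition of $\epsilon_j$.

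For $1\imply2$, I would first invoke \lref{lem:structureconstants} to obtain \eref{eq:Ginner}, and then split on the value of $\beta$ (equivalently $\gamma$). If $\beta=0$ then \eref{eq:Ginner} collapses to $\tr(L_jL_k)=\alpha\delta_{jk}$, so the $L_j$---and hence the $\epsilon_jL_j$---are orthogonal and automatically form a regular simplex, with the conditional clause on $|\tr(L_j)|$ vacuous. If $\beta\neq0$, then the standing hypothesis that $\tr(L_j^2)$ is constant is precisely statement~\ref{it:2} of \crref{cor:GeoComInf}, which there forces statements~\ref{it:1}, \ref{it:5} and~\ref{it:6}: statement~\ref{it:1} gives that $|\tr(L_j)|$ is constant, statement~\ref{it:6} gives $\tr(L_j)\neq0$ (so the constant is non-zero, as also follows from $\sum_j[\tr(L_j)]^2=d\alpha+d^2\beta>0$), and statement~\ref{it:5} gives $\tr(L_jL_k)=\alpha\delta_{jk}+\beta\epsilon_j\epsilon_k/d$, whence $\tr(\epsilon_jL_j\epsilon_kL_k)=\alpha\delta_{jk}+\beta/d$. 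Since $\beta\neq0$ this is a non-orthogonal regular simplex, and statement~2 holds in full.

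For $2\imply1$, I would write the regular-simplex hypothesis as $\tr(\epsilon_jL_j\epsilon_kL_k)=\alpha\delta_{jk}+\zeta$ and again split on $\zeta$. If $\zeta=0$ the $\epsilon_jL_j$ are orthogonal, so multiplying by $\epsilon_j\epsilon_k$ and using $\epsilon_j^2=1$ gives $\tr(L_jL_k)=\alpha\delta_{jk}$, which is \eref{eq:Ginner} with $\gamma=0$; \lref{lem:structureconstants} then yields complete anti-symmetry. If $\zeta\neq0$ the extra hypothesis supplies a non-zero constant $c=|\tr(L_j)|$, so $\tr(\epsilon_jL_j)=\epsilon_j\tr(L_j)=c$ is constant---this is statement~2 of \crref{cor:RegularSimplexGen} applied to the basis $\{\epsilon_jL_j\}$. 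Because $\zeta\neq0$, that corollary gives that $\{\epsilon_jL_j\}$ satisfies \thref{thm:GeoComInf}, i.e.\ $\tr(\epsilon_jL_j\epsilon_kL_k)=\alpha\delta_{jk}+\gamma c^2$ with $\gamma=d\zeta/(\alpha+d^2\zeta)$. Multiplying through by $\epsilon_j\epsilon_k$ and using $\tr(L_j)=\epsilon_jc$ recovers $\tr(L_jL_k)=\alpha\delta_{jk}+\gamma\tr(L_j)\tr(L_k)$, so $\{L_j\}$ itself satisfies \eref{eq:Ginner}, and \lref{lem:structureconstants} finishes the proof.

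I expect the only real subtlety to be the orthogonal/non-orthogonal dichotomy, which is exactly why the theorem carries the conditional clause on $|\tr(L_j)|$: in the orthogonal case ($\beta=0$ or $\zeta=0$) the trace information is both unavailable from and unnecessary for the argument, whereas in the non-orthogonal case the constancy of $|\tr(L_j)|$ is what lets \crref{cor:GeoComInf} and \crref{cor:RegularSimplexGen} engage. The remaining care is purely sign bookkeeping---tracking $\epsilon_j^2=1$ and $\tr(L_j)=\epsilon_j|\tr(L_j)|$ when passing between $\{L_j\}$ and $\{\epsilon_jL_j\}$---which is routine.
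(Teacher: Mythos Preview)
Your proposal is correct and follows essentially the same approach as the paper: both directions run through \lref{lem:structureconstants} (equivalence $2\equi 4$) and split on whether the simplex is orthogonal. The only cosmetic difference is that for $2\imply 1$ the paper bypasses \crref{cor:RegularSimplexGen} and simply observes directly that statement~2 lets one write $\tr(\epsilon_jL_j\epsilon_kL_k)=\alpha\delta_{jk}+\gamma\tr(\epsilon_jL_j)\tr(\epsilon_kL_k)$ (with $\gamma=0$ in the orthogonal case and $\gamma=\zeta/c^2$ otherwise), whereas you package the same computation inside the corollary; the content is identical.
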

\begin{remark}
Unlike the usual definition of  sigh factors, here $\epsilon_j$ are nonzero even if $\tr(L_j)=0$.
\end{remark}

\begin{proof}
To prove the implication $1\imply 2$ observe that, according to \lref{lem:structureconstants},  if the  structure constants are completely anti-symmetric, then
\begin{equation}
\tr(L_j L_k)=\alpha\delta_{jk}+\frac{\beta}{\alpha+d\beta}\tr( L_j)\tr(L_k)
\end{equation}
 with $\alpha, \alpha+d\beta>0$.
 If $\beta=0$, then  $\{\epsilon_j L_j\}$ is an orthonormal basis and thus forms a regular simplex. Otherwise,  the constancy of $\tr(L_j^2)$ implies the constancy of $|\tr(L_j)|$,  and $\{\epsilon_jL_j\}$  again forms
  a regular simplex. The fact that the constant value of $|\tr(L_j)|$ is non-zero follows from the fact that $\{L_j\}$ is a basis.

To prove the implication $2\imply 1$ observe that statement $2$ implies
\begin{equation}
\tr(\epsilon_j L_j\epsilon_k L_k)=\alpha\delta_{jk}+\gamma\tr(\epsilon_j L_j)\tr(\epsilon_kL_k)
\end{equation}
for some real constants $\alpha, \gamma$.   The fact that the structure constants are completely anti-symmetric then follows  from \lref{lem:structureconstants}.
\end{proof}

\begin{theorem}
Suppose $\{L_j\}$ is a  basis for the Lie algebra $\mathfrak{u}(d)$ for which $|\tr(L_j)|$ is constant.  Let $\epsilon_j=1$ if $\tr(L_j)\ge 0$ and $-1$ otherwise. Then the following statements are equivalent:
\begin{enumerate}
\item The structure constants are completely anti-symmetric.
\item $\{\epsilon_j L_j\}$ is a regular simplex.
\end{enumerate}
\end{theorem}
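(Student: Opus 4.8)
The plan is to convert each of the two statements into an equivalent condition on the Gram matrix $\tr(L_jL_k)$ and then pass between them using the hypothesis. By \lref{lem:structureconstants} the structure constants are completely anti-symmetric if and only if the basis satisfies \eref{eq:Ginner} of \thref{thm:GeoComInf}, i.e.\ $\tr(L_jL_k)=\alpha\delta_{jk}+\gamma\tr(L_j)\tr(L_k)$; and by \crref{cor:RegularSimplexGen} (see the following remark) the operators $\{\epsilon_jL_j\}$ form a regular simplex if and only if their Gram matrix has the form $\tr(\epsilon_jL_j\,\epsilon_kL_k)=\alpha\delta_{jk}+\zeta$ for constants $\alpha,\zeta$. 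Writing $c:=|\tr(L_j)|$ for the common value, so that $\tr(L_j)=\epsilon_jc$, the argument is then pure sign bookkeeping. This parallels the proof of \thref{thm:NormalizedBasis}, but is cleaner: assuming $|\tr(L_j)|$ constant rather than $\tr(L_j^2)$ constant removes the need to treat the orthogonal case separately.

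For $1\Rightarrow2$ I would take the multiplicative Gram relation from \lref{lem:structureconstants}, substitute $\tr(L_j)=\epsilon_jc$, and multiply by $\epsilon_j\epsilon_k$, obtaining
\begin{equation}
\tr(\epsilon_jL_j\,\epsilon_kL_k)=\epsilon_j\epsilon_k\bigl[\alpha\delta_{jk}+\gamma\,\epsilon_jc\,\epsilon_kc\bigr]=\alpha\delta_{jk}+\gamma c^2,
\end{equation}
using $\epsilon_j^2=1$ on both the diagonal and the off-diagonal terms. This is exactly the regular-simplex condition \eqref{eq:RegularSimplex} with $\zeta=\gamma c^2$, so $\{\epsilon_jL_j\}$ is a regular simplex, and no distinction on the sign of $\gamma$ arises.

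For $2\Rightarrow1$ I would read off $\tr(L_jL_k)=\alpha\delta_{jk}+\epsilon_j\epsilon_k\zeta$ from the simplex condition. The single step that needs thought --- and the one place the hypothesis genuinely bites --- is recovering the multiplicative form, which requires writing $\epsilon_j\epsilon_k=\tr(L_j)\tr(L_k)/c^2$ and hence requires $c\neq0$. I would dispose of this by noting that $c=0$ is impossible for a basis of $\mathfrak{u}(d)$: if every $L_j$ were traceless, their span would lie in the $(d^2-1)$-dimensional space of traceless Hermitian operators and could not be all of $\mathfrak{u}(d)$, which contains the identity. With $c\neq0$ we get $\tr(L_jL_k)=\alpha\delta_{jk}+(\zeta/c^2)\tr(L_j)\tr(L_k)$, i.e.\ \eref{eq:Ginner} with $\gamma=\zeta/c^2$; the constraints $\alpha>0$ and $\alpha+d\beta>0$ come for free from \thref{thm:GeoComInf} once a basis satisfies this relation. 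Invoking \lref{lem:structureconstants} (statement~4 implies statement~2) then yields complete anti-symmetry of the structure constants, finishing the equivalence. I expect the verification that $c\neq0$ to be the only non-mechanical step; everything else is sign arithmetic.
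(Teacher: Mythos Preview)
Your proposal is correct and follows essentially the same route as the paper, which simply states that the argument is almost identical to that of \thref{thm:NormalizedBasis}: both directions reduce, via \lref{lem:structureconstants}, to sign bookkeeping on the Gram matrix. Your explicit handling of the potential edge case $c=0$ (ruled out because a basis of $\mathfrak{u}(d)$ cannot consist entirely of traceless operators) is the one point the paper leaves implicit, and your treatment of it is clean and correct.
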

\begin{proof}
The argument is almost the same as in the proof of  \thref{thm:NormalizedBasis}.
\end{proof}

\section{\label{sec:JordanAlgebra}A Jordan algebraic formulation of the SIC existence problem}
In the last section we established a connection between the SIC existence problem and the Lie algebra $\mathfrak{u}(d)$.  In this section we establish a connection which is in some ways analogous between the SIC existence problem and the Jordan algebra $\mathfrak{j}(d)$: \emph{i.e.}\ the algebra consisting of the set of operators on $\mathcal{H}$ equipped with   the anti-commutator
\begin{equation}
\{A,B\}=AB+BA
\end{equation}
as product (note that in the literature it is more common to take the product to be the anti-commutator scaled by the factor $1/2$).
We will also have occasion to consider the Jordan algebra $\mathfrak{j}_{\mrm{H}}(d)$  of all Hermitian operators on $\mathcal{H}$ and equipped with the same product.  The connections between Jordan algebras and homogeneous self-dual cones \cite{Wilc09,Wilc11,BarnW12} mean that this formulation of the SIC problem may be relevant to convex-operational approaches to quantum mechanics.

Let $L=\{L_j\}$ be a basis for $\mathfrak{j}(d)$ consisting of Hermitian operators (so $L$ is also a basis for $\mathfrak{j}_{\mrm{H}}(d)$).  Analogously to \esref{eq:LieStrucDef} and \eqref{eq:LieStrucMatDef} we define the structure constants $C^L_{jkl}$ by
\begin{equation}\label{eq:JordanStrucDef}
\{L_j,L_k\} = \sum_l C^L_{jkl} L_l
\end{equation}
and the structure matrices  $C^L_j$ by
\begin{equation}
(C^L_j)_{kl} = C^L_{jkl}.
\end{equation}
Note that for a given Hermitian basis the structure constants and structure matrices are the same irrespective of whether we consider the algebra $\mathfrak{j}(d)$ or the algebra $\mathfrak{j}_{\mrm{H}}(d)$.  Note also that if the basis is Hermitian the structure constants and structure matrices are real (as can be seen by taking Hermitian conjugates on both sides of \eref{eq:JordanStrucDef}).

Just as with the Lie algebraic formulation we need to make a distinction between the cases $d=2$ and $d>2$.  We accordingly prove two theorems:  a stronger one (\thref{thm:JordanSIC}) which holds when $d>2$ and which is analogous to \thref{thm:LieSIC}  in this paper, and a weaker one  (\thref{thm:JordanSIC2}) which holds when $d\ge 2$ and which is analogous to Theorem~7 of Appleby, Flammia and Fuchs \cite{ApplFF11}. The weaker theorem, though entails more assumptions, is still interesting (it establishes an analogy with the $Q-Q^\rmT $ property in the Lie algebraic case).
\begin{theorem}\label{thm:JordanSIC}
If $d>2$ the following statements are equivalent:
\begin{enumerate}
\item A SIC exists in dimension~$d$.
\item There exists a basis for $\mathfrak{j}_{\mrm{H}}(d)$ such that  each structure matrix is  a rank-$(2d-1)$ symmetric matrix  plus a multiple of the identity.
\end{enumerate}
The bases satisfying statement $2$ are precisely the ones of the form
\begin{align}\label{eq:JordAlgMain1LDef}
L_j = \epsilon\epsilon_j c(\Pi_j-a), \quad a= \frac{d+1-\epsilon\sqrt{d+1}}{d(d+1)},
\end{align}
where $\{\Pi_j\}$ is a SIC, $\epsilon$, $\epsilon_j$ are signs, and $c$ is a positive constant.
\end{theorem}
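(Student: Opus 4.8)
The plan is to mirror the proof of \thref{thm:LieSIC}, replacing the Lie-algebraic \lref{lem:structureconstants} by a Jordan-algebraic counterpart and replacing the commutator spectrum $\lambda_a-\lambda_b$ by the anticommutator spectrum $\lambda_a+\lambda_b$. The two genuinely new ingredients are (i) a structural lemma saying exactly when the Jordan structure matrices are symmetric, and (ii) a rank computation identifying which operators give a rank-$(2d-1)$ structure matrix. Once these are available, \thref{thm:SIC} does the final work of producing the SIC and fixing the form of the basis.

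First I would prove the Jordan analogue of \lref{lem:structureconstants}: for a Hermitian basis $\{L_j\}$ the structure matrices $C^L_j$ are symmetric if and only if the basis is orthogonal, that is $\tr(L_jL_k)=\alpha\delta_{jk}$ with $\alpha>0$. The key observations are that the fully lowered constants $\hat C_{jkl}:=\tr(\{L_j,L_k\}L_l)$ are completely symmetric for any Hermitian operators (this uses only cyclicity of the trace and the symmetry of the anticommutator); that, with Gram matrix $M_{kl}=\tr(L_kL_l)$, one has $\hat C_j = C^L_j M$ with $\hat C_j$ and $M$ symmetric, so $C^L_j$ is symmetric exactly when it commutes with $M$; and that the matrix of the superoperator $\sum_k\douter{L_k}{L_k}$ in the basis $\{L_k\}$ is precisely $M$, so this commutation means $\sum_k\douter{L_k}{L_k}$ commutes with every Jordan multiplication $B\mapsto\{L_j,B\}$. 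The punchline, and the point where the Jordan case diverges sharply from the Lie case, is that the commutant of all Jordan multiplications consists only of scalar superoperators: setting $B=1$ forces any commuting superoperator to have the form $\frac12\{\,\cdot\,,C\}$, and the remaining Jordan identity then forces $C$ to be central. Hence $\sum_k\douter{L_k}{L_k}=\alpha\mathbf{I}$, i.e.\ the basis is orthogonal and satisfies \eqref{eq:GGframeSuper} with $\beta=0$. (In the Lie case $[\,\cdot\,,1]=0$ leaves the extra component $\douter11$ in the commutant, which is why a nonzero $\beta$ is permitted; here $\{\,\cdot\,,1\}\neq0$ removes it.)

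Next I would carry out the rank computation. The structure matrix $C^L_j$ is the matrix of the Jordan multiplication $L_k\mapsto\{L_j,L_k\}$, whose eigenvalues are the pairwise sums $\lambda_{j,a}+\lambda_{j,b}$ of the eigenvalues of $L_j$. Requiring $C^L_j$ to be a rank-$(2d-1)$ symmetric matrix plus a multiple of the identity says that some value $c$ occurs among these sums with multiplicity exactly $d^2-(2d-1)=(d-1)^2$. A counting argument, writing $\#\{(a,b):\lambda_{j,a}+\lambda_{j,b}=c\}=\sum_a m(c-\lambda_{j,a})\leq d\,m^\ast$ with $m(\cdot)$ the spectral multiplicity and $m^\ast$ the largest multiplicity, shows that for $d>2$ the value $(d-1)^2$ can be attained only when $L_j$ has one eigenvalue of multiplicity $d-1$ and a distinct one of multiplicity $1$, i.e.\ when $L_j$ is a rank-$1$ projector plus a multiple of the identity. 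This is exactly where $d>2$ is essential: for $d=2$ one has $(d-1)^2=1$, and every non-scalar Hermitian operator already has this form, so the rank condition carries no information.

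With both ingredients I would assemble the equivalence. For $1\imply2$, take the basis \eqref{eq:JordAlgMain1LDef} built from a SIC; a direct evaluation of $\tr(L_jL_k)$ shows that the cross term vanishes, so the basis is orthogonal (hence the structure matrices are symmetric), while each $L_j$ being a rank-$1$ projector plus a multiple of the identity supplies the rank $2d-1$. For $2\imply1$, the symmetry of the structure matrices forces orthogonality, hence \eqref{eq:GGframeSuper} with $\alpha>0$ and $\beta=0$, and the rank condition forces each $L_j$ to be a rank-$1$ projector plus a multiple of the identity; \thref{thm:SIC} (applicable since $d\geq3$) then produces a SIC with $L_j=a_j\Pi_j+b_j$, and matching $a_j,b_j$ to the orthogonality normalisation recovers the form \eqref{eq:JordAlgMain1LDef}. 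I expect the main obstacle to be the commutant step of the Jordan lemma, since it is responsible for the qualitative break from the Lie case (the collapse to $\beta=0$); the rank count is the other delicate point, and is precisely where the hypothesis $d>2$ must be invoked.
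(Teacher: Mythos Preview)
Your proposal is correct and follows the same architecture as the paper's proof: a Jordan analogue of \lref{lem:structureconstants} (the paper's \lref{lem:JordanStructure}) showing that symmetric structure matrices force an orthogonal basis, a rank characterisation (the paper's \lref{lem:JordanRank}) showing that the rank-$(2d-1)$-plus-scalar condition forces each $L_j$ to be a rank-$1$ projector plus a scalar, and then an appeal to \thref{thm:SIC} with $\beta=0$.

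The two differences are only in the sub-arguments. For the structural lemma, the paper proves that the superoperators $\{f_A\}$ act irreducibly on $\mathfrak{j}_{\mrm{H}}(d)$ (\lref{lem:JordanIrreducible}) and then invokes Schur's lemma to force the Gram matrix to be scalar; your direct commutant computation---evaluating at $B=1$ to get $\Phi=\tfrac12 f_C$ and then using $[[C,A],B]=0$ to force $C$ central---is a valid and somewhat more elementary alternative that bypasses the irreducibility step. For the rank step, the paper's \lref{lem:JordanRank} separates out the $d=3$ exceptional spectrum $\{\lambda,-\lambda,-\lambda\}$ before observing that it, too, is a rank-$1$ projector plus a scalar; your multiplicity bound $(d-1)^2\le d\,m^\ast\Rightarrow m^\ast\ge d-1$ (for $d>2$) reaches the same conclusion in one stroke without needing the case split.
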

\begin{remark}
Thus, in the Lie algebraic case the structure matrices are required to be anti-symmetric, while in the Jordan algebraic case they are required to be symmetric.  However, in the later case the rank condition is slightly more complicated.

Observe that requiring the structure matrices to be symmetric is equivalent to requiring the structure constants to be completely symmetric---by contrast to the Lie algebraic case where the structure constants are required to be completely anti-symmetric.
\end{remark}

\begin{theorem}\label{thm:JordanSIC2}
If $d\ge 2$ the following statements are equivalent:
\begin{enumerate}
\item A SIC exists in dimension~$d$.
\item There exists a basis $L$ for $\mathfrak{j}_{\mrm{H}}(d)$ such that each structure matrix is of the form
\begin{equation}
C^L_j= Q^{\vphantom{T}}_j+Q_j^\rmT + 2P^{\vphantom{T}}_j -2a_j,
\end{equation}
where $Q_j$ is  a  rank-$(d-1)$ projector which is orthogonal to its own transpose, $P_j$  a real rank-$1$ projector orthogonal to $Q^{\vphantom{T}}_j$ and $Q^\rmT _j$, and $a_j$  a real constant.
\end{enumerate}
If these statements hold, the constants $a_j$ in statement $2$ are all equal to $a$ in \eref{eq:JordAlgMain1LDef},
and the basis $L$ is given by
\begin{equation}\label{eq:JordanSICBasis}
L_j = \Pi_j-a,
\end{equation}
where $\{\Pi_j\}$ is a SIC.
\end{theorem}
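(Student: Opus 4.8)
The plan is to mirror the structure of the proof of \thref{thm:LieSIC}, but adapted to the Jordan-algebraic setting where the structure constants $C^L_{jkl}$ are \emph{symmetric} rather than anti-symmetric. The key algebraic observation is that the anti-commutator structure matrices encode the same geometric data as the Hilbert--Schmidt inner products of the basis. Specifically, taking the Hilbert--Schmidt inner product of \eref{eq:JordanStrucDef} with $L_m$ gives $\tr(\{L_j,L_k\}L_m)=\sum_l C^L_{jkl}\tr(L_lL_m)$, so that $C^L_{jkl}$ is related to the totally symmetric triple-product tensor $\tr(\{L_j,L_k\}L_l)$ contracted against the inverse Gram matrix. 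Requiring $C^L_j$ to be symmetric is then equivalent to requiring the Gram matrix to have a special form, and I expect this to translate, via a Jordan-algebra analogue of \lref{lem:structureconstants}, into the condition that the basis satisfies the three equivalent equations \eqref{eq:Ginner}, \eqref{eq:G2design}, \eqref{eq:GframeSuper} of \thref{thm:GeoComInf}.

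For the direction $1\imply 2$, I would start from a SIC $\{\Pi_j\}$, define $L_j$ by \eref{eq:JordAlgMain1LDef}, and verify directly that this basis satisfies \eref{eq:Ginner}. One then computes the anti-commutator $\{L_j,L_k\}=\{c\epsilon\epsilon_j(\Pi_j-a),\,c\epsilon\epsilon_k(\Pi_k-a)\}$ and checks, using the spectral decomposition of each $\Pi_j$, that each structure matrix $C^L_j$ is symmetric and has the rank stated. Since $\Pi_j$ is a rank-$1$ projector, $L_j$ is a linear combination of a rank-$1$ projector and the identity, and the spectrum of the corresponding Jordan-multiplication operator $B\mapsto\{L_j,B\}$ can be read off from the eigenvalues of $L_j$: if $L_j$ has eigenvalues $\lambda,\mu,\dots,\mu$ (one distinct, $d-1$ repeated), the anti-commutator operator has eigenvalues $\lambda_p+\lambda_q$ over pairs, giving exactly one eigenvalue $2\lambda$, the value $\lambda+\mu$ with multiplicity $2(d-1)$, and $2\mu$ with multiplicity $\binom{d-1}{2}$-type degeneracy. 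Subtracting the appropriate multiple of the identity (the $2\mu$ eigenvalue) leaves a matrix of rank $2d-1=1+2(d-1)$, matching statement~2.

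For the harder direction $2\imply 1$, I would assume the structure matrices have the stated form and argue that symmetry of the $C^L_j$ forces the basis to satisfy \thref{thm:GeoComInf}, while the rank-$(2d-1)$ condition forces each $L_j$ to be a linear combination of a rank-$1$ projector and the identity (running the eigenvalue-counting argument of the previous paragraph in reverse: rank $2d-1$ of $C^L_j-(\text{multiple of }1)$ is possible only when $L_j$ has exactly two distinct eigenvalues with the repeated one of multiplicity $d-1$). With both conditions in hand, I would invoke \thref{thm:SIC} with the appropriate $\alpha,\beta$ to conclude that there is a SIC $\{\Pi_j\}$ with $L_j=a_j\Pi_j+b_j$, and then match constants to recover the explicit form \eref{eq:JordAlgMain1LDef}. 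The main obstacle I anticipate is establishing the precise equivalence between symmetry of the Jordan structure matrices and condition~\eqref{eq:GframeSuper}: unlike the Lie case, where anti-symmetry of the structure constants had the clean interpretation via commuting with $\ad_{L_k}$ and Schur's lemma, here I must find the correct operator-theoretic reformulation of complete symmetry of $C^L_{jkl}$. I expect this to go through a Jordan-algebra analogue of the superoperator identity, where the relevant intertwining object is the Jordan-multiplication superoperator $\sum_j\douter{L_j}{L_j}$ commuting with each map $B\mapsto\{L_k,B\}$, and I would need a representation-theoretic decomposition of the Jordan multiplication representation (into the identity, the symmetric traceless part, and possibly further components) to apply Schur's lemma correctly—this decomposition is more delicate than in the Lie case and is where the real work lies.
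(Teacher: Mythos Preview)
Your overall strategy mirrors the paper's, but there is one misconception and one genuine gap.

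\textbf{The misconception.} You anticipate that the Jordan analogue of \lref{lem:structureconstants} will say ``symmetric structure matrices $\Leftrightarrow$ the basis satisfies \thref{thm:GeoComInf},'' and you worry about decomposing the Jordan-multiplication representation into several pieces. In fact the situation is simpler and sharper than you expect: the superoperators $f_A\colon B\mapsto\{A,B\}$ act \emph{irreducibly} on all of $\mathfrak{j}_{\mrm{H}}(d)$ (this is \lref{lem:JordanIrreducible} in the paper). Consequently, Schur's lemma forces the Gram matrix to be a scalar multiple of the identity, not merely of the form $\alpha\mathbf{I}+\beta\douter{1}{1}$. So symmetry of the $C^L_j$ is equivalent to $\{L_j/\ell\}$ being an \emph{orthonormal} basis, i.e.\ the $\beta=0$ case of \thref{thm:GeoComInf}. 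This is \lref{lem:JordanStructure}, the Jordan counterpart of \lref{lem:structureconstants}.

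\textbf{The gap.} Because the conclusion is $\beta=0$, the first part of \thref{thm:SIC} does not apply when $d=2$; you must verify one of the extra hypotheses listed there. Your sketch does not address this, so as written it only reproves \thref{thm:JordanSIC} (the $d>2$ result), not the present theorem whose whole point is to cover $d=2$ as well. The way out is to exploit the \emph{specific} spectral form in statement~2: from $C^L_j = Q_j+Q_j^\rmT+2P_j-2a_j$ you can read off the eigenvalues of $C^L_j$ as $2-2a_j,\,1-2a_j,\,1-2a_j,\,-2a_j$ (for $d=2$), hence the eigenvalues of $L_j$ are $1-a_j$ and $-a_j$. This gives $2\tr(L_j^2)-[\tr(L_j)]^2=1$ independently of $j$, which is exactly statement~\ref{it:4} of \crref{cor:GeoComInf}, unlocking the $d=2$ clause of \thref{thm:SIC}. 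The same spectral matching is also what pins down the normalization: for $d>2$ one first invokes \thref{thm:JordanSIC} to get $L_j=c\epsilon\epsilon_j(\Pi_j-a)$, and then compares the spectrum of $C^L_j$ coming from this form with the assumed $Q_j+Q_j^\rmT+2P_j-2a_j$ to force $c\epsilon\epsilon_j=1$ and $a_j=a$, yielding the precise basis \eqref{eq:JordanSICBasis}. Your proposal mentions ``matching constants'' but does not indicate that the explicit projector structure (with coefficients $1,1,2$) is the mechanism for doing so.

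A minor point: your rank argument ``rank $2d-1$ forces two eigenvalues with the repeated one of multiplicity $d-1$'' needs the caveat of \lref{lem:JordanRank}---when $d=3$ the spectrum $\{\lambda,-\lambda,-\lambda\}$ is also possible, though it is still a rank-$1$ projector plus a multiple of the identity.
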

\begin{remark}
In the Lie algebraic case each structure matrix is of the form $Q^{\vphantom{T}}_j -Q^\rmT _j$ where $Q_j$ is a rank $d-1$ projector orthogonal to its own transpose.  The Jordan algebraic case is similar to that, but slightly more complicated.

Note that the  $P_j$, being real, are automatically symmetric---so the result is consistent with \thref{thm:JordanSIC}.
The $Q_j$ are necessarily not real (a real projector is  identical to its transpose, and so cannot be orthogonal to it except in the trivial case of the zero projector).  However, the combination $Q^{\vphantom{T}}_j + Q^\rmT _j$ is, of course, real.

Interestingly, the basis in \eref{eq:JordanSICBasis} is identical with the one in \eref{eq:AdjointBasis} apart from an overall scale factor.
\end{remark}

Before proving these theorems we need to develop some machinery.  By analogy with the adjoint representation of a Lie algebra there is, associated to each element $A\in\mathfrak{j}(d)$,  a linear map $f_A \colon \mathfrak{j}(d) \to \mathfrak{j}(d)$ defined by
\begin{equation}
f_A(B) = \{A,B\}.
\end{equation}
Note that, unlike the Lie algebraic case, this does not give us a representation of the algebra since it is  generally not the case that $f_{\{A,B\}} = \{f_A,f_B\}$. If $A$ is Hermitian we make no notational distinction between the map $f_A \colon \mathfrak{j}(d) \to \mathfrak{j}(d)$, and its restriction to the space of Hermitian operators $f_A \colon \mathfrak{j}_{\mrm{H}}(d) \to \mathfrak{j}_{\mrm{H}}(d)$.

It is often more convenient to represent linear maps such as $f_A$ with matrices.  Given a basis $L=\{L_j\}$ for $\mathfrak{j}_{\mrm{H}}(d)$,   the anticommutator $\{A, L_k\}$ can be expanded as
\begin{equation}\label{eq:JordanRep}
\{A,L_k\}=\sum_l (C^L_A)_{kl}L_l.
\end{equation}
Then the transpose of the matrix $C^L_A$ defined by the expansion coefficients is the matrix representation of $f_A$ relative to the basis $L$. Note also that $C^L_{L_j}$ is just the structure matrix~$C^L_j$.

Let $\{\lambda_j \colon j = 1,\dots, d\}$ be the eigenvalues of $A\in \mathfrak{j}_{\mrm{H}}(d)$.  It is straightforward to verify, analogously to the Lie algebraic case (\emph{cf.}\ the proof of Lemma~11 in \rcite{ApplFF11}), that the eigenvalues of $f_A$ and that of $C^L_A$ are $\{\lambda_j + \lambda_k\colon j,k=1,\dots, d\}$ (note that the eigenvalues of $f_A$ are the same, irrespective of whether one considers it as acting on $\mathfrak{j}_{\mrm{H}}(d)$ or $\mathfrak{j}(d)$).    In particular, if $A$ is a real constant, say $A=c$, then $f_A$ is $2c$ times the identity map and $C^L_A=2c$ irrespective of the specific basis.
We  also have the following analogue of Lemma~11 of \rcite{ApplFF11}:
  \begin{lemma}\label{lem:JordanRank}
Suppose $A$ is a Hermitian operator and $L=\{L_j\}$ is any basis for $\mathfrak{j}_{\mrm{H}}(d)$.
If $d\neq 3$, then  $f_A$ and $C^L_A$ have rank $2d-1$ if and only if $A$ has rank 1.
If $d=3$, then $f_A$ and $C^L_A$ have rank $2d-1$ if and only if $A$ has rank 1 or  has spectrum of the form $\{\lambda, -\lambda,-\lambda\}$ with $\lambda\neq0$.
\end{lemma}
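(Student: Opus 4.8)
The plan is to reduce the statement to a purely combinatorial question about the multiset of eigenvalues of $A$, using the spectral fact recalled just above: if $A$ has eigenvalues $\lambda_1,\dots,\lambda_d$ then $f_A$ and $C^L_A$ have eigenvalues $\{\lambda_j+\lambda_k\colon 1\le j,k\le d\}$, a multiset of size $d^2$. Since the rank is the number of nonzero eigenvalues, $f_A$ has rank $2d-1$ if and only if exactly $(d-1)^2$ of the ordered pairs $(j,k)$ satisfy $\lambda_j+\lambda_k=0$. Because these eigenvalues do not depend on the basis $L$, the whole lemma is equivalent to classifying the spectra for which this count equals $(d-1)^2$, and the answer will automatically be basis-independent.

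First I would set up the bookkeeping for that count. Writing $m(v)$ for the multiplicity of the value $v$ in the spectrum, let $m_0=m(0)$, let $r=d-m_0$ be the number of nonzero eigenvalues counted with multiplicity, and let $P=\sum_{v>0}m(v)m(-v)$. The number of ordered pairs summing to zero is then $N=m_0^2+2P$, so the target condition $N=(d-1)^2$, after substituting $d=m_0+r$ and simplifying, becomes the clean identity
\[
(r-1)(2m_0+r-1)=2P.
\]
This single equation in nonnegative integers carries all the information in the lemma, and the remaining work is to solve it subject to the obvious constraints.

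The key step, which I expect to be the main obstacle, is to show that no solution exists once $r\ge4$, so that the infinitely many a priori possibilities collapse to a short list. For this I would bound $P$ from above: if $A'$ and $B'$ denote the numbers of strictly positive and strictly negative eigenvalues (so $A'+B'=r$), then $P=\sum_{v>0}m(v)m(-v)\le A'B'\le r^2/4$. On the other hand $m_0\ge 0$ forces the left-hand side of the identity to be at least $(r-1)^2$. Combining gives $(r-1)^2\le r^2/2$, that is $r^2-4r+2\le 0$, hence $r\le 3$. The delicate point here is to track the exact equality conditions in $P\le A'B'$, since these are precisely what will pin down the spectrum in the one surviving nontrivial case.

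Finally I would dispatch the four remaining cases $r\in\{0,1,2,3\}$. The case $r=0$ (that is, $A=0$) contradicts the identity; the case $r=1$ means exactly that $A$ has rank $1$, and then the identity holds automatically for every $d$; the case $r=2$ is eliminated by a parity argument, since there the identity reads $2m_0+1=2P$ with an odd left-hand side; and the case $r=3$ forces $P=2m_0+2$ together with $P\le 2$, whence $m_0=0$ (so that $d=r=3$) and $P=2$. The equality analysis of $P\le A'B'$ then shows the three nonzero eigenvalues must share a single magnitude, giving spectrum $\{\lambda,-\lambda,-\lambda\}$ with $\lambda\neq 0$. Collecting these outcomes yields exactly the stated dichotomy: when $d\neq 3$ rank $1$ is the only possibility, while for $d=3$ the spectrum $\{\lambda,-\lambda,-\lambda\}$ is admitted in addition.
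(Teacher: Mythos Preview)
Your argument is correct and self-contained. The reduction to the diophantine identity $(r-1)(2m_0+r-1)=2P$ is clean, the bound $(r-1)^2\le 2P\le 2A'B'\le r^2/2$ correctly forces $r\le 3$, and your case analysis for $r\in\{0,1,2,3\}$ is sound (including the parity obstruction at $r=2$ and the equality analysis at $r=3$, where $P=A'B'=2$ indeed forces all nonzero eigenvalues to share a single magnitude).

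As for comparison with the paper: the paper does not actually give a proof of this lemma. It only remarks that the argument ``proceeds along essentially the same lines as the proof of Lemma~11 of \rcite{ApplFF11}'' and is ``a little more tedious''. So you have supplied what the paper omits. Your approach---reducing to the combinatorics of the multiset $\{\lambda_j+\lambda_k\}$ via the spectral fact stated just before the lemma---is the natural one and presumably coincides in spirit with the referenced argument, but your integer-counting formulation (isolating the single identity in $m_0$, $r$, $P$ and then bounding $P$ by $A'B'$) is tidier than a brute case enumeration and makes the role of $d=3$ transparent.
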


\begin{proof}
Though a little more tedious,  the proof proceeds along essentially the same lines as the proof of Lemma~11 of \rcite{ApplFF11}.
\end{proof}

In the Lie algebraic case the fact that the adjoint representation of $\mathfrak{u}(d)$ has only  two irreducible components played an important role (\emph{cf.}\ the proof of \lref{lem:structureconstants}).  In the Jordan algebraic case the situation is even simpler since there is only one irreducible component, as the following Lemma shows.

\begin{lemma}\label{lem:JordanIrreducible}
The action of the superoperators $\{f_A\colon A \in \mathfrak{j}_{\mrm{H}}(d)\}$ is irreducible on both $\mathfrak{j}(d)$ and~$\mathfrak{j}_{\mrm{H}}(d)$.
\end{lemma}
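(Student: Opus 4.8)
The plan is to show that the associative algebra $\mathcal{A}$ generated by the superoperators $\{f_A : A\in\mathfrak{j}_{\mrm{H}}(d)\}$ is the \emph{full} algebra of linear maps on the relevant space, from which irreducibility is immediate. Write $\mathcal{L}_A(B)=AB$ and $\mathcal{R}_A(B)=BA$ for the left- and right-multiplication superoperators, so that $f_A=\mathcal{L}_A+\mathcal{R}_A$. Since left and right multiplications commute, a short direct computation gives the key identity
\begin{equation}
[f_A,f_B]=\ad_{[A,B]}=\mathcal{L}_{[A,B]}-\mathcal{R}_{[A,B]},
\end{equation}
where $\ad_C(X)=[C,X]$. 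Because $A,B$ are Hermitian, $[A,B]$ is anti-Hermitian, so $\ad_{[A,B]}$ again preserves Hermiticity; in particular $[f_A,f_B]\in\mathcal{A}$ acts on $\mathfrak{j}_{\mrm{H}}(d)$.

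First I would use the fact that $\mathfrak{su}(d)$ is perfect for $d\ge 2$ (it equals its own derived algebra) to conclude that the commutators $[A,B]$ of Hermitian operators linearly span all traceless anti-Hermitian operators. Hence $\mathcal{A}$ contains $\ad_C$ for every traceless anti-Hermitian $C$. Passing to the complexification $\mathcal{A}\otimes\mathbb{C}$, which acts on $\mathfrak{j}_{\mrm{H}}(d)\otimes_{\mathbb{R}}\mathbb{C}=\mathcal{B}(\mathcal{H})=\mathfrak{j}(d)$, I would then combine, for each traceless Hermitian $H$, the generator $f_H=\mathcal{L}_H+\mathcal{R}_H$ with the element $-\rmi\,\ad_{\rmi H}=\mathcal{L}_H-\mathcal{R}_H$ to recover both $\mathcal{L}_H$ and $\mathcal{R}_H$. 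Together with $\mathcal{L}_1=\mathcal{R}_1=\mathbf{I}$, and the fact that traceless Hermitian operators complex-span all traceless operators, this yields $\mathcal{L}_A,\mathcal{R}_A\in\mathcal{A}\otimes\mathbb{C}$ for every $A\in\mathcal{B}(\mathcal{H})$.

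Next, since $\mathcal{B}(\mathcal{H})\cong M_d(\mathbb{C})$ is a simple algebra, the left and right multiplications together generate the full endomorphism algebra $\mrm{End}_{\mathbb{C}}(\mathcal{B}(\mathcal{H}))$ (equivalently, the natural map $M_d\otimes M_d^{\mrm{op}}\to\mrm{End}(M_d)$ given by $(A\otimes B)(X)=AXB$ is an isomorphism onto $M_{d^2}$). Therefore $\mathcal{A}\otimes\mathbb{C}=\mrm{End}_{\mathbb{C}}(\mathcal{B}(\mathcal{H}))$, which acts irreducibly on $\mathfrak{j}(d)$ (indeed it carries any nonzero vector to any other), settling the claim for $\mathfrak{j}(d)$. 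For $\mathfrak{j}_{\mrm{H}}(d)$ I would invoke the standard principle that a real representation whose complexification is absolutely irreducible is itself irreducible: any nonzero $\mathcal{A}$-invariant real subspace $W\subseteq\mathfrak{j}_{\mrm{H}}(d)$ complexifies to a nonzero invariant subspace of $\mathfrak{j}(d)$, forcing $W=\mathfrak{j}_{\mrm{H}}(d)$.

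The main obstacle is recovering the \emph{antisymmetric} multiplications $\mathcal{L}_A-\mathcal{R}_A$, since neither $\mathcal{L}_A$ nor $\mathcal{R}_A$ individually preserves Hermiticity and the generators $f_A$ supply only the symmetric combination $\mathcal{L}_A+\mathcal{R}_A$. The commutator identity together with the perfectness of $\mathfrak{su}(d)$ is precisely what produces these antisymmetric maps; the only remaining care lies in the real-versus-complex bookkeeping, which the complexification argument handles cleanly.
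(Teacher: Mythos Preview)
Your proof is correct and takes a genuinely different route from the paper. The paper argues directly by contradiction: assuming a nontrivial invariant subspace $\mathcal{S}\subseteq\mathfrak{j}(d)$, it first shows that $\mathcal{S}^\perp$ is also invariant (using that $f_A$ is self-adjoint for the Hilbert--Schmidt inner product), then chooses a diagonal $A$ with eigenvalues $10^r$ so that each matrix unit $E_{rr}$ is a simple eigenvector of $f_A$ and hence lies in $\mathcal{S}$ or in $\mathcal{S}^\perp$, and finally uses $f_{E_{rs}}(E_{rr})=E_{rs}=f_{E_{rs}}(E_{ss})$ to force all matrix units into the same piece. This is short and entirely elementary, relying only on explicit computations with matrix units. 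Your argument instead establishes the strictly stronger statement that the associative algebra generated by the $f_A$ is the \emph{full} endomorphism algebra, via the commutator identity $[f_A,f_B]=\ad_{[A,B]}$, perfectness of $\mathfrak{su}(d)$, and the isomorphism $M_d\otimes M_d^{\mrm{op}}\cong\mrm{End}(M_d)$. The payoff is a more conceptual picture (and a conclusion that would immediately identify the commutant), at the cost of importing some Lie- and ring-theoretic facts that the paper's bare-hands argument avoids.
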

\begin{proof}
Suppose, on the contrary,  that the action on $\mathfrak{j}(d)$ was reducible. Let $\mathcal{S}$ be a nontrivial  invariant subspace under this action and $\mathcal{S}^\bot$ its orthogonal complement with respect to the Hilbert-Schmidt inner product. Then $\mathcal{S}^\bot$ is also invariant according to the following equation with $B\in \mathcal{S}^\bot$ and $C\in \mathcal{S}$:
\begin{equation}
\dinner{C}{f_A(B)}=\tr(C^\dag AB)+\tr(C^\dag BA)=\dinner{f_{A^\dag}(C)}{B}=0.
\end{equation}
We now establish a contradiction by considering a special operator.
Let $\{|r\rangle\colon r=1, \dots, d\}$ be an orthonormal basis for $\mathcal{H}$ and let $E_{rs} = |r\rangle \langle s|$.   Define $A=\sum_r \lambda_r E_{rr}$ with $\lambda_r = 10^r$.  Then $E_{rr}$ is the unique eigenvector of $f_A$ with eigenvalue $2\lambda_r$.  Consequently, each $E_{rr}$ belongs either to $\mathcal{S}$ or to $\mathcal{S}^\bot$.  The equation
\begin{equation}
f_{E_{rs}}(E_{rr})=E_{rs}=f_{E_{rs}} (E_{ss}),\quad s\neq r,
\end{equation}
then implies that  $\{E_{rs} \colon r,s=1,\dots,d\}$ is contained   either in $\mathcal{S}$ or in $\mathcal{S}^{\bot}$.  But the $E_{rs}$ are a basis for $\mathfrak{j}(d)$ so this contradicts the assumption that $\mathcal{S}$ is non-trivial.

The irreducibility of  the action on $\mathfrak{j}_{\mrm{H}}(d)$  is  an easy consequence of its irreducibility on~$\mathfrak{j}(d)$.
\end{proof}

Before proving our main results,  \thsref{thm:JordanSIC} and~\ref{thm:JordanSIC2}, we need to establish the following lemma.
\begin{lemma}\label{lem:JordanStructure}
Let $L=\{L_j\}$ be a basis for $\mathfrak{j}_{\mrm{H}}(d)$.  Then the following statements are equivalent:
\begin{enumerate}
\item $C^L_A$ is   symmetric for any Hermitian operator $A$.

\item For any Hermitian operator $A$, $C^L_A$ can be written as  $C^L_A=2S+H+H^\rmT$, where  $H$ is a Hermitian matrix which is orthogonal to its own transpose, and where $S$ is a real symmetric matrix  orthogonal to $H$ and  $H^\rmT$ and having the same nonzero spectrum as~$A$.

\item $\{L_j/\ell\}$ is an orthonormal basis for some positive constant $\ell$.
\end{enumerate}
If, in addition, $A$ is a rank-1 projector, then $H$ and $S$ in statement~2 can be chosen to be projectors with ranks $d-1$ and 1, respectively.
\end{lemma}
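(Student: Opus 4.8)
The plan is to prove the cycle $2 \imply 1 \imply 3 \imply 2$, and then obtain the rank-$1$ addendum by specializing the construction used for $3 \imply 2$. The implication $2\imply 1$ is immediate: $2S$ is symmetric and, since $H$ Hermitian forces $H+H^\rmT = 2\,\mathrm{Re}(H)$ to be real symmetric, every $A$ admitting the stated decomposition has $C^L_A$ symmetric.

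For $1\imply 3$ I would first record an identity valid for every basis. Writing $M_{jk}=\tr(L_jL_k)$ for the (real, symmetric, positive-definite) Gram matrix, a short computation gives $(C^L_A M)_{km} = \tr(L_m\{A,L_k\}) = \tr(A\{L_k,L_m\})$, which is manifestly symmetric in $k,m$. Hence $C^L_A M = M\,(C^L_A)^\rmT$ for all $A$, and since $M$ is invertible this shows that statement~1 is equivalent to $[C^L_A,M]=0$ for every Hermitian $A$. Transposing, the latter says that the matrix $(C^L_A)^\rmT$ of the superoperator $f_A$ commutes with the matrix $M$ of $G:=\sum_j\douter{L_j}{L_j}$, i.e. that $G$ commutes with every $f_A$. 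By \lref{lem:JordanIrreducible} the family $\{f_A\colon A\in\mathfrak{j}_{\mrm{H}}(d)\}$ acts irreducibly, so Schur's lemma forces $G=\ell^2\mathbf{I}$; positive-definiteness of $G$ (it is the frame superoperator of a basis) gives $\ell^2>0$. In matrix form $M=\ell^2 I$, so $\{L_j/\ell\}$ is orthonormal, which is statement~3. (The converse $3\imply 1$ is the trivial remark that $M=\ell^2 I$ collapses $C^L_A M = M(C^L_A)^\rmT$ to $C^L_A=(C^L_A)^\rmT$.)

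For $3\imply 2$ I would build the decomposition at the superoperator level and then read it off in the basis $L$. Since $C^L_A$ is scale-invariant we may assume $\{L_j\}$ is itself a Hermitian orthonormal basis. The key preliminary is to identify matrix transposition in such a basis with a superoperator operation: one checks that relative to any Hermitian orthonormal basis the superoperator transpose acts by $(\douter{E_{rs}}{E_{rs}})^\rmT=\douter{E_{sr}}{E_{sr}}$, where $E_{rs}=\outer{e_r}{e_s}$ and $\{\ket{e_r}\}$ is an eigenbasis of $A$ with eigenvalues $\lambda_r$. Since $f_A=\sum_{r,s}(\lambda_r+\lambda_s)\douter{E_{rs}}{E_{rs}}$, I would set
\[
\widehat S=\sum_r\lambda_r\,\douter{E_{rr}}{E_{rr}},\qquad \widehat H=\sum_{r<s}(\lambda_r+\lambda_s)\,\douter{E_{rs}}{E_{rs}},
\]
so that $f_A=2\widehat S+\widehat H+\widehat H^\rmT$. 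Here $\widehat S$ is self-adjoint with nonzero spectrum equal to that of $A$, $\widehat H$ is self-adjoint, and $\widehat S,\widehat H,\widehat H^\rmT$ have pairwise disjoint supports among the orthonormal $\{\douter{E_{rs}}{E_{rs}}\}$, which yields both the mutual orthogonality and the orthogonality of $\widehat H$ to $\widehat H^\rmT$. Passing to matrices in $L$ gives $C^L_A=2S+H+H^\rmT$ with $S$ real symmetric and $H$ Hermitian of the required type. For the addendum, taking $A$ a rank-$1$ projector means $\lambda_1=1,\ \lambda_2=\cdots=\lambda_d=0$, whence $\widehat S=\douter{E_{11}}{E_{11}}$ is a rank-$1$ projector and $\widehat H=\sum_{s>1}\douter{E_{1s}}{E_{1s}}$ a rank-$(d-1)$ projector, so $S$ and $H$ are projectors of ranks $1$ and $d-1$.

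I expect the main obstacle to be the $3\imply 2$ construction—specifically, establishing that matrix transposition in the basis $L$ corresponds to the $r\leftrightarrow s$ swap at the superoperator level, and verifying that this correspondence is independent of the choice of Hermitian orthonormal basis (the change of basis between two such bases being real orthogonal). Once that correspondence is in place, the symmetry, Hermiticity, spectral, orthogonality, and projector properties all follow by routine bookkeeping, and the $1\equi 3$ step is clean given \lref{lem:JordanIrreducible} and Schur's lemma.
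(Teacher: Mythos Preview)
Your proposal is correct and follows essentially the same route as the paper: the equivalence $1\equi 3$ via the identity $C^L_A M = M(C^L_A)^\rmT$ together with \lref{lem:JordanIrreducible} and Schur, and the explicit $3\imply 2$ construction from the spectral decomposition of $A$ (the paper writes the matrices $S,H$ directly and then notes they are transposes of the matrix representations of exactly your superoperators $\widehat S,\widehat H$). Your anticipated obstacle---that matrix transposition in a Hermitian orthonormal basis sends $\douter{E_{rs}}{E_{rs}}$ to $\douter{E_{sr}}{E_{sr}}$---is the one nontrivial bookkeeping point, and it checks out by a direct computation using the reality of $\tr(L_j E_{rr})$ and $\overline{\tr(L_j E_{rs})}=\tr(E_{sr}L_j)$.
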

\begin{remark}
Note that the first statement is equivalent to the statement that the structure matrices are symmetric, which in turn is equivalent to the statement that the structure constants are completely symmetric.  The second statement is equivalent to the requirement that the structure matrices $C^L_j = C^L_{L_j}$ have the stated form. Here "orthogonal" means having orthogonal support. The lemma may thus be regarded as an analogue of \lref{lem:structureconstants} for the Lie algebraic case.
\end{remark}
\begin{proof}
The implication $2\imply1$ is immediate. To prove the equivalence $1\equi 3$,
note that statement 1 is equivalent to the statement that $C^L_j$ are symmetric for all $j$ since $A$ is a linear combination of the basis elements $L_j$.
Define
\begin{equation}
D_{jkl}=\tr(\{L_j,L_k\}L_l),\quad (D_j)_{kl}=D_{jkl},\quad M_{jk}=\tr(L_jL_k).
\end{equation}
 Then $D_j$ and $M$ are real symmetric matrices satisfying
$D_j=C^L_jM=M(C^L_j)^\rmT $. In addition,  $M$ is positive definite since $\{L_j\}$ is a basis.
Therefore, $M$ commutes with $C^L_j$  if and only if $C^L_j$ is symmetric.
It follows from  \lref{lem:JordanIrreducible} and an analog of Schur's lemma that $M$ commutes with all the $C^L_j$ if and only if it is proportional to the identity.   Consequently,   the $C^L_j$ are all symmetric if and only if  $\{L_j/\ell\}$ is an orthonormal basis for some positive constant $\ell$. The equivalence $1\equi 3$ follows.

It remains to prove the implication $3\imply 2$.  When $\{L_j/\ell\}$ is an orthonormal basis,
\begin{equation}\label{eq:OrthonormalCaseCA}
(C^L_A)_{kl}=\frac{1}{\ell^2}\tr(L_l f_A(L_k)),
\end{equation}
from which it follows immediately that $C^L_A$ is a real symmetric matrix.
Suppose $A$ has the spectral decomposition $A=\sum_{r} \lambda_r \outer{e_r}{e_r}$. Define  $E_{rs} = |e_r\rangle \langle e_s|$. Then  \eref{eq:OrthonormalCaseCA}  reads in superoperator notation
\begin{equation}
(C^L_A)_{kl} = \frac{1}{\ell^2} \sum_{r,s} (\lambda_r+\lambda_s) \dinner{L_l}{E_{rs}} \dinner{E_{rs}}{L_k}.
\end{equation}
Define
\begin{align}
S_{kl}&= \frac{1}{\ell^2} \sum_{r} \lambda_r \dinner{L_l}{E_{rr}} \dinner{E_{rr}}{L_k},\\
H_{kl} &= \frac{1}{\ell^2} \sum_{r<s} (\lambda_r+\lambda_s) \dinner{L_l}{E_{rs}} \dinner{E_{rs}}{L_k}.
\\
\intertext{We have}
(H^\rmT )_{kl} &= \frac{1}{\ell^2} \sum_{r>s} (\lambda_r+\lambda_s) \dinner{L_l}{E_{rs}} \dinner{E_{rs}}{L_k},
\end{align}
and consequently,
\begin{equation}
C^L_A = 2 S^{\vphantom{T}}+ H^{\vphantom{T}} + H^\rmT .
\end{equation}
Observe that $S$ and $H$ are the transposes of the matrix representations of the superoperators $\sum_{r} \lambda_r \douter{E_{rr}}{E_{rr}}$ and  $\sum_{r<s} (\lambda_r+\lambda_s) \douter{E_{rs}}{E_{rs}}$  with respect to the orthonormal basis $\{L_j/\ell \}$. We conclude that $S$ is real symmetric and $H$ Hermitian; $S$, $H$ and $H^\rmT$ are mutually orthogonal. Moreover, $S$ and $H$ have nonzero spectrum
\begin{equation}
\{\lambda_r: \lambda_r\neq0, \;1\leq r\leq d\},  \quad \{\lambda_r+\lambda_s: \lambda_r+\lambda_s\neq0,\; 1\leq r<s\leq d\},
\end{equation}
respectively. In particular, $S$ has the same nonzero spectrum as $A$.

If, in addition, $A$ is a rank-1 projector, then $H$ and $S$ defined above are  projectors with ranks $d-1$ and 1, respectively, which completes the proof.
\end{proof}

We are now ready to prove our main results.
\begin{proof}[Proof of \thref{thm:JordanSIC}]
To prove the implication $1\imply 2$, let $\{\Pi_j\}$ be a SIC and  $L_j$ the basis defined  by \eref{eq:JordAlgMain1LDef}. Then $L_j/\ell$ with $\ell=c\sqrt{d/(d+1)}$ form an orthonormal basis. So the $C^L_j$  are real symmetric according to \lref{lem:JordanStructure}.  Since each $L_j$ is a linear combination of a rank-1 projector and the identity, each $C^L_j$ is a linear combination of a real symmetric matrix of rank $2d-1$ and the identity by \lref{lem:JordanRank}.

To prove the implication $2 \imply 1$, let $\{L_j\}$ be a basis that satisfies the requirement in statement~2. Then it follows from  \lref{lem:JordanStructure} that $\{L_j/\ell\}$ is an orthonormal basis for some positive constant $\ell$.
In addition, there exist  real constants $a_j$ such that each $C_{L_j+a_j}$ has rank $2d-1$. According to \lref{lem:JordanRank}, $L_j+a_j$ has rank 1  or, in the case $d=3$,  has spectrum of the form $\{\lambda_j, -\lambda_j, -\lambda_j\}$. In any case, each $L_j$ is a linear combination of a rank-1 projector and the identity.
According to   \thref{thm:SIC} with $\alpha=\ell^2$ and $\beta=0$, there exists a SIC $\{\Pi_j\}$ such that  $L_j$ have the form specified in \eref{eq:JordAlgMain1LDef} (note that this  last step is not valid when $d=2$, which is why  the theorem only holds for $d>2$).
\end{proof}

\begin{proof}[Proof of \thref{thm:JordanSIC2}]  To prove the implication $1\imply 2$ let $\{\Pi_j\}$ be a SIC and  $\{L_j\}$ the basis  in \eref{eq:JordanSICBasis}. Then  $\{L_j/\ell\}$ is an orthonormal basis with $\ell = \sqrt{d/(d+1)}$. Define $\tilde{C}^L_j$ as the transpose of the matrix representation of $f_{\Pi_j}$ with respect to the basis $\{L_j\}$; then $C^L_j=(\tilde{C}^L_j-2a)$. By \lref{lem:JordanStructure} and the fact that $\Pi_j$ is a rank-1 projector, we find $\tilde{C}^L_j=Q_j+Q_j^\rmT+2P_j$, where $P_j$ and $Q_j$ satisfy the requirement of statement~2 in the theorem. The implication $1\imply 2$ follows.

It remains to prove the implication $2\imply 1$. When $d>2$,  according to \thref{thm:JordanSIC}, there exists a SIC $\{\Pi_j\}$ such that the basis satisfying statement~2 has the form $L_j=c\epsilon \epsilon_j (\Pi_j-a)$,
where $c$, $\epsilon$, $\epsilon_j$ and $a$ are as specified in \thref{thm:JordanSIC}. By essentially the same argument that leads to the implication $1\imply 2$ we find that the $C^L_j$ can be written  as
\begin{equation}
C^L_j=c\epsilon \epsilon_j (Q_j^\prime+{Q_j^\prime}^\rmT + 2 P_j^\prime-2a),
\end{equation}
 where $P_j^\prime$ and $Q_j^\prime$ have the same properties as $P_j$ and $Q_j$. Inspection of the spectrum of $C^L_j$ shows that this equality and the assumption $C^L_j= (Q_j+Q_j^\rmT+ 2 P_j -2a_j)$ can be satisfied simultaneously if and only $\epsilon \epsilon_jc =1$ and $a_j=a$. Therefore, $L_j$ have the form specified in \eref{eq:JordanSICBasis}.

When  $d=2$, the symmetry of the structure matrices  implies that $C^L_A$ is symmetric for all Hermitian operators $A$, and consequently that $\{L_j/\ell\}$ is an orthonormal basis for some positive constant $\ell$ by \lref{lem:JordanStructure}.  So  the $L_j$ satisfy the three equivalent  \esref{eq:Ginner}, \eqref{eq:G2design} and~\eqref{eq:GframeSuper}   of \thref{thm:GeoComInf} with $\alpha=\ell^2$ and $\beta=0$.    Let $\lambda_{j,1}\geq \lambda_{j,2}$ be the eigenvalues of  $L_j$.  Given that  $C^L_j$ is the transpose of the matrix form of $f_{L_j}$ relative to the basis $L$, the eigenvalues of $C^L_j$ are $2\lambda_{j,1}, \lambda_{j,1}+\lambda_{j,2}, \lambda_{j,1}+\lambda_{j,2}, 2\lambda_{j,2}$ in nonincreasing order.   From the assumption $C^L_j= (2P_j+Q_j+Q_j^\rmT-2a_j)$ we deduce
 \begin{align}
 \lambda_{j,1} &= 1-a_j , & \lambda_{j,2} &= -a_j,
  \end{align}
which implies  $2\tr(L_j^2)-[\tr(L_j)]^2=( \lambda_{j,1}- \lambda_{j,2})^2=1$, so that statement 4 in \crref{cor:GeoComInf} holds. According to \thref{thm:SIC} with $\alpha=\ell^2$ and $\beta=0$,  there exists a SIC $\{\Pi_j\}$ such that $L_j=c\epsilon \epsilon_j (\Pi_j-a)$,
where   $\epsilon$, $\epsilon_j$ are signs, $c$ is a real constant, and $a$ is given by \eref{eq:JordAlgMain1LDef}.   By the same argument as in the case $d>2$,  we find $c\epsilon \epsilon_j =1$ and $a_j=a$. Again, $L_j$ have the form specified in \eref{eq:JordanSICBasis}.
\end{proof}

\section{\label{sec:summary}Summary}
We have explored various   group theoretic and algebraic characterizations of the SIC existence problem based on a unified framework. In particular, we proved the equivalence of the following statements:
\begin{enumerate}
\item[1.\hphantom{a}] The existence of a SIC in dimension~$d$.

\item[2a.] The existence of a stochastic subgroup of $\mrm{O}(d^2)$ that is isomorphic to the projective unitary group $\mrm{PU}(d).$

\item[2b.] The existence of an adjoint matrix representation  of the unitary group $\mrm{U}(d)$ such that all matrix elements are bounded from below by $-1/d$.

\item[3.\hphantom{a}] The existence of a basis for the Lie algebra $\mathfrak{u}(d)$ such that each  structure matrix is Hermitian with rank $2(d-1)$.

\item[4.\hphantom{a}]  The existence of a  basis for the Jordan algebra $\mathfrak{j}_{\mrm{H}}(d)$ such that each structure matrix is a  linear combination of a rank-$(2d-1)$ real symmetric matrix and the identity matrix.

\end{enumerate}
In conjunction with well-known geometric, combinatoric, and information theoretic characterizations, these new characterizations not only enrich the  meanings and implications of SICs, but also point to new directions for attacking the SIC existence problem.   Besides, our discovery may prove to be valuable to studying  the unitary group, Lie algebra, and Jordan algebra.

Our study further demonstrates that the SIC existence problem is not an isolated problem, not just a geometric curiosity, but has deep consequences, which are pertinent to a wide range of research fields. We hope our work will stimulate more interest and progress on this topic.

\section*{Acknowledgements}
We thank Ingemar Bengtsson, Lin Chen, and Gelo Noel Tabia for discussions and Blake Stacey for comments. H.Z. also thanks Dragomir \v{Z} {\DJ}okovi\'{c} for discussions and  Aakumadula for answering a question on the representations of the projective unitary group posed on  MathOverflow. This research was supported in part by Perimeter Institute for Theoretical Physics. Research at Perimeter Institute is supported by the Government of Canada through Industry Canada and by the Province of Ontario through the Ministry of Research and Innovation. D.M.A was supported by the IARPA MQCO program, by the ARC via EQuS project number CE11001013, by the US Army Research Office grant numbers W911NF-14-1-0098 and W911NF-14-1-0103, by the U.S. Office of Naval Research (Grant No. N00014-09-1-0247), and by the John Templeton Foundation.

\appendix

\numberwithin{equation}{section}

\section{Technical details}
\subsection{Proof of \crref{cor:GeoComInf}}
\begin{proof}
When $\beta\neq 0$, the equivalence of statements \ref{it:1} to  \ref{it:5}  is an immediate consequence of \esref{eq:Ginner}, \eqref{eq:LjSumA} and \eqref{eq:LjSumB}, note that $\gamma<1/d$ according to \eref{eq:gamma}.

The implication $\ref{it:5}\imply \ref{it:6}$ is trivial. Conversely, if statement~\ref{it:6} holds, then
\begin{equation}
\frac{\gamma^2[\tr(L_j)]^2[\tr(L_k)]^2}{\{\alpha+\gamma[\tr(L_j)]^2\} \{\alpha+\gamma[\tr(L_k)]^2\} }=\frac{\gamma^2[\tr(L_j)]^2[\tr(L_m)]^2}{\{\alpha+\gamma[\tr(L_j)]^2\} \{\alpha+\gamma[\tr(L_m)]^2\}  }
\end{equation}
whenever $j,k,m$ are distinct. Therefore,
\begin{equation}
\frac{[\tr(L_k)]^2}{ \{\alpha+\gamma[\tr(L_k)]^2\} }=\frac{[\tr(L_m)]^2}{ \{\alpha+\gamma[\tr(L_m)]^2\} }
\end{equation}
for all $k,m$, which implies that $|\tr (L_j)|$ is independent of $j$. As a consequence, statements \ref{it:1} to \ref{it:6} are equivalent.

The implication $\ref{it:1}\imply  \ref{it:7}$ follows from \eref{eq:LjLj}.  To show that $\ref{it:7}  \imply  \ref{it:8}$ observe that if statement~\ref{it:7} holds, then \eref{eq:Ginner} implies that
\begin{equation}
\tr(\epsilon_jL_j\epsilon_kL_k)=\alpha\delta_{jk}+\gamma\tr(\epsilon_jL_j)\tr(\epsilon_kL_k).
 \end{equation}
Summing over $j,k$ and letting $A=\sum_j \epsilon_jL_j$ yields
 \begin{equation}
 \tr(A^2)=d^2\alpha+\gamma [\tr(A)]^2.
 \end{equation}
 Suppose $A=\kappa$; then $\kappa>0$ and $d\kappa^2=d^2\alpha+d^2\gamma\kappa^2$, which implies
\begin{equation}
\kappa=\sqrt{\frac{d\alpha}{1-d\gamma}}=\sqrt{d(\alpha+d\beta)}.
\end{equation}
As a consequence, $\sum_j |\tr(L_j)|=\sum_j \epsilon_j\tr(L_j)=d\kappa=d\sqrt{d\alpha+d^2\beta}$, thereby verifying the implication $\ref{it:7}\imply  \ref{it:8}$.

Finally to show that $\ref{it:8} \imply  \ref{it:1}$, note that
\begin{equation}
 \sum_j [\tr(L_j)]^2=d\alpha+d^2\beta
\end{equation}
according to \eref{eq:LjSumA}. So $\sum_j |\tr(L_j)|\leq d\sqrt{d\alpha+d^2\beta}$, and the upper bound is saturated if and only if $|\tr(L_j)|=\sqrt{d\alpha+d^2\beta}/d$ for all $j$, that is, the value of $|\tr(L_j)|$ is independent of~$j$.

The first two equations in \eref{eq:Lj} follow from \esref{eq:LjSumA}  and \eqref{eq:LjSumB} together with statements \ref{it:1} and \ref{it:2}; the third equation  follows from statements \ref{it:7} and \ref{it:8}.

When $\beta=0$ statement  \ref{it:2} is automatic.   The equivalence of statements \ref{it:1}, \ref{it:3} and \ref{it:4} is immediate.
 The equivalence of statements \ref{it:1}, \ref{it:7} and \ref{it:8} follows the same reasoning as in the case $\beta\neq0$.  The equivalence of statements \ref{it:5} and \ref{it:6} is also immediate.  If  the value of $|\tr(L_j)|$ is independent of $j$, then $L_j$ cannot be traceless since they form a basis. Therefore, statements \ref{it:1}, \ref{it:3}, \ref{it:4}, \ref{it:7} and \ref{it:8} imply  statements \ref{it:5} and \ref{it:6}.
\end{proof}

\subsection{Proof of \crref{cor:RegularSimplexGen}}

\begin{proof}
The Gram matrix $\tr(L_j L_k)$ has two distinct eigenvalues $\alpha$ and $\alpha+ d^2\zeta$.  The fact that  $\{L_j\}$ is a basis means that the Gram matrix must be positive definite, implying  $\alpha>0$ and $\zeta > - \alpha /d^2$.

When  $\sum_j L_j $ is proportional to the identity, say $\sum_j L_j =\eta $,  summing over $k$ in \eref{eq:RegularSimplex} yields $\eta\tr(L_j)=\alpha+d^2\zeta$. Therefore, $\eta\neq 0$ and $\tr(L_j)=(\alpha+d^2\zeta)/\eta$ is independent of $j$.

If the value of $\tr(L_j)$ is independent of $j$ and is equal to $\ell$, then $\ell$ cannot be zero since $\{L_j\}$ is a basis. Now the  equation
\begin{equation}
\tr\biggl[\biggl(\sum_j L_j\biggr)L_k\biggr]=\sum_j(\alpha\delta_{jk}+\zeta)=\alpha+d^2\zeta=\frac{\alpha+d^2\zeta}{\ell}\tr(L_k)
\end{equation}
implies that $\sum_j L_j=(\alpha+d^2\zeta)/\ell$ is proportional to the identity. Taking the trace of this equation yields $d^2\ell=d(\alpha+d^2\zeta)/\ell$. Therefore,
$|\ell |=\sqrt{(d\alpha+d^3\zeta)}/d$, from which statement~3 and \eref{eq:RegularSimplexLj} follow immediately.

To show the implication $3\imply  1$, define $A=\sum_j L_j$, then \eref{eq:RegularSimplex} implies the equality $\tr(A^2)=d^2\alpha+d^4\zeta$.
So $|\tr(A)|\leq d\sqrt{d\alpha+d^3\zeta}$ and the inequality is saturated if and only if $A$ is proportional to the identity.

Next, suppose in addition $\zeta\neq 0$. If any of  statements 1, 2, 3 holds, then $\tr(L_j)$ is nonzero and independent of $j$, so  \eref{eq:RegularSimplex} implies  \eref{eq:Ginner} with $\gamma=\zeta/[\tr(L_j)]^2=d\zeta/(\alpha+d^2\zeta)$. Conversely, if  \eref{eq:Ginner} holds with $\gamma=d\zeta/(\alpha+d^2\zeta)$, then \eref{eq:RegularSimplex} implies that the value of $\tr(L_j)$ is independent of $j$,   from which statements 1, 2, 3 follow.
\end{proof}

\subsection{Proof of \thref{thm:SIC}}
\begin{proof}Suppose  $L_j = a_j \Pi_j + b_j$ with $\Pi_j$ rank-1 projectors and $a_j, b_j$ real constants.
We have
\begin{align}
&(\beta+\alpha) P_{\mrm{s}}+(\beta-\alpha) P_{\mrm{a}}=\sum_j L_j\otimes L_j\nonumber\\
&=\sum_{j} a_j^2\Pi_j\otimes \Pi_j +A\otimes 1 +1 \otimes  A+\sum_j b_j^2 1 \otimes 1,
\end{align}
where $A=\sum_j a_j b_j \Pi_j$. As a consequence,
\begin{equation}
(\beta-\alpha)P_\mrm{a}=P_\mrm{a}\biggl(\sum_j L_j\otimes L_j\biggr)P_\mrm{a}=P_\mrm{a}(A\otimes 1 +1 \otimes  A)P_\mrm{a}+\sum_j b_j^2 P_\mrm{a}.
\end{equation}
When $d\geq3$, it is not hard to show that this equality holds  if and only if $A=(\beta-\alpha-\sum_j b_j^2)/2$ (the conclusion is not valid when $d=2$ since  the range of $P_\mrm{a}$ only has dimension one, which is why the case $d=2$ has to be handled separately). In that case, $\sum_j a_j^2\Pi_j\otimes \Pi_j=2\alpha P_{\mrm{s}}$,
so  $a_j^2=\alpha(d+1)/d$ and $\{\Pi_j\}$ is a SIC according to \thref{thm:minimal2design}. In particular, $\{\Pi_j\}$ is a basis for $\mathcal{B}(\mathcal{H})$ and $\sum_j\Pi_j=d $.
Now the equality
\begin{equation}
\sum_j a_j b_j\Pi_j=A=\frac{1}{2}\biggl(\beta-\alpha-\sum_j b_j^2\biggr)
\end{equation}
implies that the values of $a_jb_j$ and $b_j^2$ are independent of $j$, which further implies that
\begin{equation}
da_jb_j=\frac{1}{2}(\beta-\alpha-d^2b_j^2),
\end{equation}
\Eref{eq:ajbj} now follows given that $a_j^2 = \alpha(d+1)d$. The theorem holds when $d\geq3$.

In the case  $d=2$, if  $\{L_j\}$ is a basis and any of the two additional assumptions detailed in the theorem holds, then the same conclusion as in the case $d>3$ holds. Since decompositions $L_j = a_j \Pi_j + b_j$ are generally not unique when $d=2$, however, the $\Pi_j$ introduced at the beginning of the proof do not necessarily form a SIC.  To resolve this problem, we need to take a slightly different approach so that these decompositions are chosen consistently.

According to  \thref{thm:GeoComInf} we have
\begin{align}\label{eq:LjLkd2}
\tr(L_jL_k)=\alpha\delta_{jk}+\frac{\beta}{\alpha+2\beta}\tr(L_j)\tr(L_k).
\end{align}
According to  \crref{cor:GeoComInf},
\begin{equation}
\tr(\epsilon_jL_j)=\sqrt{\frac{\alpha+2\beta}{2}},\quad \tr(L_j^2)=\frac{2\alpha+\beta}{2},
\end{equation}
and  $\{\epsilon_j L_j\}$ forms a regular simplex ($\epsilon_j$ being the sign of $\tr(L_j)$).
The two eigenvalues of $\epsilon_jL_j$ are given by
\begin{equation}
\lambda^{\pm}=\frac{1}{2}\biggl(\sqrt{\frac{\alpha+2\beta}{2}}\pm\sqrt{\frac{3\alpha}{2}}\biggr),
\end{equation}
along with  the eigen-projectors
\begin{equation}\label{eq:Pijd2}
\Pi_j^{\pm}=\pm\frac{\epsilon_jL_j-\lambda^{\mp}}{\lambda^+-\lambda^-}.
\end{equation}
Since $\{\epsilon_j L_j\}$ forms a regular simplex, $\{\Pi_j^+\}$ and $\{\Pi_j^-\}$ form two sets of equiangular lines and thus two SICs according to \thref{thm:MaximalEAL} (as  can also be verified by using   \eref{eq:LjLkd2}). Consequently,
\begin{equation}
L_j=\epsilon_j\biggl[\sqrt{\frac{3\alpha}{2}} \Pi_j^{+} -\frac{1}{2}\biggl(\sqrt{\frac{3\alpha}{2}}-\sqrt{\frac{\alpha+2\beta}{2}}\biggr)\biggr]
=-\epsilon_j\biggl[\sqrt{\frac{3\alpha}{2}} \Pi_j^{-} -\frac{1}{2}\biggl(\sqrt{\frac{3\alpha}{2}}+\sqrt{\frac{\alpha+2\beta}{2}}\biggr)\biggr].
\end{equation}
Both  decompositions of $L_j$ have the form specified in the theorem, which completes the proof.
\end{proof}

\subsection{Proof of \lref{lem:inverseProdGen}}
\begin{proof}
The claim is trivial when $d=2$. When $d\geq3$ we first prove the result for the special case $r=s=1$, and then use that  to prove it for the general case.

Permuting the components of $\lambda$ does not change the value of $\lambda^{\uparrow}\cdot \lambda^{\downarrow}$.  There is therefore no loss of generality in assuming that the components of $\lambda$ are arranged in decreasing order to begin with, so that $\lambda^{\downarrow} = \lambda$. Let $\sigma$ be any permutation of the integers $1$ to $d$ and let $\lambda^{\sigma}$ be the vector with components $\lambda^{\sigma}_j = \lambda_{\sigma(j)}$.  We claim that
\begin{equation} \label{eq:lambdaIneq}
\lambda\cdot \lambda^{\sigma} \ge \lambda \cdot \lambda^{\uparrow} = \lambda^{\downarrow}\cdot \lambda^{\uparrow}.
\end{equation}
To see this observe that we can bring $\lambda^{\sigma}$ into  increasing order by  the following iterative  procedure.  Transpose any pair of adjacent components $\lambda^{\sigma}_j, \lambda^{\sigma}_{j+1}$ for which $\lambda^{\sigma}_j > \lambda^{\sigma}_{j+1}$ and  denote by $\lambda^{\sigma'}$ the vector which results. We have
\begin{equation}
\lambda \cdot \lambda^{\sigma} - \lambda\cdot \lambda^{\sigma'} = (\lambda_j - \lambda_{j+1})(\lambda^{\sigma}_{j} - \lambda^{\sigma}_{j+1}) \ge 0.
\end{equation}
\Eref{eq:lambdaIneq} then follows from successive applications of the above procedure.

Now let $P_{\mrm{f}}$ be the set of free permutations of the integers $1$ to $d$ (\emph{i.e.}\ the set of permutations having no fixed points).  We have
\begin{equation} \label{eq:freePermId}
\sum_{\sigma \in P_{\mrm{f}}} \lambda \cdot \lambda^{\sigma}= c \sum_{i\neq j} \lambda_i \lambda_j=c(r^2-s)=0,
\end{equation}
where $c$ is a positive integer whose specific value is irrelevant to us. Here the last equality follows from the assumption $r=s=1$.

Inequality \eqref{eq:lambdaIneq} and \eref{eq:freePermId} together imply $
\lambda^{\uparrow}\cdot\lambda^{\downarrow} \le 0$,
which establishes the first statement of the lemma for the case $r=s=1$.  To establish the second statement observe that if the inequality is saturated we must have $
\lambda \cdot \lambda^{\sigma} =0$
for all $\sigma \in P_{\mrm{f}}$.  Let $\sigma$ be the permutation $(1,\dots, d)$ and  $\sigma'$  the permutation $(1,d)(2,\dots, d-1)$ (where, as usual, $(j_1, \dots , j_n)$ denotes the cyclic permutation $j_1 \to j_2 \to \dots \to j_n \to j_1$).  Then
\begin{equation}
(\lambda_1-\lambda_{d-1})(\lambda_2-\lambda_d)  = \lambda \cdot \lambda^{\sigma} - \lambda \cdot \lambda^{\sigma'} = 0.
\end{equation}
So either $\lambda_1 =\lambda_{d-1}$ or $\lambda_2 = \lambda_d$.  Taking into account of the fact that $\lambda$ is in decreasing order  we deduce that $d-1$ components of $\lambda$ are identical.  Together with the assumption $\sum_j\lambda_j=\sum_j\lambda_j^2=1$, this observation implies that
\begin{align}
\lambda^{\downarrow} &= \lambda =(1,0,\dots, 0) \quad \text{or} \quad  \left( \frac{2}{d}, \dots , \frac{2}{d} ,\frac{2}{d} -1\right),
\end{align}
which is identical with \eref{eq:lambdaOptimal} in the case $r=s=1$.
Conversely, it is straightforward to verify the equality $\lambda^{\uparrow}\cdot\lambda^{\downarrow} = 0$
when  $d-1$ components of $\lambda$ are identical.

Let us now relax the assumption that $r=s=1$.  If $r^2 = ds$ it follows from the Cauchy inequality as applied to the vectors $\lambda$ and $(1,\dots, 1)$ that $\lambda = (r, \dots, r)/d$. The claim is then immediate.
 Otherwise, define
$\lambda'_j: = \eta \lambda_j + \xi$,
where
\begin{align}
\eta &= \sqrt{\frac{d-1}{ds-r^2}},  & \xi &= \frac{1-r\eta}{d}.
\end{align}
Then we have $\sum_j \lambda'_j =  \sum_j {\lambda'}^2_j = 1$.
The claim is now a direct consequence  of the result already proved.
\end{proof}

\subsection[Proof  of \lref{lm:AdjRepPud}]{Proof  of \lref{lm:AdjRepPud}\footnote{We are grateful to Aakumadula for suggesting this proof, in answer to a question posed by H.Z. on MathOverflow.}}
\begin{proof}
Since $\mrm{PU}(d) = \mrm{PSU}(d)$ (where $\mrm{PSU}(d)$ is $\mrm{SU}(d)$ modulo its  center),  we can focus on the representations of $\mrm{SU}(d)$.
The irreducible representations of $\mrm{SU}(d)$ are labeled by partitions  of the form $\lambda=[\lambda_1,\lambda_2,\ldots,\lambda_d]$ where $\lambda_j$ are integers such that $\lambda_1 \ge \dots \ge \lambda_d = 0$ (see, for example,  \rcite{FultH91}).  The irreducible representations of $\mrm{PSU}(d)$  are precisely those representations of $\mrm{SU}(d)$ which are trivial on the center, namely, the representations for which $\sum_j \lambda_j = 0 \mod d$.  The dimension of the representation labeled by $\lambda$ is denoted by $D_{\lambda}$ and given by the Weyl dimension formula
\begin{equation}\label{eq:WeylDForm}
D_\lambda=\prod_{1\leq j<k\leq d}\frac{\lambda_j-\lambda_k+k-j}{k-j}.
\end{equation}
We will prove the lemma by explicitly enumerating the representations for which $D_{\lambda}\le d^2-1$ and showing that, aside from the trivial representation, the adjoint representation is the only one for which $\sum_j \lambda_j = 0 \mod d$.

In the following it will be convenient to identify the partition $[\lambda_1,\dots, \lambda_d]$ with $[\lambda_1, \dots, \lambda_r]$ when $\lambda_{r+1} = \dots = \lambda_d=0$ (since $\lambda_d=0$ this means we can always write $[\lambda_1,\dots,\lambda_{d-1}]$ in place of  $[\lambda_1,\dots,\lambda_d]$).   Denote by $\lambda^{\mrm{C}}$ the  conjugate partition, whose Young diagram is obtained from the Young diagram of $\lambda$ by interchanging the rows and columns.  Denote by $\lambda^{\mrm{D}}$ the dual, or contragredient partition $[\lambda_1-\lambda_d, \lambda_1-\lambda_{d-1},\dots, \lambda_1-\lambda_2,0]$.  Then  $D_{\lambda^{\mrm{D}}}=D_{\lambda^{\vphantom{\mrm{D}}}}$ for all $\lambda$ according to \eref{eq:WeylDForm}.

We begin by considering  partitions of the form $[m]^{\mathrm{C}}$, with $d>m>0$, for which the  Young diagrams have  one column. None of these partitions satisfy the requirement $\sum_j \lambda_j = 0 \mod d$.  However, for later use we need to consider for which  values of $m$ the dimension is bounded by $d^2-1$.  Using \eref{eq:WeylDForm} we find
\begin{equation}
D_{[m]^{\mathrm{C}}}=\binom{d}{m}.
\end{equation}
When $d\leq 7$, the inequality $D_{[m]^{\mathrm{C}}}\leq d^2-1$ is always satisfied; when $d=8$, it is satisfied as long as $m\neq 4$; when $d\geq9$, it is satisfied  if and only if $m$ takes on one of the four values $1,2, d-2,d-1$.

We next consider  partitions of the form $[m_1,m_2]^{\mathrm{C}}$, with $d>m_1 \ge m_2 > 0$, for which the Young diagrams have two columns.  Inspection of \eref{eq:WeylDForm} shows
\begin{align}
D_{[m_1]^{\mrm{C}}} &< D_{[m_1,m_2]^{\mrm{C}}},  & D_{[m_2]^{\mrm{C}}}  &< D_{[m_1,m_2]^{\mrm{C}}}.
\end{align}
When $d\ge 9$ we can use the results of the last paragraph to deduce that both $m_1$ and  $m_2$ are restricted to the four values $1,2, d-2,d-1$  if $D_{[m_1,m_2]^{\mrm{C}}} \le d^2-1$.  This  reduces the number of possible partitions to $10$.  When $d<9$ we cannot reduce the number of possibilities to be considered in this way, but since the search space is already small this does not matter.  A case-by-case examination reveals that the only two-column partitions for which $D_{\lambda}\leq d^2-1$ are $[1,1]^{\mathrm{C}}$, $[d-1,d-1]^{\mathrm{C}}$ and $[d-1,1]^{\mathrm{C}}$.  If $d>2$ then the only one of these three representations that satisfies the requirement $\sum_j \lambda_j = 0 \mod d$ is the adjoint representation $[d-1,1]^{\mathrm{C}}$.  The adjoint representation is also the only  one for which $D_{\lambda}=d^2-1$.  If $d=2$ the  three representations are identical, all coinciding with the adjoint representation, and satisfying the requirement $\sum_j \lambda_j = 0 \mod d$ and the equality $D_{\lambda}=d^2-1$.

We next consider three-column partitions $[m_1,m_2,m_3]^{\mrm{C}}$ with $d>m_1\ge m_2 \ge m_3 > 0$.  Inspection of \eref{eq:WeylDForm} shows
\begin{equation}
D_{[m_1,m_2]^{\mrm{C}}} < D_{[m_1,m_2,m_3]^{\mrm{C}}}.
\end{equation}
If $d=2$ the results of the last paragraph immediately imply that there are no three-column partitions for which $D_{\lambda}\le d^2-1$.  If $d>2$ then the only candidates are $[1,1,1]^{\mrm{C}}$ and $[d-1,d-1,m_3]^{\mrm{C}}$.  We find $D_{[1,1,1]^{\mrm{C}}} = d(d+1)(d+2)/6 > d^2-1$.  To exclude the other possibility observe
 that if the representation corresponding to $[d-1,d-1,m_3]^{\mrm{C}}$ satisfied the bound on the dimension, then so would the dual representation corresponding to
 $\left([d-1,d-1,m_3]^{\mrm{C}}\right)^{\mrm{D}} = [d-m_3,1,1]^{\mrm{C}}$.  Since $[d-m_3,1,1]^{\mrm{C}}$ is also a three-column partition, and since $D_{[1,1,1]^{\mrm{C}}} > d^2-1$, this would
 mean $[d-m_3,1,1]=[d-1,d-1,m'_3]$ for some $m'_3$.  But this contradicts the assumption $d>2$.  We conclude that, irrespective of the value of $d$, there are no
  three-column partitions for which $D_{\lambda}\leq d^2-1$. As a consequence, there are no
  $n$-column partitions with $n\geq3$ for which $D_{\lambda}\leq d^2-1$ since $D_{[m_1,m_2,m_3]^{\mrm{C}}} \leq D_{[m_1,\dots, m_n]^{\mrm{C}}}$ for any such partition $[m_1,\dots, m_n]^{\mrm{C}}$.
 \end{proof}

\bibliographystyle{ieeetr}

\bibliography{all_references}

\end{document}